\title{Gradual Program Analysis for Null Pointers}
\author
  {Sam Estep}
  {Carnegie Mellon University, Pittsburgh, PA, USA}
  {}{}{}
\author
  {Jenna Wise}
  {Carnegie Mellon University, Pittsburgh, PA, USA}
  {}{}{}
\author
  {Jonathan Aldrich}
  {Carnegie Mellon University, Pittsburgh, PA, USA}
  {}{}{}
\author
  {Éric Tanter}
  {Computer Science Department (DCC), University of Chile, Santiago, Chile}
  {}{}{FONDECYT Regular project 1190058}
\author
  {Johannes Bader}
  {Jane Street, New York, NY, USA}
  {}{}{}
\author
  {Joshua Sunshine}
  {Carnegie Mellon University, Pittsburgh, PA, USA}
  {}{}{}
\authorrunning{S. Estep, J. Wise, J. Aldrich, É. Tanter, J. Bader, and J. Sunshine}
\keywords{gradual typing, gradual verification, dataflow analysis}
\definecolor{javared}{rgb}{0.6,0,0} % for strings
\definecolor{javagreen}{rgb}{0.25,0.5,0.35} % comments
\definecolor{javapurple}{rgb}{0.5,0,0.35} % keywords
\definecolor{javadocblue}{rgb}{0.25,0.35,0.75} % javadoc
\definecolor{lightblue}{rgb}{0.68,0.84,0.90}
\tiny\color{black},
\newmdenv[
  hidealllines=true,
  innertopmargin=2ex,
  innerbottommargin=2ex,
  innerleftmargin=2em,
]{runningexample*}
\DeclareMathOperator{\dom}{dom}
\DeclarePairedDelimiter{\bb}{\llbracket}{\rrbracket}
\DeclarePairedDelimiterX{\fram}[2]{\langle}{\rangle}{#1, #2}
\DeclarePairedDelimiterX{\set}[2]{\{}{\}}{#1 : #2}
\DeclarePairedDelimiterX{\stat}[2]{\langle}{\rangle}{#1 \mathrel{\delimsize\|} #2}
\newcommand{\etal}{\emph{et al.}}
\newcommand{\ie}{\emph{i.e.}~}
\newcommand{\eg}{\emph{e.g.}~}
\newcommand*{\func}{\textsc}
\newcommand*{\fAnd}{\func{and}}
\newcommand*{\fBranch}{\func{branch}}
\newcommand*{\fConc}{\func{conc}}
\newcommand*{\fDescend}{\func{descend}}
\newcommand*{\fFlow}{\func{flow}}
\newcommand*{\fInst}{\func{inst}}
\newcommand*{\fKildall}{\func{Kildall}}
\newcommand*{\fNew}{\func{new}}
\newcommand*{\fOr}{\func{or}}
\newcommand*{\fProc}{\func{proc}}
\newcommand*{\fSafe}{\func{safe}}
\newcommand*{\predd}{\textsc}
\newcommand*{\pDesc}{\predd{desc}}
\newcommand*{\sett}{\textsc}
\newcommand*{\sAbst}{\sett{Abst}}
\newcommand*{\sAnn}{\sett{Ann}}
\newcommand*{\sArc}{\sett{Edge}}
\newcommand*{\sEnv}{\sett{Env}}
\newcommand*{\sFrame}{\sett{Frame}}
\newcommand*{\sInst}{\sett{Inst}}
\newcommand*{\sMap}{\sett{Map}}
\newcommand*{\sMem}{\sett{Mem}}
\newcommand*{\sProc}{\sett{Proc}}
\newcommand*{\sProg}{\sett{Prog}}
\newcommand*{\sStack}{\sett{Stack}}
\newcommand*{\sState}{\sett{State}}
\newcommand*{\sVal}{\sett{Val}}
\newcommand*{\sVar}{\sett{Var}}
\newcommand*{\sExpr}{\sett{Expr}}
\newcommand*{\sStmt}{\sett{Stmt}}
\newcommand*{\sField}{\sett{Field}}
\newcommand*{\sVert}{\sett{Vert}}
\newcommand*{\code}{\texttt}
\newcommand*{\vnull}{\code{null}}
\newcommand*{\stmtSkip}{\code{skip}}
\newcommand*{\stmtSeq}[2]{#1 ~;~ #2}
\newcommand{\stmtDecl}[2]{#1~#2}
\newcommand*{\stmtAssign}[2]{#1 ~\code{:=}~ #2}
\newcommand*{\stmtBranch}[1]{\code{branch}~#1}
\newcommand*{\stmtIf}[1]{\code{if}~#1}
\newcommand*{\stmtElse}[1]{\code{else}~#1}
\newcommand*{\stmtIfElse}[3]{\code{if}~(#1)~\{~#2~\}~\code{else}~\{~#3~\}}
\newcommand{\stmtWhile}[2]{\code{while}~(#1)~\{~#2~\}}
\newcommand*{\stmtCall}[3]{#1~\code{:=}~#2(#3)}
\newcommand*{\stmtProc}[2]{\code{proc}~#1(#2)}
\newcommand*{\exprCall}[2]{#1(#2)}
\newcommand*{\stmtReturn}[1]{\code{return}~#1}
\newcommand{\stmtFieldAssign}[3]{#1.#2~\code{:=}~#3}
\newcommand{\stmtMain}{\code{main}}
\newcommand{\exprNew}[1]{\code{new}(#1)}
\newcommand*{\cProc}{\code{proc}}
\newcommand*{\cMain}{\code{main}}
\newcommand*{\cError}{\code{error}}
\newcommand*{\ann}{\code}
\newcommand*{\aNa}{\ann{Nullable}}
\newcommand*{\aNu}{\ann{Null}}
\newcommand*{\aNN}{\ann{NonNull}}
\newcommand*{\Intersect}{\bigcap}
\newcommand*{\Nat}{\mathbb{N}}
\newcommand*{\arcsto}[3]{#2 \xrightarrow{#1} #3}
\newcommand*{\bigjoin}{\bigsqcup}
\newcommand*{\inst}[2][]{[#2]_{#1}}
\newcommand*{\intersect}{\cap}
\newcommand*{\join}{\sqcup}
\newcommand*{\nil}{\mathsf{nil}}
\newcommand*{\preciser}{\lesssim}
\newcommand*{\powerset}{\mathcal{P}}
\newcommand*{\pto}{\rightharpoonup}
\newcommand*{\qm}[1][]{#1\texttt{?}}
\newcommand*{\stepsto}{\longrightarrow}
\newcommand*{\strictunder}{\sqsubset}
\newcommand*{\under}{\sqsubseteq}
\newcommand*{\union}{\cup}
\newcommand*{\Coll}[1]{\wideparen{#1}}
\newcommand*{\Grad}[1]{\widetilde{#1}}
\newcommand*{\Gradjoin}{\mathbin{\Grad{\join}}}
\newcommand*{\Gradstepsto}{\mathrel{\Grad{\stepsto}}}
\newcommand*{\Gradunder}{\mathrel{\Grad{\under}}}
\begin{document}

\maketitle

\begin{abstract}
Static analysis tools typically address the problem of excessive false positives by requiring programmers to explicitly annotate their code. However, when faced with incomplete annotations, many analysis tools are either too conservative, yielding false positives, or too optimistic, resulting in unsound analysis results. In order to flexibly and soundly deal with partially-annotated programs, we propose to build upon and adapt the gradual typing approach to abstract-interpretation-based program analyses. Specifically, we focus on null-pointer analysis and demonstrate that a gradual null-pointer analysis hits a sweet spot, by gracefully applying static analysis where possible and relying on dynamic checks where necessary for soundness.
In addition to formalizing a gradual null-pointer analysis for a core imperative language, we build a prototype using the Infer static analysis framework, and
present preliminary evidence that the gradual null-pointer analysis reduces false positives compared to two existing null-pointer checkers for Infer.
Further, we discuss ways in which the gradualization approach used to derive the gradual analysis from its static counterpart can be extended to support more domains. This work thus provides a basis for future analysis tools that can smoothly navigate the tradeoff between human effort and run-time overhead to reduce the number of reported false positives.
\end{abstract}

\section{Introduction}

Static analysis is useful \cite{ayewah2010google}, but underused in practice because of false positives \cite{Johnson2013}. A commonly-used way to reduce false positives is through programmer-provided annotations~\cite{barnett2007annotations} that make programmers intent manifest. For example, Facebook's Infer Eradicate~\cite{FacebookInfer}, Uber's \textsc{NullAway}~\cite{banerjee2019nullaway}, and the Java Nullness Checker from the Checker Framework~\cite{papi2008practical} all rely on \code{@NonNull} and \code{@Nullable} annotations to statically find and report potential null-pointer exceptions in Java code. However, in practice, annotating code completely can be very costly~\cite{chalin2007non}---or even impossible, for instance, when relying on third-party libraries and APIs.
As a result, since non-null reference variables are used extensively in software~\cite{chalin2007non}, many tools assume missing annotations are \code{@NonNull}. But, the huge number of false positives produced by such an approach in practice is a serious burden. To address this pitfall, \textsc{NullAway} assumes that sinks (i.e. targets of assignments and bindings) are \code{@Nullable} and sources are \code{@NonNull}. Unfortunately, both strategies are unsound, and therefore programs deemed valid may still raise null pointer exceptions at run time.

This paper explores a novel approach to these issues by drawing on research in gradual typing~\cite{siek2006gradual,siek2015refined,garcia2016abstracting} and its recent adaptation to gradual verification~\cite{bader2018gradual, wise2020gradual}.
We propose gradual program analysis as a principled, sound, and practical way to handle missing annotations. As a first step in the research agenda of gradual program analysis, this article studies the case of a simple null-pointer analysis. We present a general formal framework to derive gradual program analyses by transforming static analyses based on abstract interpretation~\cite{cousot:popl1977}. Specifically, we study analyses that operate over first-order procedural imperative languages and support user-provided annotations. This setting matches the core language used by many tools, such as Infer.
In essence, a \textit{gradual analysis} treats missing annotations optimistically, but injects run-time checks to preserve soundness. Crucially, the static portion of a gradual analysis uses the same algorithmic architecture as the underlying static analysis.\footnote{Note that an alternative is phrasing nullness as a type system, which can also be gradualized~\cite{brotherston2017granullar,blameForNull}.  We focus on approaches based on static analysis, which have very different technical foundations and user experience.  We compare to type-based approaches in Section~\ref{sec:related}.}

Additionally, we ensure that any gradual analysis produced from our framework satisfies the \textit{gradual guarantees}, adapted from Siek \etal~\cite{siek2015refined} formulation for gradual typing. Any gradual analysis is also a \textit{conservative extension} of the base static analysis: when all annotations are provided, the gradual analysis is equivalent to the base static analysis, and no run-time checks are inserted.
Therefore, the gradual analysis smoothly trades off between static and dynamic checking, driven by the annotation effort developers are willing to invest.

To provide initial evidence of the applicability of gradual null-pointer analysis, we implement a gradual null-pointer analysis (GNPA) using Facebook's Infer analysis framework and report on preliminary experiments using the prototype.\footnote{\url{https://github.com/orgs/gradual-verification/packages/container/package/ecoop21}} The experiments show that a gradual null-pointer analysis can be effectively implemented, and used at scale to produce a small number of false positives in practice---fewer than Infer \textsc{Eradicate} as well as a more recent Infer checker, \textsc{NullSafe}. They also show that GNPA eliminates on average more than half of the null-pointer checks Java automatically inserts at run time. As a result, unlike other null-pointer analyses, GNPA can both prove the redundancy of run-time checks and reduce reported false positives.

The rest of the paper is organized as follows. In Section~\ref{sec:action}, we motivate gradual program analysis in the setting of null pointers by looking at how \textsc{Eradicate}, \textsc{NullSafe}, \textsc{NullAway}, and the Java Nullness Checker operate on example code with missing annotations, showcasing the concrete advantages of GNPA.
Section \ref{sec:language} formalizes PICL, a core imperative language similar to that of Infer. Section \ref{sec:static} then presents the static null-pointer analysis (NPA) for PICL, which is then used as the starting point for the derivation of the gradual analysis. We describe our approach to gradualizing a static program analysis in Section \ref{sec:gradual}, using GNPA as the running case study. Additionally, Section~\ref{sec:gradual} includes a discussion of important gradual properties our analysis adheres to: \textit{soundness}, \textit{conservative extension}, and the \textit{gradual guarantee}. All proofs can be found in the appendix of the full version of this paper.
We report on the preliminary empirical evaluation of an Infer GNPA checker called \emph{Graduator} in Section \ref{sec:empirical}.
Section~\ref{sec:related} discusses
related work and Section~\ref{sec:conclusion} concludes. In the conclusion, we sketch ways in which the approach presented here could be applied to other analysis domains, highlight open venues for future work in the area of gradual program analysis.

\section{Gradual Null-Pointer Analysis in Action}\label{sec:action}

This section informally introduces gradual null-pointer analysis and its potential compared to existing approaches through a simple example. We first briefly recall the basics of null-pointer analyses, and then discuss how current tools deal with missing annotations in problematic ways.

\subsection{Null-Pointer Analysis in a Nutshell}
\label{sec:npa-intro}

With programming languages that allow any type to be inhabited by a null value, programmers end up facing runtime errors (or worse if the language is unsafe) related to dereferencing null pointers. A null-pointer analysis is a static analysis that detects \emph{potential} null pointer dereferences and reports them as warnings, so that programmers can understand where explicit nullness checks should be added in order to avoid runtime errors.
Examples of null-pointer analyses are Infer Eradicate \cite{FacebookInferEradicate} and the Java Null Checker \cite{papi2008practical}.
Typically, a null-pointer analysis allows programmers to add annotations in the code to denote which variables (as well as fields, return values, etc.) are, or should be, non-null--e.g. \code{@NonNull}--and which are potentially null--e.g. \code{@Nullable}. A simple flow analysis is then able to detect and report conflicts, such as when a nullable variable is assigned to a non-null variable.

While a static null pointer analysis brings guarantees of robustness to a codebase, its adoption is not necessarily seamless. If a static analysis aims to be sound, it
must not suffer from false negatives, i.e. miss any actual null pointer dereference that can happen at runtime. While desirable, this means the analysis necessarily has to be conservative and therefore reports false positives---locations that are thought to potentially trigger null pointer dereferences, but actually do not.

This standard static analysis conundrum is exacerbated when considering programs where not all variables are annotated. Of course, in practice, a codebase is rarely fully annotated. Existing null-pointer analyses assign missing annotations a concrete annotation, such as $\aNa$ or $\aNN$. In doing so, they either report additional false positives, suffer from false negatives (and hence are unsound), or both. The rest of this section illustrates these issues with a simple example, and discusses how a gradual null-pointer analysis (GNPA) alleviates them. GNPA treats missing annotations in a special manner, following the gradual typing motto of being optimistic statically and relying on runtime checks for soundness~\cite{siek2006gradual}. Doing so allows the analysis to leverage both static and dynamic information to reduce false positives while maintaining soundness.

\subsection{Avoiding False Positives} \label{sec:action-falsepos}
GNPA can reduce the number of false positives reported by static tools by leveraging provided annotations and run-time checks. We demonstrate this with the unannotated program in Figure~\ref{fig:java-safereverse}. The program appends the reverse of a non-null string to the reverse of a null string and prints the result. The \code{reverse} method (lines \ref{line:safereverse-start}--\ref{line:safereverse-end}) returns the reverse of an input string when it is non-null and an empty string when the input is \code{null}. Additionally, \code{reverse} is unannotated, as highlighted for reference.

\begin{figure}
  \begin{center}
    \begin{minipage}{0.8\linewidth}
      \begin{lstlisting}[language=Java,numbers=left,escapechar=|]
class Main {

  static |\colorbox{lightblue}{\phantom{@?}}| String reverse(|\colorbox{lightblue}{\phantom{@?}}| String str) {|\label{line:safereverse-start}|
    if (str == null) return new String();|\label{line:return-nonnull}|
    StringBuilder builder = new StringBuilder(str);
    builder.reverse();
    return builder.toString();
  }|\label{line:safereverse-end}|

  public static void main(String[] args) {
    String reversed = reverse(null);|\label{line:safereverse-null}|
    String frown = reverse(":)");|\label{line:safereverse-smile}|
    String both = reversed.concat(frown);|\label{line:safereverse-concat}|
    System.out.println(both);
  }
}
\end{lstlisting}
    \end{minipage}
  \end{center}
  \caption{Unannotated Java code safely reversing nullable strings.}
  \label{fig:java-safereverse}
\end{figure}

The most straightforward approach to handling the missing annotations is to replace them with a fixed annotation.  Infer Eradicate and the Java Nullness Checker both choose \code{@NonNull} as the default, since that is the most frequent annotation used in practice~\cite{chalin2007non}.  Thus, in this example, they would treat \code{reverse}'s argument and return value as annotated with \code{@NonNull}. This correctly assigns \code{reversed} and \code{frown} as non-null on lines \ref{line:safereverse-null} and \ref{line:safereverse-smile}; and consequently, no false positive is reported when \code{reversed} is dereferenced on line \ref{line:safereverse-concat}. However, both tools will report a false positive each time \code{reverse} is called with \code{null}, as in line \ref{line:safereverse-null}.

Other uniform defaults are possible, but likewise lead to false positive warnings.  For example, choosing \code{@Nullable} by default would result in a false positive when \code{reversed} is dereferenced.  A more sophisticated choice would be the Java Nullness Checker's \code{@PolyNull} annotation, which supports type qualifier polymorphism for methods annotated with \code{@PolyNull}.  If \code{reverse}'s method signature is annotated with \code{@PolyNull}, then \code{reverse} would have two conceptual versions:
\begin{center}
\code{static @Nullable String reverse(@Nullable String str)}
\\
\code{static @NonNull String reverse(@NonNull String str)}
\end{center}
At a call site, the most precise applicable signature would be chosen; so, calling \code{reverse} with \code{null} (line \ref{line:safereverse-null}) would result in the \code{@Nullable} signature, and calling \code{reverse} with \code{":)"} (line \ref{line:safereverse-smile}) would result in the \code{@NonNull} signature. Unfortunately, this strategy marks \code{reversed} on line \ref{line:safereverse-null} as \code{@Nullable} even though it is \code{@NonNull}, and a false positive is reported when \code{reversed} is dereferenced on line \ref{line:safereverse-concat}.  So while \code{@PolyNull} increases the expressiveness of the annotation system, it does not solve the problem of avoiding false positives from uniform annotation defaults.

In contrast, GNPA optimistically assumes both calls to \code{reverse} in \code{main} (lines \ref{line:safereverse-null}--\ref{line:safereverse-smile}) are valid without assigning fixed annotations to \code{reverse}'s argument or return value. Then, the analysis can continue relying on \textit{contextual optimism} when reasoning about the rest of \code{main}: \code{reversed} is assumed \code{@NonNull} to satisfy its dereference on line \ref{line:safereverse-concat}.
Of course this is generally an unsound assumption, so a run-time check is inserted to ascertain the non-nullness of \code{reversed} and preserve soundness.
Alternatively, a developer could annotate the return value of \code{reverse} with \code{@NonNull}. GNPA will operate as before except it will leverage this new information during static reasoning. Therefore, \code{reversed} will be marked \code{@NonNull} on line \ref{line:safereverse-null} and the dereference of \code{reversed} on line \ref{line:safereverse-concat} will be statically proven safe without any run-time check.

It turns out that a non-uniform choice of defaults can be optimistic in the same sense as GNPA.  For example, \textsc{NullAway} assumes sinks are \code{@Nullable} and sources are \code{@NonNull} when annotations are missing. In fact, this strategy correctly annotates \code{reverse}, and so no false positives are reported by the tool for the program in Figure \ref{fig:java-safereverse}.
However, in contrast to the gradual approach, the \textsc{NullAway} approach is in fact unsound, as illustrated next.

\subsection{Avoiding False Negatives} \label{sec:action-falseneg}
When Eradicate, \textsc{NullAway}, and the Java Nullness Checker handle missing annotations, they all give up soundness in an attempt to limit the number of false positives produced.

To illustrate, consider the same program from Figure~\ref{fig:java-safereverse}, with one single change: the \code{reverse} method now returns \code{null} instead of an empty string (line \ref{line:return-nonnull}).
\begin{lstlisting}[language=Java,numbers=none,escapechar=|]
   if (str == null) return null;
\end{lstlisting}
All of the tools mentioned earlier, including \textsc{NullAway}, erroneously assume that the return value of \code{reverse} is \code{@NonNull}. On line \ref{line:safereverse-null}, \code{reversed} is assigned \code{reverse(null)}'s return value of \code{null}; so, it is an error to dereference \code{reversed} on line \ref{line:safereverse-concat}. Unfortunately, all of the tools assume \code{reversed} is assigned a non-null value and do not report an error on line \ref{line:safereverse-concat}. This is a \textit{false negative}, which means that at runtime the program will fail with a null-pointer exception.

GNPA is similarly optimistic about \code{reversed} being non-null on line \ref{line:safereverse-concat}. However, GNPA safeguards its optimistic static assumptions with run-time checks. Therefore, the analysis will correctly report an error on line \ref{line:safereverse-concat}.
Alternatively, a developer could annotate the return value of \code{reverse} with \code{@Nullable}. By doing so, the gradual analysis will be able to exploit this information statically to report a static error, instead of a dynamic error.

\mbox{}

To sum up, a gradual null-pointer analysis can reduce false positives by optimistically treating missing annotations, and preserve soundness by detecting errors at runtime. Of course, one may wonder why it is better to fail at runtime when passing a null value as a non-null annotated argument, instead of just relying on the upcoming null-pointer exception. There are two answers to this question. First, in unsafe languages like C, a null-pointer dereference results in a crash. Second, in a safe language like Java where a null-pointer dereference is anyway detected and reported, it can be preferable to fail as soon as possible, in order to avoid performing computation (and side effects) under an incorrect assumption. This is similar to how the eager reporting of gradual typing can be seen as an improvement over simply relying on the underlying safety checks of a dynamically-typed language.

\mbox{}

Next, we formally develop GNPA, prove that it is sound, and prove that it smoothly trades-off between static and dynamic checking following the gradual guarantee criterion from gradual typing~\cite{siek2015refined}. We finally report on an actual implementation of GNPA and compare its effectiveness with existing tools.

\section{PICL: A Procedural Imperative Core Language} \label{sec:language}

Following the Abstract Gradual Typing methodology introduced by Garcia \etal~\cite{garcia2016abstracting}, we build GNPA on top of a static null-pointer analysis, NPA. Thus, we first formally present a procedural imperative core language (PICL), used for both analyses to operate on; we present NPA in Section \ref{sec:static}, and GNPA in Section \ref{sec:gradual}.
PICL is akin to the intermediate language of the Infer framework, and therefore the formal development around PICL drove the implementation of the Infer GNPA checker we evaluate in Section~\ref{sec:empirical}.

\begin{figure}[]
  \centering
 \begin{minipage}[t]{0.47\linewidth}
  \begin{plstx}
    *: x, y [\in] \sVar \\
    *: e [\in] \sExpr \\
    *: a [\in] \sAnn = \{\code{Nullable},~\code{NonNull},~\qm\}\\
    : P ::= \overline{procedure} ~\overline{field} ~s \\
    : field ::= T ~f; \\
    : procedure ::= T\code{@}a~ m ~(~\overline{T\code{@}a ~x}~)~ \{ ~s~ \} \\
    : T ::= \code{ref} \\
    : \oplus ::= \land | \lor \\
  \end{plstx}
\end{minipage}\hfill
 \begin{minipage}[t]{0.47\linewidth}
  \begin{plstx}
    *: m [\in] \sProc \\
    *: f [\in] \sField \\
    *: s [\in] \sStmt \\
    : e ::= \vnull | x | e \oplus e | e.f | \exprNew{\overline{f}} | \exprCall{m}{x} \\
    : c ::= e = \vnull | e \neq \vnull \\
    : s ::= \stmtSkip
    | \stmtSeq{s}{s}
    | \stmtDecl{T}{x}
    | \stmtAssign{x}{e}
    | \stmtFieldAssign{x}{f}{y}
    | \stmtIfElse{c}{s}{s}
    | \stmtWhile{c}{s}
    | \stmtReturn{y} \\
  \end{plstx}
\end{minipage}
  \caption{Abstract syntax of PICL.}
  \label{fig:syntax}
\end{figure}

\subsection{Syntax \& Static Semantics}

The syntax of PICL can be found in Figure \ref{fig:syntax}. Programs consist of procedures\footnote{Procedures accept only one parameter to simplify later formalisms.}, fields, and statements. Statements include the empty statement, sequences, variable declarations, variable and field assignments, conditionals, while loops, and returns. Expressions consist of \code{null} literals, variables, comparisons, conjunctions, disjunctions, field accesses, object allocations, and procedure calls. Finally, procedures may have \code{Nullable} or \code{NonNull} annotations on their arguments and return values. Missing annotations are represented by \qm.

As the focus of this work is not on typing, we only consider well-formed and well-typed programs, which is standard and not formalized here. In particular, variables are declared and initialized before use, and field and procedure names are unique.

\subsection{Control Flow Graph Representation}
Well-formed programs written in the abstract syntax given in Fig. \ref{fig:syntax} are translated into \emph{control flow graphs}---one graph for each procedure body and one for the main $s$. A finite control flow graph (CFG) for program $p$ has vertices $\sVert_p$ and edges $\sArc_p \subseteq \sVert_p \times \sVert_p$. For $v_1, v_2 \in \sVert_p$, we write $\arcsto{p}{v_1}{v_2}$ to denote $(v_1, v_2) \in \sArc_p$. Each vertex holds a single instruction, which we can access using the function $\fInst_p : \sVert_p \to \sInst$. We write $\inst[v]{\iota}$ to denote a vertex $v \in \sVert_p$ such that $\fInst_p(v) = \iota$, or just $\inst{\iota}$ (omitting the $v$) when the vertex itself is not important. By construction, these translated CFGs satisfy certain well-formedness properties, listed in the appendix of the full version of this paper.

\begin{figure}[]
  \centering
  \begin{minipage}[t]{0.47\linewidth}
  \begin{plstx}
    *: x, y, z [\in] \sVar \\
    *: a, b [\in] \sAnn = \{\code{Nullable},~\code{NonNull},~\qm\} \\
  \end{plstx}
  \end{minipage}\hfill
  \begin{minipage}[t]{0.47\linewidth}
  \begin{plstx}
    *: m [\in] \sProc \\
    *: f [\in] \sField \\
  \end{plstx}
  \end{minipage}
  \begin{plstx}
    : I ::= \stmtAssign{x}{y}
    | \stmtAssign{x}{\vnull}
    | \stmtCall{x}{m\code{@}a}{y\code{@}b}
    | \stmtAssign{x}{\exprNew{\overline{f}}}
    | \stmtAssign{x}{y \land z}
    | \stmtAssign{x}{y \lor z}
    | \stmtAssign{x}{y.f}
    | \stmtFieldAssign{x}{f}{y}
    | \stmtBranch{x}
    | \stmtIf{x}
    | \stmtElse{x}
    | \stmtReturn{y\code{@}a}
    | \stmtMain
    | \stmtProc{m\code{@}a}{y\code{@}b}\\
  \end{plstx}
  \caption{Abstract syntax of a CFG instruction.}
  \label{fig:instr-syntax}
\end{figure}

The set of possible instructions is defined in Figure~\ref{fig:instr-syntax}. In general, the CFG instructions are atomic variants of program statements designed to simplify the analysis presentations. Figure \ref{fig:ex-cfg} gives the CFG of a simple procedure \code{foo}, which calls \code{bar} repeatedly until \code{x} becomes non-null and then returns \code{x}. The CFG starts with \code{foo}'s entry node $\stmtProc{foo@NonNull}{x@Nullable}$ (similarly, $\stmtMain$ is always the entry node of the main program's CFG). Then, the while loop on lines \ref{ex:cfg-while-start}--\ref{ex:cfg-while-end} results in the $\stmtBranch{x}$ sub-graph, which leads to $\stmtIf{x}$ when $x$ is non-null and $\stmtElse{x}$ when $x$ is null. The call to \code{bar} follows from $\stmtElse{x}$ and loops back to $\stmtBranch{x}$ as expected. Finally, $\stmtReturn{x@NonNull}$ follows from $\stmtIf{x}$ ending the CFG. Precise semantics for instructions is given in Section \ref{sec:dyn-lang-semantics}.

\begin{figure}[]
  \begin{minipage}[t]{0.5\textwidth}
  \centering
  \small
  \begin{lstlisting}[language=Java,numbers=left,escapechar=|]
  ref@NonNull foo(ref@Nullable x)
  {
    |\label{ex:cfg-while-start}|while (x == null)
    {
      x := bar(x);
    |\label{ex:cfg-while-end}|}
    return x;
  }
  \end{lstlisting}
  \end{minipage}\hfill
  \begin{minipage}[t]{0.5\textwidth}
  \strut\vspace*{-\baselineskip}\newline
  \centering
  \begin{tikzpicture}[->]
    \node (v1) {\stmtProc{\code{foo@NonNull}}{\code{x@Nullable}}};
    \node (v2) [below = 0.5cm of v1] {\stmtBranch{\code{x}}};
    \node (v3) [below right = 0.5cm and 0.3cm of v2] {\stmtIf{\code{x}}};
    \node (v4) [below left = 0.5cm and 0.2cm of v2] {\stmtElse{\code{x}}};
    \node (v5) [below right = 0.5cm and -1.5cm of v4] {\stmtCall{\code{x}}{\code{bar@?}}{\code{x@?}}};
    \node (v6) [below right = 0.5cm and -1cm of v3] {\stmtReturn{\code{x@NonNull}}};

    \draw (v1) -> (v2);
    \draw (v2) -> (v3);
    \draw (v2) -> (v4);
    \draw (v4) -> (v5);
    \draw (v5) -> (v2);
    \draw (v3) -> (v6);
  \end{tikzpicture}
  \end{minipage}
  \caption{Example CFG.}
  \label{fig:ex-cfg}
\end{figure}

\subsection{Dynamic Semantics} \label{sec:dyn-lang-semantics}
We define the set of possible object locations as the set of natural numbers and 0, $\sVal = \Nat \cup \{0\}$. The \code{null} pointer is location 0.

Now, a program state ($\sState_p \subseteq \sStack_p \times \sMem_p$) consists of a stack and a heap. A heap $\mu \in \sMem_p = (\sVal \setminus \{0\}) \pto (\sField \pto \sVal)$ maps object locations and field names to program values---other (possibly null) pointers. A stack is made of stack frames each containing a local variable environment and CFG node:
\begin{align*}
  S \in \sStack_p ::= E \cdot S \mid \nil
  \qq{where} & E \in \sFrame_p = \sEnv \times \sVert_p \\
  \qand      & \sEnv = \sVar \pto \sVal.
\end{align*}

Further, we restrict the set of states $\xi = \stat{\fram{\rho_1}{v_1} \cdot \fram{\rho_2}{v_2} \cdots \fram{\rho_n}{v_n} \cdot \nil}{\mu} \in \sState_p$ to include only those satisfying the following conditions:
\begin{enumerate}
  \item \textit{Bottom stack frame is in \cMain:} Let $\fDescend : \sVert_p \to \powerset^{+}(\sVert_p)$ give the descendants of each node in the control flow graph. Then $v_i \in \fDescend(v_0)$ if and only if $i = n$.
  \item \textit{Every variable defaults to \vnull{} (except on \cMain{} and \cProc{} nodes):} If $\fInst_p(v_i) \neq \stmtMain$ and $\fInst_p(v_i) \neq \stmtProc{m@a}{y@b}$ then $\rho_i$ is a total function.
  \item \textit{Follow the ``true'' branch when non-null:} If $\fInst_p(v_i) = \stmtIf{y}$ then $\rho_i(y) \neq 0$.
  \item \textit{Follow the ``false'' branch when null:} If $\fInst_p(v_i) = \stmtElse{y}$ then $\rho_i(y) = 0$.
  \item\label{rule:call} \textit{Every frame except the top is a procedure call:}
    If $v_i \in \fDescend(\stmtProc{m\code{@}a}{y\code{@}b})$ then $\fInst_p(v_{i +
    1}) = \stmtCall{x}{m\code{@}a}{y'\code{@}b}$, and either $b = \qm$ or $\rho_{i + 1}(y') \in \fConc(b)$ (see section~\ref{sec:static}.
\end{enumerate}

Now, the small-step semantics of PICL is given in
Figure~\ref{fig:semantics}, where $\rho_0 = \set{x \mapsto 0}{x \in \sVar}$.
\begin{figure}[]
  \centering
  \begin{footnotesize}
    \begin{gather*}
      \stat{\fram{\rho}{\inst[u]{\stmtAssign{x}{y}}} \cdot S}{\mu} \stepsto_p \stat{\fram{\rho[x \mapsto \rho(y)]}{v} \cdot S}{\mu} \\
      \stat{\fram{\rho}{\inst[u]{\stmtBranch{y}}} \cdot S}{\mu} \stepsto_p \stat{\fram{\rho}{\inst[v]{\fBranch(\rho(y), y)}} \cdot S}{\mu} \\
      \stat{\fram{\rho}{\inst[u]{\stmtIf{y}}} \cdot S}{\mu} \stepsto_p \stat{\fram{\rho}{v} \cdot S}{\mu} \\
      \stat{\fram{\rho}{\inst[u]{\stmtElse{y}}} \cdot S}{\mu} \stepsto_p \stat{\fram{\rho}{v} \cdot S}{\mu} \\
      \stat{\fram{\rho}{\inst[u]{\stmtCall{x}{m\code{@}a}{y\code{@}b}}} \cdot S}{\mu} \stepsto_p \stat{\fram{\varnothing}{\inst{\stmtProc{m\code{@}a}{y'\code{@}b}}} \cdot \fram{\rho}{u} \cdot S}{\mu} \\
      \stat{\fram{\rho_1}{\inst[u]{\stmtProc{m\code{@}a}{y\code{@}b}}} \cdot \fram{\rho_2}{\inst[w]{\stmtCall{x}{m\code{@}a}{y'\code{@}b}}} \cdot S}{\mu} \stepsto_p \stat{\fram{\rho_0[y \mapsto \rho_2(y')]}{v} \cdot \fram{\rho_2}{w} \cdot S}{\mu} \\
      \stat{\fram{\rho_1}{\inst{\stmtReturn{y\code{@}a}}} \cdot \fram{\rho_2}{\inst[u]{\stmtCall{x}{m\code{@}a}{y'\code{@}b}}} \cdot S}{\mu} \stepsto_p \stat{\fram{\rho_2[x \mapsto \rho_1(y)]}{v} \cdot S}{\mu}\;\dagger \\
      \stat{\fram{\rho}{\inst[u]{\stmtAssign{x}{\vnull}}} \cdot S}{\mu} \stepsto_p \stat{\fram{\rho[x \mapsto 0]}{v} \cdot S}{\mu} \\
      \stat{\fram{\rho}{\inst[u]{\stmtAssign{x}{\exprNew{\overline{f}}}}} \cdot S}{\mu} \stepsto_p \stat{\fram{\rho[x \mapsto \fNew(\mu)]}{v} \cdot S}{\mu[\fNew(\mu) \mapsto \overline{[f_i \mapsto \vnull]}]} \\
      \stat{\fram{\rho}{\inst[u]{\stmtAssign{x}{y \land z}}} \cdot S}{\mu} \stepsto_p \stat{\fram{\rho[x \mapsto \fAnd(\rho(y), \rho(z))]}{v} \cdot S}{\mu} \\
      \stat{\fram{\rho}{\inst[u]{\stmtAssign{x}{y \lor z}}} \cdot S}{\mu} \stepsto_p \stat{\fram{\rho[x \mapsto \fOr(\rho(y), \rho(z))]}{v} \cdot S}{\mu} \\
      \stat{\fram{\rho}{\inst[u]{\stmtAssign{x}{y.f}}} \cdot S}{\mu} \stepsto_p \stat{\fram{\rho[x \mapsto \mu(\rho(y))(f)]}{v} \cdot S}{\mu} \\
      \stat{\fram{\rho}{\inst[u]{\stmtAssign{x.f}{y}}} \cdot S}{\mu} \stepsto_p \stat{\fram{\rho}{v} \cdot S}{\mu[\rho(x) \mapsto [f \mapsto \rho(y)]]} \\
      \stat{\fram{\rho}{\inst[u]{\stmtMain}} \cdot S}{\mu} \stepsto_p \stat{\fram{\rho_0}{v} \cdot S}{\mu}
    \end{gather*}
  \end{footnotesize}
  \caption{Small-step semantics rules that hold when $\arcsto{p}{u}{v}$. \;
    $\dagger{}$ This particular rule only applies if either $a = \qm$ or
    $\rho_1(y) \in \fConc(a)$ (see Section~\ref{sec:static}).}
  \label{fig:semantics}
\end{figure}
The rules rely on the following helper functions:
\begin{align*}
  \fNew        & : \sMem_p \to \sVal \setminus \{0\} & \fNew(\mu)           & = 1 + \max(\{0\} \union \dom(\mu))                     \\
  \fBranch     & : \sVal \times \sVar \to \sInst                  & \fBranch(n, x)          & = \stmtIf{x}\text{ if }n > 0\text{; }\stmtElse{x}\text{ otherwise} \\
  \fAnd        & : \sVal \times \sVal \to \sVal      & \fAnd(n_1, n_2)      & = n_2\text{ if }n_1 > 0\text{; }n_1\text{ otherwise}   \\
  \fOr         & : \sVal \times \sVal \to \sVal      & \fOr(n_1, n_2)       & = n_1\text{ if }n_1 > 0\text{; }n_2\text{ otherwise}
\end{align*}

Notably, $\stmtBranch{y}$ steps to the $\stmtIf{y}$ node when $y$ is non-null and $\stmtElse{y}$ when $y$ is null. Additionally, if a procedure call's argument disagrees with its parameter annotation, then it will get stuck (rule~\ref{rule:call} for states); otherwise, the call statement will safely step to the procedure's body. In contrast, the semantics will get stuck if a return value does not agree with the procedure's return annotation.

\section{A Static Null-Pointer Analysis for PICL}\label{sec:static}
In this section, we formalize a static null-pointer analysis, called NPA, for PICL on which we will build GNPA. Here, we will only consider completely annotated programs, $\sAnn = \{ \aNa ,~ \aNN \}$. Therefore, we use a ``prime'' symbol for sets like $\sInst' \subseteq \sInst$ to indicate that this is not the whole story. We present NPA's semilattice of abstract values, flow function, fixpoint algorithm, and how the analysis uses the results from the fixpoint algorithm to report warnings to the user.

\subsection{Semilattice of Abstract Values} \label{sec:static-semilattice}
The set of abstract values $\sAbst = \{\aNa,~ \aNu,~ \aNN\}$ make up the finite semilattice defined in Figure~\ref{fig:semilattice}.
The partial order $\under~ \subseteq \sAbst \times \sAbst$ given is
\begin{mathpar}
\aNu \under \aNa
\and
\aNN \under \aNa
\and
\forall.~ l \in \sAbst ~.~ l \under l.
\end{mathpar}
The join function $\join : \sAbst \times \sAbst \to \sAbst$ induced by the partial order is:
\begin{mathpar}
\aNu ~\join~ \aNN = \aNa
\and
\forall.~ l \in \sAbst ~.~ l ~\join~ \aNa = \aNa
\and
\forall.~ l \in \sAbst ~.~ l ~\join~ l = l
\end{mathpar}
Clearly, $\aNa$ is the top element $\top$.
\begin{figure}
  \centering
  \begin{tikzpicture}
    \node (t) {\aNa};
    \node (a) [below left = 0.5cm and -0.5cm of t] {\aNu};
    \node (b) [below right = 0.5cm and -0.5cm of t] {\aNN};

    \draw (t) -- (a);
    \draw (t) -- (b);
  \end{tikzpicture}
  \caption{The \sAbst{} semilattice.}
  \label{fig:semilattice}
\end{figure}
Next, we relate this semilattice to $\sVal$ via a concretization function
$\fConc : \sAbst \to \powerset^{+}(\sVal)$:
  \[
    \fConc(\aNa) = \sVal \qc
    \fConc(\aNu) = \{0\} \qc
    \fConc(\aNN) = \sVal \setminus \{0\},
  \]
which satisfies the property $\forall.~ l_1, l_2 \in \sAbst ~.~ l_1 \under~ l_2 ~\iff~ \fConc(l_1) \subseteq \fConc(l_2)$.

\subsection{Flow Function}
Similar to how we use $\sEnv$ to represent mappings from variables to concrete values, we will use $\sigma \in \sMap = \sVar \pto \sAbst$ to represent mappings from variables to abstract values---\emph{abstract states}. Then, we extend the semilattice's partial order relation to abstract states $\sigma_1,
  \sigma_2 \in \sMap$:
\[
  \sigma_1 \under \sigma_2
  \quad\iff\quad
  \forall.~ x \in \sVar ~.~ \sigma_1(x) \under \sigma_2(x)
\]

\noindent We also extend the join operation to abstract states $\sigma_1, \sigma_2 \in \sMap$:
        \[
          (\sigma_1 \join \sigma_2)(x) =
          \begin{cases}
            a \join b        & \text{if $\sigma_1(x) = a$ and $\sigma_2(x) = b$}          \\
            a                & \text{if $\sigma_1(x) = a$ and $\sigma_2(x)$ is undefined} \\
            b                & \text{if $\sigma_1(x)$ is undefined and $\sigma_2(x) = b$} \\
            \text{undefined} & \text{otherwise}.
          \end{cases}
        \]

The NPA's flow function $\fFlow : \sInst' \times \sMap \to \sMap$ is defined in Figure~\ref{fig:flow}. Note, $\sigma_0 = \set{x \mapsto \aNu}{x \in \sVar}$. Also, we omit the $\stmtReturn{y\code{@}a}$ case because it does not have CFG successors in a well-formed program.

\begin{figure}[]
  \centering
  \begin{align*}
    \fFlow(\stmtAssign{x}{y}, \sigma)        & = \sigma[x \mapsto \sigma(y)]            \\
    \fFlow(\stmtBranch{x}, \sigma)           & = \sigma                                 \\
    \fFlow(\stmtIf{x}, \sigma)               & = \sigma[x \mapsto \aNN]                 \\
    \fFlow(\stmtElse{x}, \sigma)             & = \sigma[x \mapsto \aNu]                 \\
    \fFlow(\stmtCall{x}{m\code{@}a}{y\code{@}b}, \sigma) & = \sigma[x \mapsto a]                    \\
    \fFlow(\stmtProc{m\code{@}a}{y\code{@}b}, \sigma)    & = \sigma_0[y \mapsto b]                  \\
    \fFlow(\stmtAssign{x}{\vnull}, \sigma)             & = \sigma[x \mapsto \aNu]                 \\
    \fFlow(\stmtAssign{x}{\exprNew{\overline{f}}}, \sigma)       & = \sigma[x \mapsto \aNN]                 \\
    \fFlow(\stmtAssign{x}{y \land z}, \sigma) & = \begin{cases}
      \sigma[x \mapsto \aNu] & \text{if }\aNu \in \{\sigma(y), \sigma(z)\} \\
      \sigma[x \mapsto \aNa] & \text{if }\aNa \in \{\sigma(y), \sigma(z)\} \\
      \sigma[x \mapsto \aNN] & \text{otherwise}
    \end{cases} \\
    \fFlow(\stmtAssign{x}{y \lor z}, \sigma) & = \begin{cases}
      \sigma[x \mapsto \aNN] & \text{if }\aNN \in \{\sigma(y), \sigma(z)\} \\
      \sigma[x \mapsto \aNa] & \text{if }\aNa \in \{\sigma(y), \sigma(z)\} \\
      \sigma[x \mapsto \aNu] & \text{otherwise}
    \end{cases} \\
    \fFlow(\stmtAssign{x}{y.f}, \sigma)           & = \sigma[x \mapsto \aNa][y \mapsto \aNN] \\
    \fFlow(\stmtFieldAssign{x}{f}{y}, \sigma)        & = \sigma[x \mapsto \aNN]                 \\
    \fFlow(\stmtMain, \sigma)          & = \sigma_0
  \end{align*}
  \caption{All consequential cases of the flow function used by NPA.}
  \label{fig:flow}
\end{figure}

\subsubsection{Properties}
It can be shown that this flow function is monotonic: for any $\iota \in \sInst'$ and abstract states $\sigma_1, \sigma_2 \in \sMap$, if $\sigma_1 \under \sigma_2$ then $\fFlow\bb{\iota}(\sigma_1) \under \fFlow\bb{\iota}(\sigma_2)$. It can also be shown that the flow function is locally sound, \ie the flow function models the concrete semantics at each step. To express this property formally, we define the predicate $\pDesc(\rho, \sigma)$ on $\sEnv \times \sMap$, which says that the abstract state $\sigma$ ``describes'' the concrete environment $\rho$:
\[
  \pDesc(\rho, \sigma)
  \quad\iff\quad
  \text{for all $x \in \sVar ~.~ \rho(x) \in \fConc(\sigma(x))$.}
\]
Then, if
 $\stat{S' \cdot \fram{\rho}{\inst[v]{\iota}} \cdot S}{\mu}
  \stepsto_p
  \stat{\fram{\rho'}{v'} \cdot S}{\mu'}$,
it must be the case that
\[
  \pDesc(\rho, \sigma)
  \quad\implies\quad
  \pDesc(\rho', \fFlow\bb{\iota}(\sigma))
  \qq{for all}
  \sigma \in \sMap.
\]

\subsection{Fixpoint Algorithm}
This brings us to Algorithm~\ref{alg:kildall} \cite{kildall1973unified}, which is used to analyze a program and compute whether each program variable is $\aNa$, $\aNN$, or $\aNu$ at each program point (the program results $\pi$). More specifically, the algorithm applies the flow function to each program instruction recording or updating the results until a fixpoint is reached---\ie until the results stop changing (becoming more approximate). The algorithm will always reach a fixpoint (terminate), because $\fFlow$ is monotone and the height of the semilattice (Sec. \ref{sec:static-semilattice}) is finite. Note, the algorithm does not specify the order in which instructions are analyzed, because the order does not affect the results when $\fFlow$ is monotonic. An implementation may choose to analyze instructions in CFG order---following the directed edges of the CFG.

\begin{algorithm}
  \caption{Kildall's worklist algorithm}
  \label{alg:kildall}
  \begin{algorithmic}[1]
    \Function{Kildall}{$\fFlow, \join, p$}
      \State $\pi \gets \set{v \mapsto \varnothing}{v \in \sVert_p}$
      \State $V \gets \sVert_p$ \Comment{$V \subseteq \sVert_p$}
      \While{$V \neq \varnothing$} \label{line:while}
        \State $\inst[v]{\iota} \gets \text{an element of $V$}$ \Comment{$v \in V$ and $\iota = \fInst_p(v)$}
        \State $V \gets V \setminus \{v\}$ \Comment{$v \notin V$}
        \State $\sigma \gets \pi(v)$ \label{line:sigma}
        \State $\sigma' \gets \fFlow\bb{\iota}(\sigma)$ \label{line:flow}
        \For{$\arcsto{p}{v}{u}$} \label{line:for} \Comment{$u \in \sVert_p$}
          \If{$\sigma' \join \pi(u) \neq \pi(u)$} \label{line:if} \Comment{think of as $\sigma' \not\under \pi(u)$}
            \State{$\pi(u) \gets \pi(u) \join \sigma'$} \label{line:join}
            \State{$V \gets V \union \{u\}$} \label{line:enqueue}
          \EndIf \label{line:if-end}
        \EndFor \label{line:for-end}
      \EndWhile \label{line:while-end}
      \State \textbf{return} $\pi$
    \EndFunction
  \end{algorithmic}
\end{algorithm}

\subsection{Safety Function \& Static Warnings}
Next, we present a way to use analysis results $\pi$ produced by the fixpoint algorithm to determine whether to accept or reject a given program. Our goal is to ensure that when we run the program, it will not get stuck; that is, for any state $\xi$ that the program
reaches, we want to ensure that either $\xi$ is a final state $\stat{E \cdot \nil}{\mu}$ or there is another state $\xi'$ such that $\xi \stepsto_p \xi'$. To do this, we define the safety function $\fSafe\bb{\iota}(x) : \sInst' \times
\sVar \to \sAbst$, which returns the abstract value representing the set of ``safe'' values $x$ can take on before $\iota$ is executed. Figure~\ref{fig:safe} gives a few representative cases for $\fSafe$, and in all the cases not shown $\fSafe$ returns \aNa. In particular, a procedure call's argument must adhere to the procedure's parameter annotation, a return value must adhere to its corresponding return annotation, and all field accesses must have non-null receivers. Therefore, the safety function guards against all undefined behavior.

\begin{figure}[]
  \centering
  \begin{align*}
    \fSafe(\stmtCall{x}{m\code{@}a}{y\code{@}b}, y) & = b    \\
    \fSafe(\stmtReturn{y\code{@}a}, y)        & = a    \\
    \fSafe(\stmtAssign{x}{y.f}, y)           & = \aNN \\
    \fSafe(\stmtFieldAssign{x}{f}{y}, x)        & = \aNN
  \end{align*}
  \caption{All nontrivial cases of the safety function.}
  \label{fig:safe}
\end{figure}

\subsubsection{Static Warnings}

\noindent Now, we can state the meaning of a valid program $p \in \sProg'$:
\[
  \text{for all}\quad
  \inst[v]{\iota} \in \sVert_p
  \qand
  x \in \sVar
  ~.\quad
  \pi(v) = \sigma
  \quad\implies\quad
  \sigma(x) \under \fSafe\bb{\iota}(x)
\]
\[
  \text{where} \quad \pi = \fKildall(\fFlow, \join, p).
\]
That is, NPA emits static warnings when the fixpoint results disagree, according to the partial order $\under$, with the safety function. Also, we prove in Section \ref{sec:static-properties} that a valid program does not get stuck.

\subsection{Soundness of NPA} \label{sec:static-properties}
As discussed above, PICL's semantics are designed to get stuck when procedure annotations are violated or when null objects are dereferenced. Therefore, informally \emph{soundness} says that a valid program does not get stuck during execution. Formally, soundness is defined with progress and preservation statements. Before their statement we must first define the notion of valid states to complement our definition of valid programs:
\[
  \text{Let}~ p \in \sProg'. ~\text{A state}~ \xi =
  \stat{\fram{\rho_1}{v_1} \cdot \fram{\rho_2}{v_2} \cdots \fram{\rho_n}{v_n}
  \cdot \nil}{\mu} \in \sState_p ~\text{is \emph{valid} if}
\]
\[
  \text{for all}\quad
  1 \leq i \leq n
  ~.\quad
  \pDesc(\rho_i, \pi(v_i))
  \quad\text{where}~ \pi = \fKildall(\fFlow, \join, p).
\]
A state is \emph{valid} if it is described by the static analysis results $\pi$.

\begin{proposition}[static progress]\label{prop:static-progress}
  Let $p \in \sProg'$ be valid. If $\xi = \stat{E_1 \cdot E_2 \cdot S}{\mu} \in
    \sState_p$ is valid then $\xi \stepsto_p \xi'$ for some $\xi' \in \sState_p$.
\end{proposition}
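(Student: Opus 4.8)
The plan is to proceed by case analysis on the instruction $\iota = \fInst_p(v_1)$ held at the top frame $E_1 = \fram{\rho_1}{v_1}$: in each case I exhibit the small-step rule of Figure~\ref{fig:semantics} that applies, check its premises, and check that the resulting state is again in $\sState_p$. One piece of non-local reasoning recurs and is worth isolating first. Since $\xi$ is valid, $\pDesc(\rho_1, \pi(v_1))$, i.e. $\rho_1(x) \in \fConc(\pi(v_1)(x))$ for every $x$; since $p$ is valid, $\pi(v_1)(x) \under \fSafe\bb{\iota}(x)$ for every $x$; composing these with the concretization property $l_1 \under l_2 \iff \fConc(l_1) \subseteq \fConc(l_2)$ gives $\rho_1(x) \in \fConc(\fSafe\bb{\iota}(x))$. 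I will call this the \emph{bridge fact}; it is essentially the only point at which the static results $\pi$ enter the argument.

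\textbf{Assignments, conditionals, and calls.} For every $\iota$ other than $\stmtReturn{y\code{@}a}$, $\stmtProc{m\code{@}a}{y\code{@}b}$, and $\stmtMain$, the matching rule of Figure~\ref{fig:semantics} applies; its premises are an outgoing CFG edge of $v_1$ --- needed by every rule in this group except the one for $\stmtCall{x}{m\code{@}a}{y\code{@}b}$, which applies unconditionally, and available by CFG well-formedness since $\iota$ is not a return --- together with definedness of the variables the rule reads and, in the two cases $\stmtAssign{x}{y.f}$ and $\stmtFieldAssign{x}{f}{y}$, non-nullness of the dereferenced receiver. Definedness follows from the state condition making every variable default to \vnull{}: it renders $\rho_1$ total on any vertex that is neither a \cMain{} nor a \cProc{} node, which covers every vertex in this group. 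Non-nullness of a receiver $z$ follows from the bridge fact with $\fSafe\bb{\iota}(z) = \aNN$: it gives $\rho_1(z) \in \fConc(\aNN) = \sVal \setminus \{0\}$, and the standard invariant that valid states carry well-typed, closed heaps then yields $\rho_1(z) \in \dom(\mu)$ with the accessed field in the domain of $\mu(\rho_1(z))$. For $\stmtBranch{y}$ one additionally observes that $\fBranch(\rho_1(y), y)$ is $\stmtIf{y}$ exactly when $\rho_1(y) \neq 0$ and $\stmtElse{y}$ otherwise --- precisely what the branch-following state conditions require of the successor --- and for $\stmtCall{x}{m\code{@}a}{y\code{@}b}$ that the re-pushed caller frame meets condition~\ref{rule:call}, whose annotation clause $\rho_1(y) \in \fConc(b)$ is once more the bridge fact, since $\fSafe\bb{\iota}(y) = b$. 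Finally, $\iota = \stmtMain$ cannot arise: with two or more frames the top vertex is not the bottom one, so the state condition confining the \cMain{} CFG to the bottom frame places $v_1$ elsewhere.

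\textbf{Procedure entry and return.} If $\iota = \stmtProc{m\code{@}a}{y\code{@}b}$ then $v_1 \in \fDescend(v_1)$, so condition~\ref{rule:call} forces $\fInst_p(v_2) = \stmtCall{x}{m\code{@}a}{y'\code{@}b}$ with the very same $m\code{@}a$ --- exactly the shape the $\stmtProc$ rule requires --- and $v_1$ has its unique successor by well-formedness, while $\rho_2(y')$ is defined because $v_2$ is a call node, hence total by the default-to-\vnull{} condition. If $\iota = \stmtReturn{y\code{@}a}$ then $v_1$ has no CFG successor, and the only rule that can fire is the return rule, which consumes the frame $E_2$ below; since by construction of the CFG $v_1$ lies in the body of the procedure $m$ whose return annotation is $a$, we have $v_1 \in \fDescend(\stmtProc{m\code{@}a}{\dots})$, and condition~\ref{rule:call} again supplies a matching caller $\fInst_p(v_2) = \stmtCall{x}{m\code{@}a}{y'\code{@}b}$. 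Its side condition $\dagger$ reduces, since $a \neq \qm$ in $\sProg'$, to $\rho_1(y) \in \fConc(a)$, which is the bridge fact instantiated at $\fSafe\bb{\stmtReturn{y\code{@}a}}(y) = a$.

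\textbf{Wrapping up; the main obstacle.} Each case is closed by confirming the successor state still satisfies the five state conditions: the default-to-\vnull{} condition because every freshly formed top frame is either parked on a \cProc{} vertex (exempt) or carries $\rho_0$ or a still-total environment; the branch-following conditions as noted above; the ``bottom frame in \cMain{}'' condition because the bottom frame is either untouched or --- when a return pops the stack to depth one --- is the old $E_2$ advanced within the \cMain{} CFG; and condition~\ref{rule:call} because every caller frame that reappears still points at its original call instruction, whose annotation clause is discharged just as $\dagger$ was. I expect the real obstacle to be precisely this bookkeeping around condition~\ref{rule:call} and the $\dagger$ side condition --- keeping straight which procedure $m\code{@}a$ and which successor vertex each rule produces and checking that the caller and callee annotations agree --- together with the sheer number of instruction cases: no single step is deep, but the case analysis must be carried out uniformly and without gaps.
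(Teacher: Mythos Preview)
Your proposal is correct, but it takes a markedly different route from the paper. The paper's proof is three lines long: it establishes exactly what you call the \emph{bridge fact}---$\rho(x) \in \fConc(\fSafe\bb{\iota}(x))$ for every $x$---and then simply invokes ``the property of the safety function,'' which the appendix states axiomatically: whenever $\pDesc(\rho, \set{x \mapsto \fSafe\bb{\iota}(x)}{x \in \sVar})$ holds for the top frame of a state with at least two frames, that state can take a step. In other words, the paper treats your entire case analysis as a requirement that the safety function must satisfy, and never proves it for the particular $\fSafe$ of NPA. What you are doing is essentially discharging that requirement inline, and along the way also re-verifying that the successor state lands in $\sState_p$ (the five state conditions). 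This buys you a fully self-contained argument that does not depend on an unproved obligation, at the cost of the instruction-by-instruction bookkeeping you correctly flag as the main obstacle. The paper's decomposition buys modularity---the same three-line progress proof works for any analysis whose safety function meets the stated property---but leaves that property itself as an exercise. Both are sound; yours is more concrete, theirs more reusable.
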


\begin{proposition}[static preservation]\label{prop:static-preservation}
  Let $p \in \sProg'$ be valid. If $\xi \in \sState_p$ is valid and $\xi
    \stepsto_p \xi'$ then $\xi'$ is valid.
\end{proposition}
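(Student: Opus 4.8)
The plan is to prove static preservation by case analysis on the small-step rule used to derive $\xi \stepsto_p \xi'$, in each case invoking the local soundness property of $\fFlow$ together with the fixpoint property of $\pi$. Write $\xi = \stat{\fram{\rho_1}{v_1} \cdots \fram{\rho_n}{v_n} \cdot \nil}{\mu}$ and $\xi' = \stat{\fram{\rho_1'}{v_1'} \cdots \fram{\rho_m'}{v_m'} \cdot \nil}{\mu'}$. The goal is to show $\pDesc(\rho_j', \pi(v_j'))$ for every frame $j$ of $\xi'$, given $\pDesc(\rho_i, \pi(v_i))$ for every frame $i$ of $\xi$. The frames that are untouched by the step carry over their descriptions verbatim, so the real work is confined to the one or two frames that change: the top frame (which advances along a CFG edge, or is pushed/popped), and, for the procedure-entry and return rules, the frame just below it.

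First I would dispatch the ``intraprocedural'' rules --- the assignments, $\cBranch$, $\cIf$, $\cElse$, $\cMain$, and the field operations --- uniformly. Each has the shape $\stat{\fram{\rho}{\inst[u]{\iota}} \cdot S}{\mu} \stepsto_p \stat{\fram{\rho'}{v} \cdot S}{\mu'}$ with $\arcsto{p}{u}{v}$, where (modulo the stack-frame bookkeeping in the $\cCall$ case) the local-soundness statement from Section~\ref{sec:static} applies directly: from $\pDesc(\rho, \pi(u))$ we get $\pDesc(\rho', \fFlow\bb{\iota}(\pi(u)))$. Since $\pi$ is the fixpoint computed by Kildall's algorithm, the termination condition of the $\mathbf{while}$ loop forces $\fFlow\bb{\iota}(\pi(u)) \join \pi(v) = \pi(v)$ for every edge $\arcsto{p}{u}{v}$, i.e.\ $\fFlow\bb{\iota}(\pi(u)) \under \pi(v)$. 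Monotonicity of $\fConc$ under $\under$ (the property $l_1 \under l_2 \iff \fConc(l_1) \subseteq \fConc(l_2)$) then lifts $\pDesc(\rho', \fFlow\bb{\iota}(\pi(u)))$ to $\pDesc(\rho', \pi(v))$, which is exactly validity of the new top frame. I would need a small lemma that $\pDesc$ is monotone in its second argument --- immediate from the $\fConc$ property --- and I would note the heap $\mu$ plays no role since $\pDesc$ and validity of states only constrain environments.

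The interprocedural rules are the interesting ones. For the $\cCall$ rule, $\stat{\fram{\rho}{\inst[u]{\stmtCall{x}{m\code{@}a}{y\code{@}b}}} \cdot S}{\mu} \stepsto_p \stat{\fram{\varnothing}{\inst{\stmtProc{m\code{@}a}{y'\code{@}b}}} \cdot \fram{\rho}{u} \cdot S}{\mu}$: the old top frame reappears unchanged (still described by $\pi(u)$ by hypothesis), and the newly pushed frame has environment $\varnothing$, which vacuously satisfies $\pDesc$ against anything, so this case is easy. The genuine push of arguments happens in the $\cProc$ rule: $\stat{\fram{\rho_1}{\inst[u]{\stmtProc{m\code{@}a}{y\code{@}b}}} \cdot \fram{\rho_2}{\inst[w]{\stmtCall{x}{m\code{@}a}{y'\code{@}b}}} \cdot S}{\mu} \stepsto_p \stat{\fram{\rho_0[y \mapsto \rho_2(y')]}{v} \cdot \fram{\rho_2}{w} \cdot S}{\mu}$ with $\arcsto{p}{u}{v}$. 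Here I must show $\pDesc(\rho_0[y \mapsto \rho_2(y')], \pi(v))$. The strategy is: from the fixpoint condition on the edge $\arcsto{p}{u}{v}$ out of the $\cProc$ node, $\fFlow\bb{\stmtProc{m@a}{y@b}}(\pi(u)) = \sigma_0[y \mapsto b] \under \pi(v)$; so it suffices to prove $\pDesc(\rho_0[y \mapsto \rho_2(y')], \sigma_0[y \mapsto b])$. For variables other than $y$ this reads $\rho_0(z) = 0 \in \fConc(\aNu) = \{0\}$, which holds; for $y$ it reads $\rho_2(y') \in \fConc(b)$. When $b \neq \qm$ this is exactly clause~\ref{rule:call} of the state well-formedness conditions (applied to the pair of frames $v_i = u$, $v_{i+1} = w$ of the original valid state $\xi$); when $b = \qm$ we are in the $\sAnn' = \{\aNa, \aNN\}$-restricted NPA setting where $\qm$ does not occur, so there is nothing to check. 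The $\cReturn$ rule is symmetric: $\stat{\fram{\rho_1}{\inst{\stmtReturn{y\code{@}a}}} \cdot \fram{\rho_2}{\inst[u]{\stmtCall{x}{m\code{@}a}{y'\code{@}b}}} \cdot S}{\mu} \stepsto_p \stat{\fram{\rho_2[x \mapsto \rho_1(y)]}{v} \cdot S}{\mu}$; I need $\pDesc(\rho_2[x \mapsto \rho_1(y)], \pi(v))$, reducing via the edge out of the $\cCall$ node to $\fFlow\bb{\stmtCall{x}{m@a}{y'@b}}(\pi(u)) = \pi(u)[x \mapsto a] \under \pi(v)$, so it suffices to check $\pDesc(\rho_2[x \mapsto \rho_1(y)], \pi(u)[x \mapsto a])$. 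For $z \neq x$ this follows from $\pDesc(\rho_2, \pi(u))$ (the frame $\fram{\rho_2}{u}$ was part of $\xi$, hence valid); for $x$ it requires $\rho_1(y) \in \fConc(a)$, which is precisely the side-condition $\dagger$ that licenses the return step (again vacuous or trivial when $a = \qm$, absent here).

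The main obstacle I anticipate is not any single case but keeping the stack-indexing bookkeeping honest: the local-soundness statement quoted in Section~\ref{sec:static} is phrased for a step $\stat{S' \cdot \fram{\rho}{\inst[v]{\iota}} \cdot S}{\mu} \stepsto_p \stat{\fram{\rho'}{v'} \cdot S}{\mu'}$, which already presumes the top frame is the active one and its predecessor frames are irrelevant; I need to make sure that in a valid multi-frame state each frame below the top really does witness the well-formedness clauses (especially clause~\ref{rule:call}) needed to feed the $\cProc$ and $\cReturn$ arguments, and that ``validity'' of $\xi$ (the $\pDesc(\rho_i, \pi(v_i))$ conjunction) is used in tandem with, not in place of, those structural clauses. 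Once the invariant is stated carefully --- every frame $i$ satisfies $\pDesc(\rho_i, \pi(v_i))$, and the structural clauses 1--5 from Section~\ref{sec:dyn-lang-semantics} are maintained, which is routine since steps preserve well-formedness of states --- each case is a one-line appeal to (i) local soundness or a direct computation, (ii) the Kildall fixpoint inequality on the relevant edge, and (iii) monotonicity of $\fConc$/$\pDesc$.
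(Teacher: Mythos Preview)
Your proposal is correct and follows essentially the same approach as the paper: local soundness of $\fFlow$, the fixpoint inequality $\fFlow\bb{\iota}(\pi(u)) \under \pi(v)$ along each edge (isolated as the paper's Lemma~1), and monotonicity of $\pDesc$ in its second argument. The paper compresses the argument into two branches by formulating local soundness with the $S'$ prefix so that it already covers the $\cProc$ and $\cReturn$ rules uniformly, whereas you unpack those two interprocedural cases and verify $\pDesc$ directly against the explicit $\fFlow$ output---but the underlying reasoning, including the appeals to state well-formedness clause~\ref{rule:call} and the $\dagger$ side condition, is the same.
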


\section{Gradual Null-Pointer Analysis}\label{sec:gradual}
In this section, we derive GNPA from NPA, presented previously (Sec. \ref{sec:static}). We proceed following the Abstracting Gradual Typing methodology introduced by Garcia \etal~\cite{garcia2016abstracting} in the context of gradual type systems, adapting it to fit the concepts of static analysis.

We present the GNPA's lifted semilattice (Sec. \ref{sec:gradual-semilattice}), flow and safety functions (Sec. \ref{sec:gradual-flow-safety}), and fixpoint algorithm (Sec. \ref{sec:gradual-fixpoint}). We also discuss how static (Sec. \ref{sec:gradual-warnings}) and run-time warnings (Sec. \ref{sec:gradual-runtime}) are generated by the analysis. Finally, Section \ref{sec:gradual-properties} establishes the main properties of GNPA.

Note, here, annotations may be missing, so we extend our set of annotations with $\qm$: $\sAnn = \{\aNN,~\aNa\} \union \{\qm\}$.

\subsection{Lifting the Semilattice} \label{sec:gradual-semilattice}

In this section, we lift the semilattice ($\sAbst$, $\under$, $\join$) (Sec. \ref{sec:static-semilattice}) by following the Abstracting Gradual Typing (AGT) framework \cite{garcia2016abstracting}. First, we extend the set of semilattice elements $\sAbst$ to the new set $\Grad{\sAbst} \supseteq \sAbst$:
\[
 \Grad{\sAbst} = \sAbst \union \{\qm\} \union \set{\qm[a]}{a \in \sAbst} =
\]
$$
 \{ \aNa,~ \aNN,~ \aNu,~ \qm,~ \qm[\aNN],~ \qm[\aNu] \}.
$$

Note that we equate the elements $\qm[\aNa]$ and $\aNa$ in $\Grad{\sAbst}$. In Section \ref{sec:gradual-qmsemantics}, we give the semantics of the new lattice elements resulting in $\top = \qm[\aNa] = \aNa$. If $\sAbst$ had a bottom element $\bot$, then $\bot = \qm[\bot]$ similarly.

The join $\join$ and partial order $\under$ are also lifted to their respective counterparts $\Gradjoin$ (Sec. \ref{sec:gradual-join}) and $\Gradunder$ (Sec. \ref{sec:gradual-under}). The resulting lifted semilattice $(\Grad{\sAbst}, \Gradjoin)$ with lifted relation $\Gradunder$ underpins the optimism in GNPA.

\subsubsection{Giving Meaning to Missing Annotations} \label{sec:gradual-qmsemantics}
A straightforward way to handle $\qm$ would be to make it the top element $\qm = \top$ or the bottom element $\qm = \bot$ of NPA's semilattice. However, neither choice is sufficient for our goal:
\begin{itemize}
  \item If $\qm = \bot$, then $\qm \under a$ for all $a \in
    \sAbst$ and $\fConc(\bot) = \varnothing$. As a result, if the return annotation of a procedure was $\qm$, then we could use the return value in any context without the analysis giving a warning. But, anytime an initialized variable is checked against the $\qm$ annotation, such as checking the non-null return value $y$ against the $\qm$ return annotation $\aNN \under \qm$, the check will fail as $a \not \under \qm$ for all $a \in \sAbst ~.~ a \neq \bot$.
  \item If we let $\qm = \top$ then we have $a \under \qm$ for all $a \in
    \sAbst$. Therefore, we can pass any argument to a parameter annotated
    as $\qm$ without the static part of GNPA giving a warning. But, if the return
    annotation of that procedure is $\qm$, then the analysis will produce false positives in caller contexts wherever the return value is dereferenced. In other words, our analysis would operate exactly as \code{PolyNull} for the example in Fig. \ref{fig:java-safereverse}, which is not ideal.
\end{itemize}

Our goal is to construct an analysis system that does not produce false positive static warnings when a developer omits an annotation. To achieve this, we draw on work in gradual typing \cite{garcia2016abstracting}. We define the injective concretization function $\gamma
: \Grad{\sAbst} \to \powerset^{+}(\sAbst)$ where $\Grad{\sAbst} \supseteq \sAbst$ is the lifted semilattice element set (Sec. \ref{sec:gradual-semilattice}):
\[
  \gamma(a) = \{a\}
  \qq{for}
  a \in \sAbst,
  \quad
  \gamma(\qm) = \sAbst,
  \qand  \gamma(\qm[a]) = \set{b \in \sAbst}{a \under b}.
\]
An element in $\sAbst$ is mapped to itself as it can only represent itself. In contrast, $\qm$ may represent any element in $\sAbst$ at all times to support optimism in all possible contexts. Further, $\qm[a]$ means ``$a$ or anything more general than it,'' in contrast to a gradual formula $\phi \land \qm$ that means ``$\phi$ or anything more specific than it'' \cite{bader2018gradual}. As a result, $\qm[a]$ does not play the intuitive role of ``supplying missing information,'' as it would in gradual verification. Instead, $\qm[a]$ is simply an artifact of our construction, which is why the only element of $\sAnn \setminus \sAbst$ is $\qm$.

Then, if $\gamma(\Grad{a}) \subseteq \gamma(\Grad{b})$ for some $\Grad{a}, \Grad{b} \in \Grad{\sAbst}$, we write $\Grad{a} \preciser \Grad{b}$ and say that \emph{$\Grad{a}$ is more precise than $\Grad{b}$}. Further, $\iota_1 \preciser \iota_2$ means that 1) the two instructions are equal except for their annotations, and 2) the annotations in $\iota_1$ are more precise than the corresponding annotations in $\iota_2$.

\subsubsection{Lifted Join $\Gradjoin$} \label{sec:gradual-join}

We begin by introducing a semilattice definition \cite{davey2002introduction}, which states that a semilattice is an algebraic structure $(S, \join)$ where for all $x, y, z \in S$ the following hold:
\begin{itemize}
  \item $x \join (y \join z) = (x \join y) \join z$ (associativity)
  \item $x \join y = y \join x$ (commutativity)
  \item $x \join x = x$ (idempotency)
\end{itemize}
Then, we write $x \under y$ when $x \join y = y$ and it can be shown this $\under$ is a partial order. Recall that NPA uses $\join$ in Algorithm~\ref{alg:kildall} to compute a fixpoint that describes the behavior of a program $p$. The fixpoint can only be reached when $\join$ is idempotent. Similarly, $\join$ must be commutative and associative so that program instructions can be analyzed in any order. Thus, our extended join operation $\Gradjoin : \Grad{\sAbst} \times \Grad{\sAbst} \to
\Grad{\sAbst}$ must be associative, commutative, and idempotent making $(\Grad{\sAbst}, \Gradjoin)$ a join-semilattice.

To define such a function we turn to insights from gradual typing \cite{garcia2016abstracting}. We define an abstraction function $\alpha : \powerset^{+}(\sAbst) \to
\Grad{\sAbst}$, which forms a Galois connection with $\gamma$:
\[
  \alpha(\Coll{a})
  = \gamma^{-1}\qty(
        \Intersect_{\substack{
            \Grad{b} \in \Grad{\sAbst} \\
            \gamma(\Grad{b}) \supseteq \Coll{a}
          }}
        \gamma(\Grad{b})
      )
\]
where, for $a \in \sAbst$, $\gamma^{-1}$ is:
\begin{mathpar}
\gamma^{-1}(\{a\}) = a
\and
\gamma^{-1}(\sAbst) = \qm
\and
\gamma^{-1}(\{b \in \sAbst : a \under b \}) = \qm[a].
\end{mathpar}
Then we define the join of $\Grad{a}, \Grad{b} \in \Grad{\sAbst}$ as follows:
\[
  \Grad{a} \Gradjoin \Grad{b}
  = \alpha(\set{a \join b}{\text{
      $a \in \gamma(\Grad{a})$
      and $b \in \gamma(\Grad{b})$}
  })
\]
For example,
\begin{flalign}
\aNN \Gradjoin \qm &= \alpha(\{ a \join b : a \in \{\aNN\} ~\text{and}~ b \in \sAbst \}) \label{join-eq-alpha-start}\\
&= \alpha(\{ \aNN,~ \aNa \}) \label{join-eq-alpha-set}\\
&= \gamma^{-1}\left( \gamma(\qm[\aNN]) \intersect \gamma(\qm) \right) \label{join-eq-intersect-gamma}\\
&= \gamma^{-1}\left( \{ \aNN,~ \aNa \} \intersect \sAbst \right) \label{join-eq-intersect-sets}\\
&= \gamma^{-1}\left( \{ \aNN,~ \aNa \} \right) \label{join-eq-intersect-result}\\
&= \qm[\aNN] \label{join-eq-result}
\end{flalign}
That is, the join of all the $\sAbst$ elements represented by $\aNN$ and $\qm$ results in the set $\{\aNN,~ \aNa\}$ (\ref{join-eq-alpha-start}, \ref{join-eq-alpha-set}). Applying $\alpha$ to this set is equivalent to applying $\gamma^{-1}$ to $\gamma(\qm[\aNN]) \intersect \gamma(\qm)$ (\ref{join-eq-intersect-gamma}); because, the only $\Grad{\sAbst}$ elements that represent both $\aNN$ and $\aNa$ are $\qm[\aNN]$ and $\qm$. The intersection of $\gamma(\qm[\aNN])$ and $\gamma(\qm)$ is $\{\aNN,$ $\aNa\}$ (\ref{join-eq-intersect-sets}, \ref{join-eq-intersect-result}), so we are really applying  $\gamma^{-1}$ to $\{\aNN,~ \aNa\}$ (\ref{join-eq-intersect-result}). Therefore, $\aNN \Gradjoin \qm = \qm[\aNN]$ (\ref{join-eq-result}).
Notice, the intersection of the representative sets $\gamma(\qm[\aNN])$ and $\gamma(\qm)$ of $\{\aNN,~\aNa\} = \Coll{a}$ is used to find the most precise element in $\Grad{\sAbst}$ that can represent $\Coll{a}$.

Now we return to the properties of $\Gradjoin$. Since $\join$ is commutative, we have that $\Gradjoin$ is commutative. Idempotency is also not too onerous: it is equivalent to the condition that every element of $\Grad{\sAbst}$ represents a subsemilattice of $\sAbst$. That is, for every $\Grad{a} \in \Grad{\sAbst}$ and $a_1, a_2 \in \gamma(\Grad{a})$, we must have $a_1 \join a_2 \in \gamma(\Grad{a})$. This is true by construction. Associativity is tricky and motivates our complex definition of $\Grad{\sAbst}$. Ideally, $\Grad{\sAbst}$ would be defined simply as $\sAbst \union \{\qm\}$, however in this case $\Gradjoin$ is not associative:
  \begin{align*}
    \aNu \Gradjoin (\aNN \Gradjoin \qm)
     & = \aNu \Gradjoin \qm \\
     & = \qm                                  \\
     & \neq \aNa                              \\
     & = \aNa \Gradjoin \qm \\
     & = (\aNu \Gradjoin \aNN) \Gradjoin \qm.
  \end{align*}
Fortunately, our definition of $\Grad{\sAbst}$ which also includes the intermediate optimistic elements $\qm[\aNN]$ and $\qm[\aNu]$ results in an associative $\Gradjoin$ function and a finite-height semilattice $(\Grad{\sAbst},~ \Gradjoin)$. Figure~\ref{fig:semilattice-lifted} shows the semilattice structure induced by $\Gradjoin$.

\begin{figure}
  \centering
  \begin{tikzpicture}
    \node (t) {$\aNa$};
    \node (aq) [below left = 0.5cm and -0.7cm of t] {$\qm[\aNu]$};
    \node (bq) [below right = 0.5cm and -0.7cm of t] {$\qm[\aNN]$};
    \node (q) [below right = 0.5cm and 0cm of aq] {$\qm$};
    \node (a) [below left = 0.5cm and -0.7cm of aq] {$\aNu$};
    \node (b) [below right = 0.5cm and -1cm of bq] {$\aNN$};

    \draw (t) -- (aq);
    \draw (t) -- (bq);
    \draw (aq) -- (a);
    \draw (aq) -- (q);
    \draw (bq) -- (q);
    \draw (bq) -- (b);
  \end{tikzpicture}
  \caption{The semilattice structure induced by the lifted join $\Gradjoin$. Specifically, this is the Hasse diagram of the partial order
    $\set{(\Grad{a}, \Grad{b})}{\Grad{a} \Gradjoin \Grad{b} = \Grad{b}}$.}
  \label{fig:semilattice-lifted}
\end{figure}

\subsubsection{Lifted Order $\Gradunder$}\label{sec:gradual-under}

Now it is fairly straightforward to construct $\Gradunder$. Recall, NPA emits static warnings when the fixpoint results disagree with the safety function, according to the partial order $\under$. The fixpoint results and the safety function now return elements in $\Grad{\sAbst}$, so we lift $\under$ to $\Gradunder ~\subseteq \Grad{\sAbst} \times \Grad{\sAbst}$ using the concretization function $\gamma$:
\[
  \Grad{a} \Gradunder \Grad{b}
  \quad\iff\quad \exists ~.\quad
  a \in \gamma(\Grad{a})
  \qand
  b \in \gamma(\Grad{b})
  \qq{such that}
  a \under b
  \qq{for} \Grad{a}, \Grad{b} \in \Grad{\sAbst}.
\]
Figure~\ref{fig:order} gives the lifted order relation $\Gradunder$ in graphical form.

\begin{figure}[]
  \centering
  \begin{tikzpicture}[>=stealth]
    \graph[counterclockwise=6]{
      t/"$\top$";
      aq/"$\qm[A]$";
      a/"$A$";
      q/"$\qm$";
      b/"$B$";
      bq/"$\qm[B]$";

      aq -> t;
      a -> t;
      q -> t;
      b -> t;
      bq -> t;

      t -> aq;
      a -> aq;
      q -> aq;
      bq -> aq;

      aq -> a;
      q -> a;

      t -> q;
      aq -> q;
      a -> q;
      b -> q;
      bq -> q;

      q -> b;
      bq -> b;

      t -> bq;
      b -> bq;
      q -> bq;
      aq -> bq;

      a -> bq;
      b -> aq;
    };
  \end{tikzpicture}
  \caption{The lifted partial order, where each
    directed edge $\Grad{a} \to \Grad{b}$ means $\Grad{a} \Gradunder \Grad{b}$.
    (Self-loops are omitted). Here, \aNa{} is abbreviated $\top$, and \aNu{} and
    \aNN{} are abbreviated $A$ and $B$ respectively.}
  \label{fig:order}
\end{figure}

The $\Grad{\under}$ predicate is a maximally permissive version of the $\under$ predicate for $\qm[\aNN]$, $\qm[\aNu]$, and $\qm$. For example, $\qm \Gradunder \aNN$ since $\gamma(\qm) = \{\aNN,~ \aNu,~ \aNa\}$, $\gamma(\aNN) = \{\aNN\}$, and $\aNN \under \aNN$. By similar reasoning, $\aNN \Gradunder \qm$. In fact, $\qm \Gradunder a \Gradunder \qm$, $\qm[\aNN] \Gradunder a \Gradunder \qm[\aNN]$, and $\qm[\aNu] \Gradunder a \Gradunder \qm[\aNu]$ for $a \in \sAbst$. So, clearly $\Gradunder$ is not a partial order. The $\Gradunder$ predicate must be maximally permissive to support the optimism used in the \code{safeReverse} example from Figure \ref{fig:java-safereverse} (Sec. \ref{sec:action-falsepos}): calls to \code{safeReverse} with null and non-null arguments are valid and dereferences of its return values are also valid.
However, $\Gradunder$ is the same as $\under$ when both of its arguments come from $\sAbst$, \eg $\aNN \Gradunder \aNa$ and $\aNa \not \Gradunder \aNN$. This allows our gradual analysis to apply NPA where annotations are complete enough to support it.

\subsubsection{Properties}\label{subsec:semilattice-properties}
We previously mentioned some of the properties which ($\Grad{\sAbst}$, $\Gradjoin$) satisfy. Here, we formally state them, and their proofs can be found in the appendix of the full version of this paper.

\begin{proposition}\label{prop:semilattice}
  $(\Grad{\sAbst},~ \Gradjoin)$ is a semilattice; in other words, $\Gradjoin$ is associative, idempotent, and commutative.
\end{proposition}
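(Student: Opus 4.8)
The idea is to move the whole question, via the Galois connection, to a concrete semilattice in which associativity is obvious. For $X,Y \subseteq \sAbst$ let me write $X \star Y = \set{a \join b}{a \in X \text{ and } b \in Y}$ for the pointwise (Minkowski) join. This operation is associative and commutative on all of $\powerset(\sAbst)$, directly inherited from $\join$ on $\sAbst$ by rewriting $(a_1 \join b_1)\join(a_2 \join b_2) = (a_1 \join a_2)\join(b_1 \join b_2)$; and on any subset that is closed under $\join$ it is also idempotent. Let $\mathcal{C} = \set{\gamma(\Grad{a})}{\Grad{a} \in \Grad{\sAbst}}$ be the image of $\gamma$; since $\gamma$ is injective, $\gamma : \Grad{\sAbst} \to \mathcal{C}$ is a bijection. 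Enumerating, $\mathcal{C}$ consists of the six sets $\{\aNu\},\{\aNN\},\{\aNa\},\{\aNu,\aNa\},\{\aNN,\aNa\}$ and $\sAbst$, which are exactly the nonempty subsets of $\sAbst$ closed under $\join$ (the nonempty subsemilattices).

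The two facts I would establish about $\mathcal{C}$ are: (i) $\mathcal{C}$ is closed under $\star$ — for $X,Y \in \mathcal{C}$ the set $X \star Y$ is nonempty and, by the rewriting above together with join-closure of $X$ and $Y$, again closed under $\join$, hence in $\mathcal{C}$; and (ii) the intersection of any subfamily of $\mathcal{C}$ all of whose members contain a fixed nonempty set is again in $\mathcal{C}$ (an intersection of join-closed sets is join-closed, and it contains that nonempty set). Fact (ii), applied to the definition of $\alpha$, yields $\alpha(X) = \gamma^{-1}(X)$ for every $X \in \mathcal{C}$, i.e.\ $\alpha \circ \gamma = \mathrm{id}_{\Grad{\sAbst}}$. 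Both checks are small and finite given the explicit six-element list; the reason the ``intermediate'' elements $\qm[\aNN]$ and $\qm[\aNu]$ were put into $\Grad{\sAbst}$ is precisely to make (i) hold, which is exactly where the simpler candidate $\sAbst \union \{\qm\}$ breaks down.

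Combining these, for all $\Grad{a},\Grad{b}$ I get $\gamma(\Grad{a} \Gradjoin \Grad{b}) = \gamma(\alpha(\gamma(\Grad{a}) \star \gamma(\Grad{b}))) = \gamma(\Grad{a}) \star \gamma(\Grad{b})$, using (i) to know the argument of $\alpha$ lies in $\mathcal{C}$ and then $\alpha|_{\mathcal{C}} = \gamma^{-1}$. So $\gamma$ is a bijective homomorphism from $(\Grad{\sAbst}, \Gradjoin)$ onto $(\mathcal{C}, \star)$. Since $\star$ is associative, commutative and idempotent on $\mathcal{C}$, so is $\Gradjoin$; concretely, $\Grad{a} \Gradjoin (\Grad{b} \Gradjoin \Grad{c})$ and $(\Grad{a} \Gradjoin \Grad{b}) \Gradjoin \Grad{c}$ both have image $\gamma(\Grad{a}) \star \gamma(\Grad{b}) \star \gamma(\Grad{c})$ under the injective $\gamma$, hence are equal. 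Commutativity and idempotency fall out of the same isomorphism (and were already sketched in the text), so the one genuinely delicate point is fact (i) — that the six representable sets are closed under the pointwise join. If one preferred to avoid the Galois-connection packaging, a direct case analysis over the $6^3$ triples (reducible by commutativity to $56$ multisets) also works, but I would keep the structural argument since it explains why $\Grad{\sAbst}$ had to be enlarged beyond $\sAbst \union \{\qm\}$.
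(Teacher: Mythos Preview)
Your argument is correct. The key observation that $\gamma(\Grad{a}\Gradjoin\Grad{b})=\gamma(\Grad{a})\star\gamma(\Grad{b})$, established via closure of $\mathcal{C}$ under $\star$ and $\alpha|_{\mathcal{C}}=\gamma^{-1}$, cleanly reduces the problem to the associativity, commutativity, and idempotency of the Minkowski join on nonempty subsemilattices, all of which are immediate.

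The paper takes a somewhat different route. Commutativity and idempotency are dispatched in the main text (the latter exactly by noting, as you do, that each $\gamma(\Grad{a})$ is a subsemilattice). For associativity, the paper appeals to Proposition~\ref{prop:lifting-small}, whose proof writes out an explicit closed-form case table for $\Grad{a}\Gradjoin\Grad{b}$ (e.g.\ $a\Gradjoin\qm[b]=\qm[(a\join b)]$, $\qm[a]\Gradjoin\qm[b]=\qm[(a\join b)]$, etc.); associativity is then read off from that table together with the surrounding discussion of the ``full lifting'' by subsemilattices. Your approach is the structural version of that same idea made fully explicit: rather than computing the table and checking, you exhibit $\gamma$ as an isomorphism onto $(\mathcal{C},\star)$ and inherit all three laws at once. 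The paper's route has the side benefit of yielding the concrete formula for $\Gradjoin$ used elsewhere; your route makes transparent \emph{why} enlarging $\Grad{\sAbst}$ to include $\qm[\aNu]$ and $\qm[\aNN]$ is exactly what is needed (closure of $\mathcal{C}$ under $\star$), which the paper only motivates by a counterexample.
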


\begin{proposition}\label{prop:finite-height}
  If the height of $(\sAbst,~ \join)$ is $n > 0$, then the height of
  $(\Grad{\sAbst},~ \Gradjoin)$ is $n + 1$ (in particular, $(\Grad{\sAbst},~ \Gradjoin)$ has finite height).
\end{proposition}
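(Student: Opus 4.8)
The plan is to work throughout with the partial order $\leq$ on $\Grad{\sAbst}$ \emph{induced} by $\Gradjoin$ --- that is, $\Grad{a} \leq \Grad{b}$ iff $\Grad{a} \Gradjoin \Grad{b} = \Grad{b}$, which is a genuine partial order by Proposition~\ref{prop:semilattice} and is the one whose Hasse diagram is drawn in Figure~\ref{fig:semilattice-lifted} --- since the height of $(\Grad{\sAbst},\Gradjoin)$ is by definition the supremum, over $\leq$-chains, of their length (a chain with $m$ elements having length $m-1$). I will use that $n>0$ forces $|\sAbst|\geq 2$, that a finite-height join-semilattice has a greatest element $\top$, and, as for the concrete $\sAbst$, that $\sAbst$ has no least element (the case with a bottom $\bot$ is analogous, using $\qm=\qm[\bot]$). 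The argument reduces to evaluating $\Gradjoin$, via its definition in terms of $\alpha$ and $\gamma$, on a handful of pairs, which pins down $\leq$ precisely enough for matching lower and upper bounds on chain length.

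For the lower bound I would pick a maximal chain $a_0 \strictunder a_1 \strictunder \dots \strictunder a_n$ in $(\sAbst, \under)$, noting it has $n+1$ elements and $a_n = \top$, and then show that
\[
  a_0 \;\leq\; \qm[a_0] \;\leq\; \qm[a_1] \;\leq\; \dots \;\leq\; \qm[a_n] = \top
\]
is a strict $\leq$-chain of $n+2$ distinct elements, hence of length $n+1$. The computations are short: from $\gamma(\qm[c]) = \{b \in \sAbst : c \under b\}$ one gets $\{b \join b' : b \in \gamma(\qm[c]),\, b' \in \gamma(\qm[c'])\} = \gamma(\qm[c \join c'])$ and $\{a_0 \join b' : b' \in \gamma(\qm[a_0])\} = \gamma(\qm[a_0])$, so $a_0 \Gradjoin \qm[a_0] = \qm[a_0]$ and $\qm[a_i] \Gradjoin \qm[a_{i+1}] = \qm[a_{i+1}]$; the elements are distinct because $\gamma$ is injective, the sets $\{b : a_i \under b\}$ are pairwise distinct, and $\{b : a_0 \under b\} \neq \{a_0\}$ as $a_0 \neq \top$; finally $\qm[a_n] = \qm[\top] = \top$.

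For the upper bound I would first establish, again by direct $\Gradjoin$-computations, two structural facts: (a) $\qm$ is $\leq$-minimal, and every element strictly above it has the form $\qm[c]$; and (b) for $c,a \in \sAbst$ one has $\qm[c] \leq a$ only when $a = \top$ --- equivalently, every element $\geq \qm[c]$ is again of the form $\qm[c']$ (with $\top$ read as $\qm[\top]$). Then I would introduce $\psi : \Grad{\sAbst} \setminus \{\qm\} \to \sAbst$ with $\psi(\Grad{a}) = \min \gamma(\Grad{a})$, so $\psi(a) = a$ and $\psi(\qm[c]) = c$. From $\qm[c] \Gradjoin \qm[c'] = \qm[c \join c']$ one sees that $\psi$ restricts to a \emph{strict} order-embedding on the $\qm[c]$-elements, while in general $\psi$ is weakly order-preserving with its only non-singleton fibres being the pairs $\{a, \qm[a]\}$, $a \neq \top$; hence along a $\leq$-chain $\psi$ strictly increases except possibly across one pair $a \strictunder \qm[a]$, and by (b) at most one such pair can occur, since past $\qm[a]$ the chain stays among $\qm[c']$-elements. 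Now take any $\leq$-chain of length $\ell$: if $\qm$ occurs it is the bottom (by (a)) and the other $\ell$ elements are all $\qm[c]$-elements (by (a) then (b)), so $\psi$ embeds them as an $\ell$-element strict $\under$-chain; if $\qm$ does not occur, $\psi$ maps the $\ell+1$ elements to a sequence with consecutive terms $\under$-related and at most one repetition, again giving a strict $\under$-chain with at least $\ell$ elements. Either way $\sAbst$ has a strict chain with $\geq \ell$ elements, so $\ell \leq n+1$; with the lower bound, the height is exactly $n+1$.

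The main obstacle is the order-theoretic bookkeeping behind fact (b): once a chain reaches a $\qm$-decorated element it can never drop back to a bare $\sAbst$-element other than $\top$, and this is precisely what caps chain length. Proving (a), (b), and the statement about the fibres of $\psi$ means carrying out the $\Gradjoin$-computations for an arbitrary finite-height join-semilattice, not just the three-element $\sAbst$, so one has to reason abstractly about up-sets ${\uparrow}c$ and keep track of which of the three possible shapes of concretization ($\{a\}$, ${\uparrow}c$, or all of $\sAbst$) each application of $\alpha$ returns. The remaining steps are routine.
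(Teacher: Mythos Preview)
Your proposal is correct and follows essentially the same route as the paper's own proof: project a $\Gradjoin$-chain to $\sAbst$ via the ``root'' map (your $\psi$), use the fact that once a chain reaches a $\qm[c]$-element it cannot return to a bare $\sAbst$-element other than $\top$ (your fact~(b), the paper's inductive step), and conclude that the projection collapses at most one step. The only differences are cosmetic---you make $\psi$ and facts (a),(b) explicit, extend the witnessing chain at the bottom ($a_0 \strictunder \qm[a_0] \strictunder \cdots$) rather than near the top ($\cdots \strictunder a_{n-1} \strictunder \qm[a_{n-1}] \strictunder \qm[a_n]$), and argue the upper bound directly instead of by contradiction.
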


\subsection{Lifting the Flow \& Safety Functions} \label{sec:gradual-flow-safety}
Now both instructions and abstract states ($\Grad{\sigma} \in \Grad{\sMap} = \sVar \pto \Grad{\sAbst}$) may contain optimistic abstract values. Therefore, similar to lifting the join $\Gradjoin$, we follow the AGT \emph{consistent function lifting} approach \cite{garcia2016abstracting} when defining GNPA's
flow function $\Grad{\fFlow} : \sInst \times \Grad{\sMap} \to \Grad{\sMap}$ for this new domain.

Specifically, for $\iota \in \sInst$ and
$\Grad{\sigma} = \set{x \mapsto \Grad{a}_x}{x \in \sVar} \in \Grad{\sMap}$, we
define
\begin{flalign*}
   & \Grad{\fFlow}\bb{\stmtCall{z}{m\code{@}a}{y\code{@}b}}(\Grad{\sigma}) = \{x \mapsto \alpha(\{(\fFlow\bb{\stmtCall{z}{m\code{@}a'}{y\code{@}b'}}(\sigma'))(x)
   \\
   & \hspace{135pt} : a' \in \gamma(a) \wedge b' \in \gamma(b) \wedge \sigma' \in \Sigma\}) :x \in \sVar\} \\
   & \Grad{\fFlow}\bb{\stmtProc{m\code{@}a}{y\code{@}b}}(\Grad{\sigma}) = \{x \mapsto \alpha(\{(\fFlow\bb{\stmtProc{m\code{@}a'}{y\code{@}b'}}(\sigma'))(x)
   \\
   & \hspace{135pt} : a' \in \gamma(a) \wedge b' \in \gamma(b) \wedge \sigma' \in \Sigma\}) :x \in \sVar\} \\
   & \Grad{\fFlow}\bb{\iota}(\Grad{\sigma}) = \set{x \mapsto \alpha(\set{(\fFlow\bb{\iota}(\sigma'))(x)}{\sigma' \in \Sigma})}{x \in \sVar}
   \quad \text{otherwise}\\
   & \\
   & \text{where} \quad \Sigma = \set{\set{x \mapsto a_x}{x \in \sVar}}{a_x \in \gamma(\Grad{a}_x)\text{ for all }x \in \sVar}.
\end{flalign*}
Note that the procedure call and procedure entry instructions are the only instructions in $\fFlow$'s domain that may contain $\qm$ annotations, so the corresponding $\fFlow$ rules are lifted with respect to those annotations. Similarly, all rules are lifted with respect to their abstract states.

Recall that we defined the predicate $\pDesc$ on $\sEnv \times \sMap$ to express the local soundness of $\fFlow$. For $\Grad{\fFlow}$, we lift $\pDesc$ to $\Grad{\pDesc}$ on $\sEnv \times \Grad{\sMap}$ such that it is maximally permissive like the $\Gradunder$ predicate:
\[
  \Grad{\pDesc}(\rho, \Grad{\sigma})
  \quad\iff\quad
  \pDesc(\rho, \sigma)\text{ for some }\sigma \in \Sigma
\]
\[
\text{where}~ \Sigma ~\text{is constructed in the same way as for}~ \Grad{\fFlow}.
\]

\noindent Finally, we again follow the consistent function lifting methodology to construct $\Grad{\fSafe} : \sInst \times \sVar \to \Grad{\sAbst}$ from $\fSafe : \sInst' \times \sVar \to \sAbst$:
\begin{flalign*}
    & \Grad{\fSafe}\bb{\stmtCall{z}{m\code{@}a}{y\code{@}b}}(x) = \alpha(\set{\fSafe\bb{\stmtCall{z}{m\code{@}a'}{y\code{@}b'}}(x)}{a' \in \gamma(a) \wedge b' \in \gamma(b)})\\
    & \Grad{\fSafe}\bb{\stmtProc{m\code{@}a}{y\code{@}b}}(x) = \alpha(\set{\fSafe\bb{\stmtProc{m\code{@}a'}{y\code{@}b'}}(x)}{a' \in \gamma(a) \wedge b' \in \gamma(b)})\\
    & \Grad{\fSafe}\bb{\stmtReturn{y\code{@}a}}(x) = \alpha(\set{\fSafe\bb{\stmtReturn{y\code{@}a'}}(x)}{a' \in \gamma(a)})\\
    & \Grad{\fSafe}\bb{\iota}(x) = \alpha(\fSafe\bb{\iota}(x)) \quad \text{otherwise}
\end{flalign*}

Other than the casewise-defined $\fFlow$ rules for $\land$ and $\lor$, the lifted $\Grad{\fFlow}$ and $\Grad{\fSafe}$ functions simplify down to the same computation rules as $\fFlow$ and $\fSafe$ as shown in Figure~\ref{fig:flow} and Figure~\ref{fig:safe} respectively, replacing $\fFlow$ with $\Grad{\fFlow}$ and $\fSafe$ with $\Grad{\fSafe}$.

\subsection{Lifting the Fixpoint Algorithm} \label{sec:gradual-fixpoint}
To lift the fixpoint algorithm, we simply plug $\Grad{\fFlow}$ and
$\Gradjoin$ into Algorithm~\ref{alg:kildall} to compute $\Grad{\pi} =
\fKildall(\Grad{\fFlow}, \Gradjoin, p) : \sVert_p \to \Grad{\sMap}$ for any $p \in \sProg$.

\subsection{Static Warnings} \label{sec:gradual-warnings}
Using the lifted safety function, we say that a partially-annotated program $p
\in \sProg$ is \emph{statically valid} if
\[
  \text{for all}\quad
  \inst[v]{\iota} \in \sVert_p
  \qand
  x \in \sVar
  \qc
  \Grad{\pi}(v) = \Grad{\sigma}
  \quad\implies\quad
  \Grad{\sigma}(x) \Gradunder \Grad{\fSafe}\bb{\iota}(x)
\]
\[
  \text{where}~ \Grad{\pi} = \fKildall(\Grad{\fFlow}, \Gradjoin, p).
\]
Each piece of GNPA's static system ($(\Grad{\sAbst}, \Gradjoin)$, $\Gradunder$, $\Grad{\fFlow}$, $\Grad{\fSafe}$, and the fixpoint algorithm) is designed to be maximally optimistic for missing annotations. Therefore, the resulting system will not produce false positive warnings due to missing annotations. The system is also designed to apply NPA where annotations are available to support it, so it will still warn about violations of procedure annotations or null object dereferences where possible. See Section \ref{sec:action-falsepos} for more information.

\subsection{Dynamic Checking} \label{sec:gradual-runtime}
GNPA's static system reduces false positive warnings at the cost of soundness. For example,  as in Section \ref{sec:action-falseneg}, the analysis may assume a variable with a $\qm$ annotation is non-null to satisfy an object dereference when the variable is actually null. In order to avoid false negatives and ensure that our gradual analysis is sound, we modify the semantics of PICL to insert run-time checks where the analysis may be unsound.
That is, if $p$ is \emph{statically valid} and there are program points $\inst[v]{\iota}$ such that
\[
  a \not\under \bigjoin \gamma(\Grad{\fSafe}\bb{\iota}(x))
  \qq{for some}
  x \in \sVar
  \qand
  a \in \gamma((\Grad{\pi}(v))(x)),
\]
then a run-time check must be inserted at those points to ensure the value of $x$ is in $\fConc(\bigjoin \gamma(\Grad{\fSafe}\bb{\iota}(x)))$.

More precisely, we define a dedicated error state $\cError$ and expand the set
of run-time states to be $\Grad{\sState}_p = \sState_p \union \{\cError\}$. Then
we define a restricted semantics $\Gradstepsto_p$ on $\Grad{\sState}_p \times
\Grad{\sState}_p$ as follows. Let $\xi \in \sState_p$. If
\[
  \xi = \stat{\fram{\rho}{\inst{\iota}} \cdot S}{\mu}
  \qand
  \lnot\Grad{\pDesc}(\rho, \set{x \mapsto \Grad{\fSafe}\bb{\iota}(x)}{x \in \sVar})
\]
then $\xi \Gradstepsto_p \cError$. If there is some $\xi' \in \sState_p$ such that $\xi \stepsto_p \xi'$, then $\xi \Gradstepsto_p \xi'$. Otherwise, there is no $\Grad{\xi}' \in \Grad{\sState}_p$ such that $\xi \Gradstepsto_p \xi'$.

\subsection{Gradual Properties} \label{sec:gradual-properties}
GNPA is sound, \emph{conservative extension} of NPA---the static system is applied in full to programs with complete annotations, and adheres to the gradual guarantees inspired by Siek \etal~\cite{siek2015refined}. The gradual guarantees ensure losing precision is harmless, \ie increasing the number of missing annotations in a program does not break its validity or reducibility.

To formally present each property, we first extend the notion of a valid state. Let $p \in \sProg$. A state $\xi = \stat{\fram{\rho_1}{v_1} \cdot \fram{\rho_2}{v_2} \cdots \fram{\rho_n}{v_n} \cdot \nil}{\mu} \in \sState_p$ is valid if
\[
  \text{for all}\quad
  1 \leq i \leq n
  \qc
  \Grad{\pDesc}(\rho_i, \Grad{\pi}(v_i))
  \quad \text{where} \quad \Grad{\pi} = \fKildall(\Grad{\fFlow}, \Gradjoin, p).
\]
Then, for fully-annotated programs, GNPA and the modified semantics are conservative extensions of NPA and PICL's semantics, respectively.

\begin{proposition}[conservative static extension]\label{prop:conservative-static-extension}~\\
  If $p \in \sProg'$ then $\fKildall(\fFlow, \join, p) =
    \fKildall(\Grad{\fFlow}, \Gradjoin, p)$.
\end{proposition}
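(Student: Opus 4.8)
The plan is to reduce the statement to two purely local facts --- the lifted join and the lifted flow function agree with their static counterparts on abstract states all of whose values lie in $\sAbst$ --- and then to run the two invocations of Algorithm~\ref{alg:kildall} in lockstep.

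First I would record the elementary fact that $\alpha(\{a\}) = a$ for every $a \in \sAbst$: the family of $\Grad{b} \in \Grad{\sAbst}$ with $\gamma(\Grad{b}) \supseteq \{a\}$ contains $\Grad{b} = a$ itself, for which $\gamma(a) = \{a\}$ is minimal, so the intersection in the definition of $\alpha$ is $\{a\}$ and $\gamma^{-1}(\{a\}) = a$. From this, $a \Gradjoin b = \alpha(\{a \join b\}) = a \join b$ for all $a, b \in \sAbst$; and since the ``one side undefined'' clauses in the pointwise extensions of $\join$ and $\Gradjoin$ to maps are textually the same, $\sigma_1 \Gradjoin \sigma_2 = \sigma_1 \join \sigma_2$ whenever $\sigma_1, \sigma_2 \in \sMap$. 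Next, for $\iota \in \sInst'$ and $\sigma \in \sMap$: because $\gamma$ is the singleton map on $\sAbst$, the refinement set $\Sigma$ built from $\sigma$ in the definition of $\Grad{\fFlow}$ collapses to $\{\sigma\}$; and because $p$ is fully annotated, every annotation occurring in $\iota$ is in $\sAbst$, so the $\gamma(a)$ and $\gamma(b)$ ranges in the procedure-call and procedure-entry clauses of $\Grad{\fFlow}$ are singletons too. Hence $\Grad{\fFlow}\bb{\iota}(\sigma) = \set{x \mapsto \alpha(\{(\fFlow\bb{\iota}(\sigma))(x)\})}{x \in \sVar} = \fFlow\bb{\iota}(\sigma)$ by the fact above; in particular $\Grad{\fFlow}\bb{\iota}$ maps $\sMap$ into $\sMap$, agreeing there with $\fFlow : \sInst' \times \sMap \to \sMap$. (The analogous identity $\Grad{\fSafe}\bb{\iota}(x) = \fSafe\bb{\iota}(x)$ holds by the same reasoning, although it is not needed for this proposition.)

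With these facts in hand I would fix one order in which the worklist $V$ is drained and run both $\fKildall(\fFlow, \join, p)$ and $\fKildall(\Grad{\fFlow}, \Gradjoin, p)$ under it --- legitimate because the result of Algorithm~\ref{alg:kildall} does not depend on this order, and because both invocations terminate (the gradual one by Proposition~\ref{prop:finite-height}). Then I would show, by induction on the number of iterations of the loop, that the two runs have the same state $(\pi, V)$ after each iteration, with $\pi(v) \in \sMap$ for every $v$. Initially both have $\pi = \set{v \mapsto \varnothing}{v \in \sVert_p}$ and $V = \sVert_p$, and $\varnothing \in \sMap$. In the inductive step both runs draw the same $\inst[v]{\iota}$; since $\iota \in \sInst'$ and $\pi(v) \in \sMap$, the second fact gives $\sigma' \gets \Grad{\fFlow}\bb{\iota}(\pi(v)) = \fFlow\bb{\iota}(\pi(v)) \in \sMap$, the same element in both runs; then for each successor $u$, since $\pi(u) \in \sMap$, the first fact makes the guard $\sigma' \Gradjoin \pi(u) \neq \pi(u)$ and the update $\pi(u) \gets \pi(u) \Gradjoin \sigma'$ coincide with their static versions, and the updated $\pi(u)$ stays in $\sMap$. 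So the two executions never diverge and return equal $\pi$.

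I do not expect a real obstacle: all the substance is in the first step, i.e.\ verifying that the AGT lifting apparatus ($\alpha$, $\gamma$, the refinement set $\Sigma$) degenerates on singletons, and observing that the sole use of the hypothesis $p \in \sProg'$ beyond ``$\iota \in \sInst'$'' is precisely what makes $\gamma(a)$ and $\gamma(b)$ singletons in the procedure-call and procedure-entry clauses --- which is what guarantees that no element of $\Grad{\sAbst} \setminus \sAbst$ is ever produced during the run.
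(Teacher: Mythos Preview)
Your proposal is correct and follows essentially the same approach as the paper: establish that $\Gradjoin$ and $\Grad{\fFlow}$ agree with $\join$ and $\fFlow$ on $\sAbst$-valued data via $\alpha(\{a\}) = a$, then conclude that the two Kildall runs coincide. The paper's proof is terser---it simply says ``because $\pi = \fKildall(\fFlow, \join, p)$ is well-defined, it follows that $\fKildall(\Grad{\fFlow}, \Gradjoin, p) = \pi$''---whereas you spell out the lockstep induction explicitly, but the underlying argument is the same.
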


\begin{proposition}[conservative dynamic extension]\label{prop:conservative-dynamic-extension}
  Let $p \in \sProg'$ be valid, and let $\xi_1, \xi_2 \in \sState_p$. If $\xi_1$
  is valid then $\xi_1 \stepsto_p \xi_2$ if and only if $\xi_1 \Gradstepsto_p
    \xi_2$.
\end{proposition}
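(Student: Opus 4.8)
The plan is to exploit the fact that, on a fully-annotated program, all of the gradual machinery collapses onto the static machinery, so that the only way $\Gradstepsto_p$ can differ from $\stepsto_p$ is by stepping to the error state $\cError$; it then suffices to show that a valid state of a valid program never triggers that error transition. First I would record two ``de-gradualization'' facts for $p \in \sProg'$. By Proposition~\ref{prop:conservative-static-extension}, $\Grad{\pi} = \fKildall(\Grad{\fFlow}, \Gradjoin, p) = \fKildall(\fFlow, \join, p) = \pi$, so $\Grad{\pi}$ in fact takes values in $\sMap \subseteq \Grad{\sMap}$. Second, because every annotation appearing in $p$ is concrete (not $\qm$), every instruction $\iota$ occurring in $p$ lies in $\sInst'$, and in the definition of $\Grad{\fSafe}$ each set to which $\alpha$ is applied is a singleton $\{\fSafe\bb{\iota}(x)\}$ of an element of $\sAbst$; since $\alpha(\{a\}) = a$ for $a \in \sAbst$ (the most precise $\Grad{b} \in \Grad{\sAbst}$ with $\gamma(\Grad{b}) \supseteq \{a\}$ is $a$ itself), we get $\Grad{\fSafe}\bb{\iota}(x) = \fSafe\bb{\iota}(x) \in \sAbst$ for all $x \in \sVar$. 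This is exactly the ``simplifies down to the same computation rules'' remark made just after the definition of $\Grad{\fSafe}$, read off at fully-annotated instructions.

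The heart of the argument is to show that the error trigger never holds for a valid state $\xi_1 = \stat{\fram{\rho}{\inst[v]{\iota}} \cdot S}{\mu}$, \ie that $\Grad{\pDesc}(\rho, \set{x \mapsto \Grad{\fSafe}\bb{\iota}(x)}{x \in \sVar})$ holds. I would unfold $\Grad{\pDesc}$: since each $\Grad{\fSafe}\bb{\iota}(x) \in \sAbst$ has a singleton $\gamma$-image, the family $\Sigma$ built from the map $x \mapsto \Grad{\fSafe}\bb{\iota}(x)$ contains exactly one element, namely $x \mapsto \fSafe\bb{\iota}(x)$, so $\Grad{\pDesc}$ reduces to $\pDesc(\rho, \set{x \mapsto \fSafe\bb{\iota}(x)}{x \in \sVar})$, which unfolds to ``$\rho(x) \in \fConc(\fSafe\bb{\iota}(x))$ for all $x$''. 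Now validity of the state $\xi_1$ gives $\Grad{\pDesc}(\rho, \Grad{\pi}(v))$, and by the same singleton collapse (using $\Grad{\pi}(v) = \pi(v) \in \sMap$) this is just $\pDesc(\rho, \pi(v))$, hence $\rho(x) \in \fConc(\pi(v)(x))$ for every $x$. Validity of the program $p$ gives $\pi(v)(x) \under \fSafe\bb{\iota}(x)$ for every $x$, and by the concretization property $l_1 \under l_2 \iff \fConc(l_1) \subseteq \fConc(l_2)$ this yields $\fConc(\pi(v)(x)) \subseteq \fConc(\fSafe\bb{\iota}(x))$. Chaining the two memberships gives $\rho(x) \in \fConc(\fSafe\bb{\iota}(x))$ for all $x$, \ie the error trigger fails.

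Finally I would unfold the definition of $\Gradstepsto_p$ at $\xi_1$: since the error trigger fails, the clause producing $\xi_1 \Gradstepsto_p \cError$ does not fire, so the only transitions out of $\xi_1$ under $\Gradstepsto_p$ are those contributed by the ``ordinary step'' clause; therefore, for $\xi_2 \in \sState_p$, $\xi_1 \Gradstepsto_p \xi_2$ holds precisely when $\xi_1 \stepsto_p \xi_2$ does, which is the desired biconditional. (The empty-stack case and any state that is stuck under $\stepsto_p$ are handled uniformly: the biconditional is then vacuous on both sides; and Proposition~\ref{prop:static-progress} is not needed here, though it tells us a non-final valid state genuinely has such a step.) I expect the only real care points to be making the collapse of $\Grad{\fSafe}$ and $\Grad{\pDesc}$ on $\sInst'$-instructions precise, and reading $\Gradstepsto_p$ so that the error transition \emph{preempts} the ordinary ones rather than coexisting with them --- it is precisely on states where the trigger fires that the statement would otherwise fail, which is why the validity hypotheses on $p$ and $\xi_1$ are essential. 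Beyond that there is no substantive obstacle.
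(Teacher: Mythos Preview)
Your proposal is correct and follows essentially the same approach as the paper's proof: both reduce the question to showing that the error trigger $\lnot\Grad{\pDesc}(\rho,\set{x\mapsto\Grad{\fSafe}\bb{\iota}(x)}{x\in\sVar})$ never fires on a valid state of a valid fully-annotated program, using that $\Grad{\fSafe}$, $\Grad{\pDesc}$, and (via Proposition~\ref{prop:conservative-static-extension}) $\Grad{\pi}$ all collapse to their static counterparts on $\sProg'$. The paper abbreviates the $\rho(x)\in\fConc(\fSafe\bb{\iota}(x))$ chain by citing ``the same reasoning used in the proof of Proposition~\ref{prop:static-progress}'', whereas you spell that chain out explicitly; otherwise the arguments coincide.
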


GNPA is sound, \ie valid programs will not get stuck during execution. However, programs may step to a dedicated \cError{} state when run-time checks fail. Soundness is stated with a progress and preservation argument.

\begin{proposition}[gradual progress]\label{prop:gradual-progress}
  Let $p \in \sProg$ be valid. If $\xi = \stat{E_1 \cdot E_2 \cdot S}{\mu} \in
  \sState_p$ is valid then $\xi \Gradstepsto_p \Grad{\xi}'$ for some
  $\Grad{\xi}' \in \Grad{\sState}_p$.
\end{proposition}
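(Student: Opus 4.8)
The plan is to reduce gradual progress to static progress (Proposition~\ref{prop:static-progress}) by exploiting the fact that $\Grad{\fFlow}$, $\Gradjoin$, $\Gradunder$, and $\Grad{\fSafe}$ restrict to $\fFlow$, $\join$, $\under$, and $\fSafe$ whenever all annotations are concrete. Given a valid $p \in \sProg$ and a valid state $\xi = \stat{E_1 \cdot E_2 \cdot S}{\mu}$, the first step is a case analysis on whether the run-time check at the top frame fails. If $\xi = \stat{\fram{\rho}{\inst{\iota}} \cdot E_2 \cdot S}{\mu}$ and $\lnot \Grad{\pDesc}(\rho, \set{x \mapsto \Grad{\fSafe}\bb{\iota}(x)}{x \in \sVar})$, then by the definition of $\Gradstepsto_p$ we immediately have $\xi \Gradstepsto_p \cError$, and we are done. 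So for the remainder we may assume $\Grad{\pDesc}(\rho, \set{x \mapsto \Grad{\fSafe}\bb{\iota}(x)}{x \in \sVar})$ holds; by the definition of $\Gradstepsto_p$ it now suffices to produce some $\xi' \in \sState_p$ with $\xi \stepsto_p \xi'$.

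The heart of the argument is to show that the only way the original small-step relation $\stepsto_p$ can fail to make progress from a non-final state is a condition that the successful run-time check has just ruled out. Inspecting Figure~\ref{fig:semantics} together with the side conditions on states (conditions 1--5) and the $\dagger$ condition, the only stuck configurations at a non-final, two-or-more-frame state are: (i) a field-access or field-assignment instruction whose receiver is $0$, (ii) a $\stmtReturn{y\code{@}a}$ with $\rho_1(y) \notin \fConc(a)$ and $a \neq \qm$, and (iii) a procedure-call step whose argument violates the (concrete, non-$\qm$) parameter annotation. In each of these cases I would show that the failure would force $\lnot\Grad{\pDesc}(\rho, \set{x \mapsto \Grad{\fSafe}\bb{\iota}(x)}{x\in\sVar})$: for (i), $\Grad{\fSafe}\bb{\iota}(y) = \alpha(\aNN) = \aNN$, whose only concretization is $\{\aNN\}$, and $0 \notin \fConc(\aNN)$, so no $\sigma \in \Sigma$ with $\rho(y) \in \fConc(\sigma(y))$ exists; for (ii) and (iii), $\Grad{\fSafe}$ returns $\alpha(\{a\}) = a$ when $a$ is concrete, giving the analogous contradiction with $\rho_1(y) \notin \fConc(a)$ (and when $a = \qm$, the step is not actually blocked, so there is nothing to prove). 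Hence under our standing assumption that the check succeeds, none of (i)--(iii) can occur, so some $\stepsto_p$ step is available. Strictly, I would also invoke the validity of $\xi$ (the $\Grad{\pDesc}(\rho_i, \Grad{\pi}(v_i))$ hypotheses) to rule out any residual degenerate stuckness, e.g. an $\stmtIf{y}$/$\stmtElse{y}$ vertex whose outgoing edge structure is guaranteed by CFG well-formedness, and to discharge the case where $\iota$ is a $\stmtReturn$ at a frame that is not the bottom one — there condition~\ref{rule:call} on states guarantees the frame below is a matching $\cCall$, so the return rule applies.

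An alternative, cleaner packaging is to observe that on the sub-program obtained by reading off the concrete annotations that $\rho$ actually witnesses, the state $\xi$ is valid for that static instantiation in the sense of Section~\ref{sec:static-properties}, so Proposition~\ref{prop:static-progress} applies directly and yields $\xi \stepsto_p \xi'$; this pushes all the casework into the already-proved static progress lemma and the compatibility of the lifted operators with their static counterparts. I would adopt whichever of these two framings the earlier lemmas about $\Grad{\fSafe}$ and $\Grad{\pDesc}$ make lightest. The main obstacle I anticipate is the bookkeeping in the return and call cases: matching the two-frame pattern required by the $\stepsto_p$ rules against the state-validity condition~\ref{rule:call}, and carefully threading the $\dagger$ side condition through the argument so that the "check succeeded $\Rightarrow$ annotation satisfied" implication is airtight even when the relevant annotation is $\qm$ (where the implication is vacuous but the step is unconditionally enabled). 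Everything else is a routine instruction-by-instruction inspection of Figure~\ref{fig:semantics}.
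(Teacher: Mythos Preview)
Your proposal is correct, and your first approach (direct case analysis on the stuck configurations) would go through essentially as you describe. The paper takes a slightly different route: it factors the key step into a separate Lemma~\ref{lem:safe-gradual}, which says that whenever $\Grad{\pDesc}(\rho, \set{x \mapsto \Grad{\fSafe}\bb{\iota}(x)}{x \in \sVar})$ holds, some $\xi' \in \sState_p$ with $\xi \stepsto_p \xi'$ exists. That lemma is proved not by instruction-wise inspection but by passing to the fully-annotated program $p' \in \sProg'$ obtained by replacing every $\qm$ with $\top$ (which by assumption has identical states and semantics), and then invoking the abstract defining property of $\fSafe$ directly, never touching Kildall or Proposition~\ref{prop:static-progress}. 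Your ``alternative, cleaner packaging'' is in this spirit, but as phrased it overreaches: invoking Proposition~\ref{prop:static-progress} would require showing that $\xi$ is a \emph{valid state} of some fully-annotated $p'$ in the sense of Section~\ref{sec:static-properties}, i.e.\ that $\pDesc(\rho_i, \pi'(v_i))$ for $\pi' = \fKildall(\fFlow,\join,p')$, which you have no handle on. The paper sidesteps this entirely by appealing only to the safety-function property, which depends on $\rho$ and $\iota$ alone. The upshot: your case-by-case argument is sound and self-contained; the paper's lemma is shorter and reusable but relies on the additional framework assumption that a semantics-preserving $\top$-completion $p'$ always exists.
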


\begin{proposition}[gradual preservation]\label{prop:gradual-preservation}
  Let $p \in \sProg$ be valid. If $\xi \in \sState_p$ is valid and $\xi
  \Gradstepsto_p \xi'$ for some $\xi' \in \sState_p$, then $\xi'$ is valid.
\end{proposition}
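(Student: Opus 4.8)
The plan is to re-run the proof of static preservation (Proposition~\ref{prop:static-preservation}) with every ingredient replaced by its lifted counterpart, after first discharging the restricted semantics. Since $\xi' \in \sState_p$, the definition of $\Gradstepsto_p$ forces $\xi \stepsto_p \xi'$ (its $\cError$ clause can only produce $\cError \notin \sState_p$). So it suffices to prove that if $\xi = \stat{\fram{\rho_1}{v_1} \cdots \fram{\rho_n}{v_n} \cdot \nil}{\mu}$ is valid and $\xi \stepsto_p \xi'$ with $\xi' \in \sState_p$, then every frame of $\xi'$ is described by $\Grad{\pi} = \fKildall(\Grad{\fFlow}, \Gradjoin, p)$ under $\Grad{\pDesc}$; the heap component may be ignored throughout, since validity only constrains the stack frames. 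I would proceed by case analysis on which rule of Figure~\ref{fig:semantics} produced the step.

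I would isolate three lemmas. \textbf{(L1)} \emph{Lifted local soundness of $\Grad{\fFlow}$}: for every step of the shape $\stat{S' \cdot \fram{\rho}{\inst[v]{\iota}} \cdot S}{\mu} \stepsto_p \stat{\fram{\rho'}{v'} \cdot S}{\mu'}$ with both states in $\sState_p$ (so $\arcsto{p}{v}{v'}$ by the side condition of Figure~\ref{fig:semantics}) and every $\Grad{\sigma} \in \Grad{\sMap}$, $\Grad{\pDesc}(\rho, \Grad{\sigma})$ implies $\Grad{\pDesc}(\rho', \Grad{\fFlow}\bb{\iota}(\Grad{\sigma}))$. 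This is the $\Grad{\cdot}$-analogue of the local-soundness property of $\fFlow$ recalled in Section~\ref{sec:static}, and follows from it together with the consistent-lifting definition of $\Grad{\fFlow}$ (Section~\ref{sec:gradual-flow-safety}) through the Galois connection $(\alpha,\gamma)$. The side conditions of the concrete rule are all available because $\xi \in \sState_p$ and the step occurred: the restrictions on states that force $\rho(y) \neq 0$ at $\stmtIf{y}$ nodes and $\rho(y) = 0$ at $\stmtElse{y}$ nodes, state restriction~\ref{rule:call} for the procedure-entry rule, a non-null receiver for the field-access rules (else the step could not occur), and the dagger condition for the return rule (else the step could not fire). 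Lemma (L1) covers every rule of Figure~\ref{fig:semantics} \emph{except} the procedure-call rule that pushes a fresh $\cProc$ frame; for the return rule one instantiates $S'$ with the returning frame and $\iota$ with the caller's $\cCall$ instruction. \textbf{(L2)} \emph{$\Grad{\pi}$ satisfies the flow equations}: Algorithm~\ref{alg:kildall} computes $\Grad{\pi}$ and terminates for $(\Grad{\fFlow}, \Gradjoin)$ by Propositions~\ref{prop:semilattice} and~\ref{prop:finite-height}, so on exit $\Grad{\fFlow}\bb{\fInst_p(v)}(\Grad{\pi}(v)) \Gradjoin \Grad{\pi}(u) = \Grad{\pi}(u)$ for every edge $\arcsto{p}{v}{u}$. \textbf{(L3)} \emph{Monotonicity of $\Grad{\pDesc}$ in its second argument}: if $\Grad{\pDesc}(\rho, \Grad{\sigma}_1)$ and $\Grad{\sigma}_1 \Gradjoin \Grad{\sigma}_2 = \Grad{\sigma}_2$ then $\Grad{\pDesc}(\rho, \Grad{\sigma}_2)$. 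For (L3) I would observe that $\Grad{\pDesc}(\rho, \Grad{\sigma})$ is equivalent to $\rho(x) \in \bigcup_{a \in \gamma(\Grad{\sigma}(x))}\fConc(a)$ for every $x$, that this union equals $\sVal$ for every element of $\Grad{\sAbst}$ other than $\aNN$ and $\aNu$, and that $\aNN$ and $\aNu$ are minimal in the order of Figure~\ref{fig:semilattice-lifted}; hence $\Grad{a} \Gradjoin \Grad{b} = \Grad{b}$ forces the union for $\Grad{a}$ to be contained in the union for $\Grad{b}$, and (L3) follows componentwise. This detour is necessary because the order induced by $\Gradjoin$ is not $\gamma$-inclusion, e.g. $\qm \Gradjoin \qm[\aNN] = \qm[\aNN]$ while $\gamma(\qm) \not\subseteq \gamma(\qm[\aNN])$.

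With (L1)--(L3) in hand the case analysis is short. If the step is the procedure-call rule that pushes a fresh frame, then $\xi' = \stat{\fram{\varnothing}{w} \cdot \fram{\rho}{v} \cdot S}{\mu}$, where $v$ carries the originating $\cCall$ instruction and $w$ the callee's $\cProc$ entry: the new $\cProc$ frame has the empty environment (permitted at $\cProc$ nodes by the state restriction allowing non-total environments there) and is therefore trivially described, while every other frame of $\xi'$ is verbatim a frame of $\xi$ and so is described because $\xi$ is valid. For any other rule the step has the form $\stat{S' \cdot \fram{\rho}{\inst[v]{\iota}} \cdot S}{\mu} \stepsto_p \stat{\fram{\rho'}{v'} \cdot S}{\mu'}$ with $\arcsto{p}{v}{v'}$: validity of $\xi$ gives $\Grad{\pDesc}(\rho, \Grad{\pi}(v))$ and describes every frame of $S$; (L1) gives $\Grad{\pDesc}(\rho', \Grad{\fFlow}\bb{\iota}(\Grad{\pi}(v)))$; (L2) applied to $\arcsto{p}{v}{v'}$ gives $\Grad{\fFlow}\bb{\iota}(\Grad{\pi}(v)) \Gradjoin \Grad{\pi}(v') = \Grad{\pi}(v')$; and (L3) then yields $\Grad{\pDesc}(\rho', \Grad{\pi}(v'))$. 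In either case every frame of $\xi'$ is described, so $\xi'$ is valid.

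I expect the main obstacle to be establishing (L1) for the two instructions that can carry a $\qm$ annotation, the $\cCall$ and $\cProc$ instructions: there $\Grad{\fFlow}$ is a join over all sharpenings $a' \in \gamma(a)$, $b' \in \gamma(b)$ of the annotations, so one must exhibit sharpenings compatible with the concrete step while keeping the argument uniform with the non-$\qm$ case. A secondary nuisance is the domain bookkeeping for $\Grad{\pDesc}$, $\Grad{\fFlow}$ and the auxiliary set $\Sigma$ at $\cMain$ and $\cProc$ nodes, where both the environment and the abstract state computed by $\fKildall$ may fail to be total; one has to check that these definitions are read pointwise over the appropriate domain, so that the empty-environment case above and the reset performed by $\Grad{\fFlow}\bb{\stmtProc{m\code{@}a}{y\code{@}b}}$ behave as intended. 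Everything else is a routine transcription of the proof of Proposition~\ref{prop:static-preservation}.
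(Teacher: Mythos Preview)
Your proposal is correct and follows essentially the same route as the paper: your (L1) is the paper's Lemma~\ref{lem:flow-gradual}, your (L2) is Lemma~\ref{lem:fixpoint} instantiated with $A = \Grad{\sAbst}$, and the two-case split (push of an empty $\cProc$ frame versus a step that pops down to a successor along a CFG edge) is exactly the case analysis the paper uses. Your explicit discharge of $\Gradstepsto_p$ to $\stepsto_p$ and your (L3) make precise two steps the paper's proof takes silently; in particular, your observation that the $\Gradjoin$-order does not coincide with $\gamma$-inclusion, so that (L3) needs its own argument via $\bigcup_{a \in \gamma(\cdot)} \fConc(a)$, is a genuine point the paper glosses over with ``so $\Grad{\sigma}_2$ describes $\rho_2$.''
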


Finally, GNPA satisfies both the static and dynamic gradual guarantees. Both of the guarantees rely on a definition of \emph{program precision}. Specifically, if programs $p_1$ and $p_2$ are identical except perhaps that some annotations in $p_2$ are $\qm$ where they are not $\qm$ in $p_1$, then we say that \emph{$p_1$ is more precise than $p_2$}, and write $p_1 \preciser p_2$.

Then, the \emph{static gradual guarantee} states that increasing the number of missing annotations in a valid program does not introduce static warnings (\ie break program validity).

\begin{proposition}[static gradual guarantee]\label{prop:static-gradual-guarantee}
  Let $p_1, p_2 \in \sProg$ such that $p_1 \preciser p_2$. If $p_1$ is
  statically valid then $p_2$ is statically valid.
\end{proposition}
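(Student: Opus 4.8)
The plan is to show that weakening an annotation propagates monotonically through every component of the static system, so that the validity inequality $\Grad{\sigma}(x) \Gradunder \Grad{\fSafe}\bb{\iota}(x)$ can only be \emph{helped}, never broken. Unfolding program precision, $p_1 \preciser p_2$ means the two programs have the same control-flow graph up to annotations, so we identify their vertex sets, and at each vertex the instruction $\iota_1$ of $p_1$ satisfies $\iota_1 \preciser \iota_2$, where $\iota_2$ is the corresponding instruction of $p_2$. Extending $\preciser$ pointwise to abstract states and to vertex-indexed families of them, and writing $\Grad{\pi}_j = \fKildall(\Grad{\fFlow}, \Gradjoin, p_j)$, it suffices to establish: (i) $\Grad{\pi}_1(v) \preciser \Grad{\pi}_2(v)$ for every vertex $v$; (ii) $\Grad{\fSafe}\bb{\iota_1}(x) \preciser \Grad{\fSafe}\bb{\iota_2}(x)$ for every $x \in \sVar$; and (iii) $\Gradunder$ is preserved under weakening of both sides: if $\Grad{a} \Gradunder \Grad{b}$, $\Grad{a} \preciser \Grad{a}'$, and $\Grad{b} \preciser \Grad{b}'$, then $\Grad{a}' \Gradunder \Grad{b}'$. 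Granting these, static validity of $p_1$ gives $\Grad{\pi}_1(v)(x) \Gradunder \Grad{\fSafe}\bb{\iota_1}(x)$ for all $\inst[v]{\iota_1}$ and $x$, and (i)--(iii) upgrade this to $\Grad{\pi}_2(v)(x) \Gradunder \Grad{\fSafe}\bb{\iota_2}(x)$, i.e. static validity of $p_2$.

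Fact (iii) is immediate: $\Grad{a} \preciser \Grad{a}'$ unfolds to $\gamma(\Grad{a}) \subseteq \gamma(\Grad{a}')$, so any $a \in \gamma(\Grad{a})$ and $b \in \gamma(\Grad{b})$ with $a \under b$ witnessing $\Grad{a} \Gradunder \Grad{b}$ also witness $\Grad{a}' \Gradunder \Grad{b}'$ --- this is exactly the ``maximal permissiveness'' of $\Gradunder$ from Section~\ref{sec:gradual-under}. Fact (ii) and the companion per-step flow lemma both come down to monotonicity of the abstraction $\alpha$: since $(\alpha,\gamma)$ is a Galois connection, $\Coll{a} \subseteq \Coll{b}$ implies $\alpha(\Coll{a}) \preciser \alpha(\Coll{b})$. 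Because $\iota_1 \preciser \iota_2$ only shrinks the annotation-concretizations $\gamma(a)$, $\gamma(b)$ occurring in the AGT definitions, and (for the flow lemma) $\Grad{\sigma}_1 \preciser \Grad{\sigma}_2$ only shrinks the family $\Sigma$ of concrete abstract states, the sets to which $\alpha$ is applied can only grow; hence $\Grad{\fSafe}\bb{\iota_1}(x) \preciser \Grad{\fSafe}\bb{\iota_2}(x)$, and $\iota_1 \preciser \iota_2$ with $\Grad{\sigma}_1 \preciser \Grad{\sigma}_2$ imply $\Grad{\fFlow}\bb{\iota_1}(\Grad{\sigma}_1) \preciser \Grad{\fFlow}\bb{\iota_2}(\Grad{\sigma}_2)$ pointwise. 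The same reasoning gives precision-monotonicity of $\Gradjoin$: $\Grad{a}_i \preciser \Grad{a}_i'$ for $i = 1,2$ imply $\Grad{a}_1 \Gradjoin \Grad{a}_2 \preciser \Grad{a}_1' \Gradjoin \Grad{a}_2'$.

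Fact (i) is the substantive step, and the one I expect to be the main obstacle, because precision $\preciser$ is unrelated to the semilattice order induced by $\Gradjoin$ ($\Gradunder$ is not even a partial order), so one cannot merely say ``Kildall is monotone'': the runs of Algorithm~\ref{alg:kildall} on $p_1$ and $p_2$ may process worklist elements in different orders and for different numbers of iterations. The clean route is to use the standard correctness property of the worklist algorithm --- valid because $\Grad{\fFlow}$ is monotone for the $\Gradjoin$-order and $(\Grad{\sAbst},\Gradjoin)$ has finite height (Proposition~\ref{prop:finite-height}) --- namely that its output equals $\Phi_{p_j}^{k}(\varnothing)$ for all sufficiently large $k$, where $\Phi_{p_j}$ sends a vertex-map $\Grad{\pi}$ to the map assigning to $u$ the $\Gradjoin$ of $\set{\Grad{\fFlow}\bb{\fInst_{p_j}(v)}(\Grad{\pi}(v))}{\arcsto{p_j}{v}{u}}$. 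Running the two iterations in lockstep from the common bottom map, one proves by induction on $k$ that $\Phi_{p_1}^{k}(\varnothing) \preciser \Phi_{p_2}^{k}(\varnothing)$ pointwise: the base case compares $\varnothing$ with itself, and the inductive step combines $\fInst_{p_1}(v) \preciser \fInst_{p_2}(v)$, the induction hypothesis, the per-step flow lemma, and precision-monotonicity of $\Gradjoin$; choosing $k$ past both stabilization points yields (i). The only remaining obligations are mechanical: checking from the CFG construction that corresponding vertices carry $\preciser$-related instructions, and that every flow rule keeps the variable domain of abstract states the same across $p_1$ and $p_2$, so that pointwise precision on $\sMap$ and $\Grad{\sMap}$ is meaningful throughout.
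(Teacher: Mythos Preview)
Your proposal is correct and follows the same overall decomposition as the paper: your fact~(ii) is the paper's Lemma~\ref{lem:safe-monotonic}, your fact~(i) is Lemma~\ref{lem:fixpoint-gradual} (with the per-step flow lemma matching Lemma~\ref{lem:flow-monotonic}), and your fact~(iii) is exactly the unfolding of the definition of $\Gradunder$ that the paper's proof of Proposition~\ref{prop:static-gradual-guarantee} concludes with.

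The one genuine technical difference is in how fact~(i) is established. You recast Kildall's output as the Kleene iterate $\Phi_{p_j}^{k}(\varnothing)$ of the dataflow transfer operator and argue by induction on $k$ that pointwise precision is preserved, taking $k$ past both stabilization points. The paper instead runs Algorithm~\ref{alg:kildall} itself in parallel on $p_1$ and $p_2$, choosing the same vertex at each step (arguing that processing a vertex absent from one side's worklist is a no-op), and maintains pointwise precision as a loop invariant for the \textbf{while} loop. Your route is arguably cleaner, since it sidesteps the worklist's nondeterminism and the somewhat informal ``same vertex without loss of generality'' justification, at the modest cost of invoking the standard (but not stated in the paper) identification of the worklist output with the least fixpoint computed by Kleene iteration. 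The paper's route has the complementary virtue of staying entirely within the algorithm as presented, so no auxiliary fixpoint characterization is needed.
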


The \emph{dynamic gradual guarantee} ensures that increasing the number of missing annotations in a program does not change the observable behavior of the program (\ie break program reducibility for valid programs).

\begin{proposition}[dynamic gradual guarantee]\label{prop:dynamic-gradual-guarantee}
  Let $p_1, p_2 \in \sProg$ be statically valid, where $p_1 \preciser p_2$. Let
  $\xi_1, \xi_2 \in \sState_{p_2}$. If $\xi_1 \Gradstepsto_{p_1} \xi_2$ then
  $\xi_1 \Gradstepsto_{p_2} \xi_2$.
\end{proposition}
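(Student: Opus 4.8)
The plan is to reduce the gradual step to an ordinary PICL step and replay that step under the less precise program $p_2$. Since $\xi_2 \in \sState_{p_2}$ we have $\xi_2 \neq \cError$, so among the clauses defining $\Gradstepsto_{p_1}$ the only one that can witness $\xi_1 \Gradstepsto_{p_1} \xi_2$ is the one lifting $\stepsto_{p_1}$; hence $\xi_1 \stepsto_{p_1} \xi_2$, and the premise itself forces $\xi_1 \in \sState_{p_1}$ and (as $\xi_2 \neq \cError$) $\xi_2 \in \sState_{p_1}$. The goal thus becomes: show $\xi_1 \stepsto_{p_2} \xi_2$ with $\xi_1, \xi_2 \in \sState_{p_2}$, whereupon the same defining clause yields $\xi_1 \Gradstepsto_{p_2} \xi_2$. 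Note that $\Gradstepsto$ is permissive: even if $\xi_1$ could also step to $\cError$ under $p_2$, that does not block $\xi_1 \Gradstepsto_{p_2} \xi_2$, since we only need one successor witness.

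Next I would record two bookkeeping facts about $p_1 \preciser p_2$. (i) Because $p_1$ and $p_2$ differ only by turning some concrete annotations into $\qm$, their CFGs literally coincide --- $\sVert_{p_1} = \sVert_{p_2}$, $\sArc_{p_1} = \sArc_{p_2}$, and $\fInst_{p_1}, \fInst_{p_2}$ agree up to annotations --- so a state, and its successor under any semantics rule, is the same object in both programs (successor states are built from shared CFG vertices, the heap, and variable environments, none of which mention annotations). Moreover $\sState_{p_1} \subseteq \sState_{p_2}$: the only well-formedness condition on states that inspects an annotation is condition~\ref{rule:call}, where replacing a parameter annotation $b$ by $\qm$ merely makes the disjunct ``$b = \qm$'' true, relaxing the constraint; so from $\xi_1, \xi_2 \in \sState_{p_1}$ we also get $\xi_1, \xi_2 \in \sState_{p_2}$, consistent with the stated hypothesis. (ii) Under $p_1 \preciser p_2$ the only permitted change to any single annotation is concrete-to-$\qm$; in particular an $\aNN$ never turns into $\aNa$.

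The core of the argument is then a rule-by-rule inspection of Figure~\ref{fig:semantics} showing $\xi_1 \stepsto_{p_1} \xi_2 \implies \xi_1 \stepsto_{p_2} \xi_2$. Every rule is insensitive to annotations except the \code{return} rule, whose side condition $\dagger$ demands the return annotation $a$ satisfy $a = \qm$ or $\rho_1(y) \in \fConc(a)$: if $a$ is unchanged the condition is identical in $p_2$, and if $a$ became $\qm$ the disjunct $a = \qm$ makes it hold outright, so the step still fires under $p_2$, yielding the same $\xi_2$. (The procedure-call and procedure-entry rules carry annotations along without any side condition and pattern-match the same vertices, so they transfer verbatim.) Combining with the facts above, $\xi_1 \stepsto_{p_2} \xi_2$ and $\xi_1, \xi_2 \in \sState_{p_2}$, so $\xi_1 \Gradstepsto_{p_2} \xi_2$.

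The main obstacle is modest: it lies entirely in the annotation bookkeeping --- checking that the CFGs genuinely coincide, that both the state condition~\ref{rule:call} and the return side condition $\dagger$ are \emph{relaxed} (never strengthened) when precision is lost, and that the precision order on $\sAnn$ rules out an $\aNN$ becoming $\aNa$. Static validity of $p_1$ and $p_2$ is not actually used in this direction; it is assumed only to mirror the statement of the static gradual guarantee. The multi-step form of the guarantee then follows by a straightforward induction over $\Gradstepsto_{p_1}$-derivations.
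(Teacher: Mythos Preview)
Your reduction to the underlying step $\xi_1 \stepsto_{p_1} \xi_2$ and the rule-by-rule transfer to $\xi_1 \stepsto_{p_2} \xi_2$ are fine and match the paper's (very brief) treatment of that part. The gap is in your claim that ``$\Gradstepsto$ is permissive: even if $\xi_1$ could also step to $\cError$ under $p_2$, that does not block $\xi_1 \Gradstepsto_{p_2} \xi_2$.'' The paper does \emph{not} read the definition this way. Its own proofs of gradual progress (Proposition~\ref{prop:gradual-progress}) and conservative dynamic extension (Proposition~\ref{prop:conservative-dynamic-extension}) explicitly argue ``$\xi \Gradstepsto_p \xi'$ \emph{because $\xi$ does not step to $\cError$}'' and ``$\xi_1$ does not step to $\cError$ because \ldots\ thus $\xi_1 \Gradstepsto_p \xi_2$.'' So the intended semantics is exclusive: the error clause takes precedence, and $\xi_1 \Gradstepsto_{p_2} \xi_2$ requires that the run-time check at $\xi_1$ \emph{passes} in $p_2$. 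This is also the reading under which the dynamic gradual guarantee has its intended content (removing annotations does not introduce new run-time failures); under your permissive reading the proposition would say almost nothing about run-time checks.

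Concretely, what you are missing is the argument that from $\xi_1 \Gradstepsto_{p_1} \xi_2 \neq \cError$ one also obtains $\Grad{\pDesc}(\rho, \{x \mapsto \Grad{\fSafe}\bb{\iota_1}(x)\})$, and that this transfers to $\Grad{\pDesc}(\rho, \{x \mapsto \Grad{\fSafe}\bb{\iota_2}(x)\})$. The paper does this via Lemma~\ref{lem:safe-monotonic}: since $\iota_1 \preciser \iota_2$, we have $\gamma(\Grad{\fSafe}\bb{\iota_1}(x)) \subseteq \gamma(\Grad{\fSafe}\bb{\iota_2}(x))$ for every $x$, so any $\sigma$ witnessing $\Grad{\pDesc}$ for $\iota_1$ also witnesses it for $\iota_2$. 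That is the heart of the paper's proof, and it is precisely the step your permissiveness assumption lets you skip. Add this piece (it is short, and the same annotation-monotonicity intuition you already used for the $\dagger$ side condition works here), and your argument becomes complete under the intended semantics.
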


Note, the small-step semantics $\Gradstepsto$ are designed to make the proofs of the aforementioned properties easier at the cost of easily implementable run-time checks. Therefore, we give the following proposition that connects a more implementable design to $\Gradstepsto$. That is, we can use the contrapositive of this proposition to implement more optimal run-time checks. Specifically, the na\"ive implementation would check each variable at each program point to make sure it satisfies the safety function for the instruction about to be executed. But Proposition~\ref{prop:static-progress} tells us that we only need to check variables at runtime when our analysis results don't already guarantee (statically) that they will satisfy the safety function.

\begin{proposition}[run-time checks]\label{prop:runtime-checks}
  Let $p \in \sProg$ be valid according to $\Grad{\pi} =
  \fKildall(\Grad{\fFlow}, \Gradjoin, p)$, and let $\xi =
  \stat{\fram{\rho}{\inst[v]{\iota}} \cdot S}{\mu} \in \sState_p$ be valid. If
  $\xi \Gradstepsto_p \cError$ then there is some $x \in \sVar$ and $a \in
  \gamma((\Grad{\pi}(v))(x))$ such that $a \not\under \bigjoin
  \gamma(\Grad{\fSafe}\bb{\iota}(x))$.
\end{proposition}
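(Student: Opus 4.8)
The plan is to peel back the definition of the instrumented semantics $\Gradstepsto_p$, reformulate the maximally permissive predicate $\Grad{\pDesc}$ as a family of independent per-variable conditions, and then contrast the witness forced by the failed run-time check against the witness supplied by validity of $\xi$.

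First I would note that, in the definition of $\Gradstepsto_p$, the only way to reach $\cError$ from a state in $\sState_p$ is through its first clause; hence $\xi \Gradstepsto_p \cError$ with $\xi = \stat{\fram{\rho}{\inst[v]{\iota}} \cdot S}{\mu}$ gives $\lnot\Grad{\pDesc}(\rho, \set{x \mapsto \Grad{\fSafe}\bb{\iota}(x)}{x \in \sVar})$. Next I would prove the following decomposition lemma: for every $\rho \in \sEnv$ and every pointwise $\Grad{\sigma} = \set{x \mapsto \Grad{a}_x}{x \in \sVar} \in \Grad{\sMap}$,
\[
  \Grad{\pDesc}(\rho, \Grad{\sigma})
  \quad\Longleftrightarrow\quad
  \rho(x) \in \fConc\bigl(\bigjoin \gamma(\Grad{a}_x)\bigr) \text{ for all } x \in \sVar.
\]
To prove it, unfold $\Grad{\pDesc}(\rho, \Grad{\sigma})$ to ``$\pDesc(\rho,\sigma)$ for some $\sigma \in \Sigma$,'' where $\Sigma$ is the set of choice functions selecting some $a_x \in \gamma(\Grad{a}_x)$ independently for each variable; since $\pDesc(\rho,\sigma)$ is the conjunction over $x$ of the membership $\rho(x) \in \fConc(\sigma(x))$ and the selections are per-variable independent, the existential over $\Sigma$ distributes over that conjunction, giving $\forall x.\ \exists a_x \in \gamma(\Grad{a}_x).\ \rho(x) \in \fConc(a_x)$. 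Finally, idempotency of $\Gradjoin$ (Proposition~\ref{prop:semilattice}) guarantees that each $\gamma(\Grad{a}_x)$ is a nonempty finite subsemilattice of $\sAbst$, so $\bigjoin \gamma(\Grad{a}_x)$ is attained inside $\gamma(\Grad{a}_x)$ and dominates every one of its elements; by the order-reflecting property $l_1 \under l_2 \iff \fConc(l_1) \subseteq \fConc(l_2)$ of $\fConc$, this yields $\Union_{a \in \gamma(\Grad{a}_x)} \fConc(a) = \fConc\bigl(\bigjoin \gamma(\Grad{a}_x)\bigr)$, collapsing the inner existential into the stated membership.

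With the lemma available, applying it to the negation obtained in the first step produces some $x_0 \in \sVar$ with $\rho(x_0) \notin \fConc\bigl(\bigjoin \gamma(\Grad{\fSafe}\bb{\iota}(x_0))\bigr)$. On the other hand, validity of $\xi$ says $\Grad{\pDesc}(\rho, \Grad{\pi}(v))$ for the top frame, so there is a $\sigma \in \Sigma$ (now the choice functions for $\Grad{\pi}(v)$) with $\sigma(x) \in \gamma((\Grad{\pi}(v))(x))$ and $\rho(x) \in \fConc(\sigma(x))$ for all $x$. Setting $a := \sigma(x_0)$, we have $a \in \gamma((\Grad{\pi}(v))(x_0))$ and $\rho(x_0) \in \fConc(a)$. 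Were $a \under \bigjoin \gamma(\Grad{\fSafe}\bb{\iota}(x_0))$, monotonicity of $\fConc$ would force $\rho(x_0) \in \fConc\bigl(\bigjoin \gamma(\Grad{\fSafe}\bb{\iota}(x_0))\bigr)$, contradicting the choice of $x_0$; hence $a \not\under \bigjoin \gamma(\Grad{\fSafe}\bb{\iota}(x_0))$, and $x_0$ together with $a$ are the required witnesses.

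I expect the decomposition lemma to be the main obstacle: it is the one place where the design of $\Grad{\sAbst}$ and of the concretization $\gamma$ is actually used, relying on the per-variable independence hidden in the definition of $\Sigma$, on $\gamma$ sending each lifted value to a finite subsemilattice so that $\bigjoin \gamma(\cdot)$ is realized by an honest element of $\sAbst$, and on $\fConc$ reflecting as well as preserving the order. Once that lemma is in place, the remaining steps---unfolding $\Gradstepsto_p$ and the validity predicate, and deriving the contradiction---are routine.
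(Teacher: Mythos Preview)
Your argument is correct and follows the same skeleton as the paper's: unfold $\Gradstepsto_p$ to obtain $\lnot\Grad{\pDesc}(\rho,\set{x\mapsto\Grad{\fSafe}\bb{\iota}(x)}{x\in\sVar})$, extract a bad variable, then use validity of $\xi$ to produce $a\in\gamma((\Grad{\pi}(v))(x))$ with $\rho(x)\in\fConc(a)$ and conclude by the order-reflecting property of $\fConc$.

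The one substantive difference is your decomposition lemma. The paper's proof simply asserts ``there is some $x\in\sVar$ and $b\in\gamma(\Grad{\fSafe}\bb{\iota}(x))$ such that $\rho(x)\notin\fConc(b)$,'' and then finishes with ``$a\not\under b\under\bigjoin\gamma(\Grad{\fSafe}\bb{\iota}(x))$.'' Taken literally, that last chain does not yield $a\not\under\bigjoin\gamma(\Grad{\fSafe}\bb{\iota}(x))$: from $a\not\under b$ and $b\under c$ one cannot infer $a\not\under c$. Your route avoids this by first showing $\Grad{\pDesc}(\rho,\Grad{\sigma})\iff\forall x.\ \rho(x)\in\fConc(\bigjoin\gamma(\Grad{\sigma}(x)))$, so the negation directly gives $\rho(x_0)\notin\fConc(\bigjoin\gamma(\Grad{\fSafe}\bb{\iota}(x_0)))$ and the contradiction with $\rho(x_0)\in\fConc(a)$ lands exactly on the required $\bigjoin$. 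In effect you make explicit the per-variable independence of $\Sigma$ and the fact that $\gamma(\Grad{a})$ is a finite subsemilattice (so its join is realized inside it), which the paper tacitly relies on; this buys you a clean closing step where the paper's is, as written, a non sequitur.
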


\section{Preliminary Empirical Evaluation}\label{sec:empirical}
In this section, we discuss the implementation of GNPA and two studies designed to evaluate its usefulness in practice. Preliminary evidence suggests that our analysis can be used at scale, produces fewer false positives than state-of-the-art tools, and eliminates on average more than half of the null-pointer checks Java automatically inserts at run time. These results illustrate an important practical difference between GNPA and other null-pointer analyses. While a sound static analysis can be used to prove the redundancy of run-time checks, and an unsound static analysis can be used to reduce the number of false positives, neither of those can do both at the same time. On the other hand, GNPA can both prove the redundancy of run-time checks and reduce reported false positives

\subsection{Research Questions}

We seek answers to the following questions:
\begin{enumerate}
  \item \label{q:prototype} Can a gradual null-pointer analysis be effectively implemented and used at scale?
  \item \label{q:false-positives} Does such a null-pointer analysis produce fewer false positives than industry-grade analyses?
  \item \label{q:eliminate-checks} Does the gradual null-pointer analysis perform significantly fewer null-pointer checks than the na\"ive approach of checking every dereference?
\end{enumerate}

\subsection{Prototype}

Facebook Infer provides a framework to construct static analyses that use abstract interpretation. We built a prototype of GNPA, called \emph{Graduator}, in this framework. Our prototype uses Infer's HIL intermediate language representation (IR). As a result, Graduator can be used to analyze code written in C, C++, Objective-C, and Java.

\begin{figure}
  \centering
  \begin{tikzpicture}
    \node (t) {$\aNa$};
    \node (n) [below = of t] {$\aNN$};

    \draw (t) -- (n);
  \end{tikzpicture}
  \qquad\qquad
  \begin{tikzpicture}[->,>=stealth']
    \node (t) {$\aNa$};
    \node (q) [below right=0.5cm and 0.5cm of t] {$\qm$};
    \node (n) [below=1.35cm of t] {$\aNN$};

    \draw (n) -> (q);
    \draw (n) -> (t);
    \draw (t) -> (q);
    \draw (q) -> (n);
    \draw (q) -> (t);
  \end{tikzpicture}
  \qquad\qquad
  \begin{tikzpicture}
    \graph[grid placement, wrap after=1]{
      t/"$\aNa$" -- q/"$\qm$" -- n/"$\aNN$";
    };
  \end{tikzpicture}
  \caption{\textit{Left:} The starting null-pointer semilattice for Graduator.
    \textit{Middle:} The lifted partial ordering, where each directed edge $\Grad{a} \to \Grad{b}$ means $\Grad{a} \Gradunder \Grad{b}$. (Self-loops are omitted.) \textit{Right:} The semilattice structure induced by the lifted join $\Gradjoin$.}
  \label{fig:smaller-lattice}
\end{figure}

The preceding case study (Secs. \ref{sec:language}--\ref{sec:gradual}) uses a base semilattice with three elements, \aNu, \aNN, and \aNa, in order to demonstrate that a semilattice lifting may contain additional intermediate optimistic elements, \qm[\aNu] and \qm[\aNN]. For simplicity, we implemented the semilattice from Figure \ref{fig:smaller-lattice}, along with its lifted variant, order relation and join function, in our prototype. This semilattice is the same as the base one in the case study except it does not contain \aNu: the initial static semilattice has only \aNN{} and \aNa{}, and the gradual semilattice only adds one additional \qm{} element. There are a couple other differences between our formalism and our Graduator prototype, one of which is that Graduator allows field annotations while our formalism does not.

Infer does not support modifying Java source code, so Graduator simply reports the locations where it should insert run-time checks rather than inserting them directly. In fact, Graduator may output any of the following:
\begin{itemize}
  \item \code{GRADUAL\_STATIC}---a static warning.
  \item \code{GRADUAL\_CHECK}---a location to check a possibly-null
    dereference.
  \item \code{GRADUAL\_BOUNDARY}---another location to insert a check,
    such as passing an argument to a method, returning from a method, or assigning a value to a field.
\end{itemize}
Since Java checks for null-pointer dereferences automatically, soundness is preserved. A more complete implementation of GNPA would insert run-time checks as part of the build process. As a result, some bugs may be caught earlier when the gradual analysis inserts checks at method boundaries and field assignments.

By implementing Graduator with Infer's framework, Graduator is guaranteed to operate at scale. We also evaluate Graduator on a number of open source repositories as discussed in Sections \ref{sec:empirical-static} and \ref{sec:empirical-runtime}.
Thus, the answer to RQ\ref{q:prototype} is yes.

\subsection{Static Warnings} \label{sec:empirical-static}

\begin{figure}[]
  \centering
  \begin{tikzpicture}[scale=0.05]
    \draw (28.3598,-34.6431) circle (69 and 34.5); %% Eradicate
    \draw (8.33837,-1.62742) circle (27 and 13.5); %% Graduator
    \draw (71.2821,-0.296814) circle (46 and 23.0); %% --nullsafe
    \draw (110,-50) circle (5 and 5); %% NullAway

    \node at (-60,-35.0) {Eradicate};
    \node at (-40,0.0) {Graduator};
    \node at (140,0.0) {NullSafe};
    \node at (135,-50) {\textsc{NullAway}};

    \node at (110,-50) {0};

    \node at (22,-40.0) {1229};
    \node at (3,3.5) {126};
    \node at (75,6.5) {474};

    \node at (29.5,3) {1};
    \node at (55,-12.5) {159};
    \node at (10,-7.5) {81};

    \node at (30.4,-3.5) {20};
  \end{tikzpicture}
  \caption{The total number of static warnings reported by the three Infer null
    checkers, for all 15 repositories.}
  \label{fig:venn}
\end{figure}

To evaluate Graduator, we ran it on 15 of the 18 open-source Java repositories used to evaluate \textsc{NullAway} \cite{banerjee2019nullaway} (we excluded 3 of the 18 repositories because we were unable to successfully run Infer on them). We also ran \textsc{NullAway}, and Infer's existing null-pointer checkers Eradicate and NullSafe, on the repositories. Figure~\ref{fig:venn} shows the number of \emph{static} warnings produced by each of these three checkers: 1489 for Eradicate, 654 for NullSafe, 228 for Graduator, and 0 for \textsc{NullAway}, for a total of 2371.

Based on the \textsc{NullAway} paper (in which Uber states that in practice they have found no instances of null-pointer dereferences caused by their tool's unsoundness), it seems reasonable to assume that these repositories do not have null-pointer bugs, since \textsc{NullAway} itself reports no static warnings for these repositories. After examining all 2371 warnings ourselves, we found that all but 57 (50 from Eradicate only, 2 from Graduator only, and 5 from Eradicate and Graduator but not NullSafe) were false positives due to systematic imprecision in the analysis tools. We were unable to determine whether the remaining 57 warnings represent actual bugs or not.

Under this assumption, Graduator reports significantly fewer false positives than Infer's existing null-pointer checkers (although in this respect, it is of course outperformed by \textsc{NullAway}) (RQ\ref{q:false-positives}).
An interesting aspect of Figure~\ref{fig:venn} is how many warnings are produced by only one of the checkers: 1229 for Eradicate, 474 for NullSafe, and 126 for Graduator. Many of these warnings arose from generated and test case code.

\subsubsection{Generated Code}

Several of the 15 repositories generate code as part of their build process, and in some cases, the analysis tools gave warnings about the generated code. This accounts for
\begin{itemize}
  \item 380 of the warnings given by NullSafe alone,
  \item 356 of the warnings given by Eradicate alone,
  \item 130 of the warnings given by both Eradicate and NullSafe but not
    Graduator, and
  \item 8 of the warnings given by Graduator alone.
\end{itemize}
Graduator reports significantly fewer static warnings for generated code, because such code is typically unannotated and Graduator is designed to be optimistic when annotations are missing.

\subsubsection{Test Code}

It is reasonable to assume that test code does not contain null dereference bugs, because if it did, then those bugs would show up when the tests are run. Static warnings about test code account for
\begin{itemize}
  \item 384 of the warnings given by Eradicate alone, and
  \item 73 of the warnings given by both Eradicate and Graduator, but not NullSafe.
\end{itemize}
That is, Graduator reports fewer warnings for test code than Eradicate, but more than NullSafe. The NullSafe checker does not appear to treat test code specially, so it is unclear why NullSafe is performing better than Graduator for such code.

\subsubsection{Remaining False Positives}

The reader may wonder why Graduator reports any false positives on this codebase, since it intuitively seems that the static portion of a gradual analysis ought to be optimistic.  Examining the warnings given by Graduator, we see that none of the warnings are due to treating missing annotations pessimistically; instead, they are due to places where the analysis has whatever annotations it needs, but the analysis is imprecise in other respects.  For example, one common source of false positives is when a field is checked for null, then is read again.  Our original static analysis is limited in that it does not treat fields flow-sensitively, causing false positives that are independent of the choice to be gradual or not with respect to annotations.

\textsc{NullAway} avoids giving false positives on this same codebase, due to a combination of some unsound assumptions and a more precise analysis approach.  While our approach for deriving gradual program analyses focuses on retaining soundness through a combination of static and dynamic checks, incorporating more precise analysis techniques (e.g. a flow-sensitive treatment of fields, perhaps in combination with a gradual alias analysis) could eliminate more of these false positives.  In the meantime, our comparison to Eradicate and NullSafe is appropriate as these are the static analysis tools taking the most similar approach.

\subsection{Run-time Checks} \label{sec:empirical-runtime}

\begin{table}
  \caption{Percentage of null-dereference checks which Graduator found to be
    redundant.}
  \begin{small}
    \begin{tabular}{lrrr}
      repository & dereference sites & eliminated checks & percent eliminated \\
      \hline
      keyvaluestore & 419 & 156 & 37\% \\
      uLeak & 620 & 241 & 39\% \\
      butterknife & 2773 & 1129 & 41\% \\
      jib & 5896 & 2499 & 42\% \\
      skaffold-tools-for-java & 366 & 185 & 51\% \\
      picasso & 2719 & 1458 & 54\% \\
      meal-planner & 858 & 475 & 55\% \\
      caffeine & 9455 & 5701 & 60\% \\
      AutoDispose & 3218 & 1993 & 62\% \\
      ColdSnap & 6360 & 4325 & 68\% \\
      ReactiveNetwork & 2097 & 1626 & 78\% \\
      okbuck & 19089 & 15130 & 79\% \\
      FloatingActionButtonSpeedDial & 3049 & 2581 & 85\% \\
      QRContact & 1272 & 1171 & 92\% \\
      OANDAFX & 2216 & 2056 & 93\% \\
      \hline
      overall & 60407 & 40726 & 67\%
    \end{tabular}
  \end{small}
  \label{tab:dereferences}
\end{table}

For the same set of 15 repositories analyzed by \textsc{NullAway}, we performed another experiment using our prototype. We configured Graduator to ignore \emph{all} annotations, so in effect, every field, argument, and return value was annotated as \qm. For each repository, we counted all the locations where Graduator gave a \code{GRADUAL\_STATIC}, \code{GRADUAL\_CHECK}, or \code{GRADUAL\_BOUNDARY} warning, and compared that number to the total number of pointer dereferences in the code. By ignoring annotations, we ensured that each of these warnings appeared on dereferences, rather than allowing early checks at, e.g., method boundaries. We also ran analogous experiments with annotations enabled, but the number of run-time check warnings found were very similar to the numbers found with annotations disabled.

Table~\ref{tab:dereferences} shows what percentage of these dereference sites received no static warnings or run-time checks. Recall that Java automatically checks all dereferences to ensure that they are not null. Because GNPA is sound, this figure shows the percentage of null checks that are provably redundant, and could be safely removed by an ahead-of-time compiler.

Since we were able to eliminate an average of $67\%$ of the null checks which Java automatically inserts, this experiment suggests the answer to RQ\ref{q:eliminate-checks} is yes. Note that these numbers only discuss the number of dereferences that appear in the code, and do not take into account which of these dereferences are executed more or less frequently at run-time.

This also illustrates an important practical difference between GNPA and other null-pointer analyses. While a sound static analysis can be used to prove the redundancy of run-time checks, and an unsound static analysis can be used to reduce the number of false positives, neither of those can do both at the same time. On the other hand, a gradual analysis can both prove the redundancy of run-time checks and reduce reported false positives.

\section{Related Work}\label{sec:related}

As discussed previously, our work builds on prior research in gradual typing: the criteria for gradual type systems~\cite{siek2015refined} and the Abstracting Gradual Typing methodology, which develops a gradual type system from a purely static one~\cite{garcia2016abstracting}. In contrast to prior work in gradual typing, we address the challenges of tracking transitive dataflow relationships, rather than the local checks of typical type systems. In doing so, we gradualize, for the first time, the abstract interpretation of a program~\cite{cousot:popl1977}, and the canonical dataflow analysis fix-point algorithm~\cite{kildall1973unified}.

The most closely related work in program analysis consists of \textit{hybrid analyses}, which combine static and dynamic analysis techniques to counteract the weaknesses inherent to each approach. For example, Choi \etal~\cite{10.1145/512529.512560} used a static analysis to substantially lower the run-time overhead of a dynamic data race analysis. Prior work on hybrid program analyses combines static and dynamic techniques in ad-hoc ways. Instead, we propose a principled methodology for deriving a hybrid (gradual) analysis from a static one, and show that the resulting analysis adheres to desirable properties such as soundness and the gradual guarantee.

There is a large body of literature on static program analysis, including multiple specialized conferences. Our work opens the door to gradual versions of them. Previously, we discussed existing null-pointer analysis tools~\cite{FacebookInfer}, \cite{banerjee2019nullaway} and frameworks~\cite{papi2008practical}, and how GNPA is an improvement over them. Notably, our prototype is implemented in Infer's framework~\cite{FacebookInfer}.

The Granullar type system \cite{brotherston2017granullar} and the Blame for Null calculus\cite{blameForNull} are gradual type systems for nullness, and thus solve a related problem to GNPA.
The main difference in our work is that we use dataflow analysis instead of typing.
This results in a significantly different user experience, as a full static specification within a gradual type system typically requires many more types to be specified (e.g. on all local variables) compared to a dataflow analysis, where for example we do not require (or even allow) nullity annotations on local variables.
Basing our work on dataflow analysis also has a major impact on the technical development, requiring the novel lattice-based gradualization framework described in this paper rather than the well-known type-based gradualization approaches used in Granullar and Blame for Null.
Blame for Null also investigates the notion of blame, which we leave for future work in the program analysis setting.

Contract checking~\cite{eiffelBook, findlerFelleisen:icfp2002} can be used to check properties like nullness. Building on the idea of hybrid type checking~\cite{knowles2010hybrid}, Xu \etal~\cite{xu2012hybrid} explored how to check contracts using a hybrid of static and dynamic analysis.  Their work was specialized to the context of logical assertions, whereas we are in the area of lattice-based program analyses.  It is also unclear whether their approach conforms to the gradual guarantee.

O'Hearn \etal~\cite{o2019incorrectness} proposed Incorrectness Logic as a means of proving that a program has a bug, rather than proving it correct.  This is consistent with our goal of reducing false positives, but it stays in the realm of static reasoning, and therefore gives up soundness.  In contrast, we reduce false positives without giving up soundness by adding run-time checks.

\section{Conclusion}\label{sec:conclusion}
This paper is the first work on gradual program analysis. We introduced a framework which transforms abstract interpretation based static analyses relying on annotations into gradual ones. Gradual analyses handle missing annotations specially, allowing them to smoothly leverage both static and dynamic techniques. Static information is used where possible and dynamic information where necessary to reduce false positives while preserving soundness. Such analyses are also \emph{conservative extensions} of their underlying static analyses and adhere to \emph{gradual guarantees}, which state that losing precision is harmless. When presenting our framework, we developed a gradual null-pointer analysis, GNPA, with the previously mentioned properties that reduces false positives compared to some popular existing tools.

Importantly, the gradual framework can be applied as described to any abstract interpretation based static analysis under the following restrictions. The analysis should support annotations, have a finite-height semilattice, a monotonic, locally-sound flow function, a safety function, and operate on a first-order, procedural, imperative programming language. Additionally, checking membership in the semilattice should be decidable. Thus, initial followup work could include gradual taint analysis, to which our framework immediately applies. Finally, we do not support widening, but we do support context-sensitivity. In the future, we plan to explore extensions of our framework for infinite-height semilattices and widening; this would allow gradualization of other analyses, such as interval analysis. Still further work could include, for instance, pointer analyses, which do not have analogues in the field of gradual typing.

On the empirical side, there are further research questions to be answered: How often does a gradual analysis catch bugs statically versus how often does it catch them at run time? Is performance lost or gained when run time checks are inserted earlier via annotations rather than just-in-time? Finally, a gradual analysis will still report false positives anywhere its base static analysis is utilized and reports false positives. As a result, we plan to explore the aforementioned research questions, including the trade-off between gradual analyses reducing false positives and being conservative extensions of underlying static analyses.

%%
%% Bibliography
%%

%% Please use bibtex,

\bibliography{references}

\appendix
\section{Appendix}

\subsection{Proofs}

These proofs apply generally to any particular language/semilattice/analysis that fits within the bounds of our formal framework, of which the GNPA formalism detailed in the paper is just a particular example. We left out a few formal details in the main body of the paper, for presentation's sake; we now formalize those missing details, before proceeding to the proofs.

\begin{itemize}
\item Our case study language declares programs $p \in \sProg$ to satisfy the following well-formedness rules:
  \begin{enumerate}
    \item \textit{Unique entry point to the program:} There exists exactly one
          node $v_0 \in \sVert_p$ such that $\fInst_p(v_0) = (\cMain)$. This
          node has no predecessors and serves as the entry point to $p$.
    \item \textit{Every node belongs to exactly one procedure, or to \cMain:}
          Let $\fDescend : \sVert_p \to \powerset^{+}(\sVert_p)$ give the
          descendants of each node in the control flow graph. The set
          $\{\fDescend(v_0)\} \union \set{\fDescend(\fProc(m))}{m \in \sProc}$
          is a partition of $\sVert_p$.
    \item \textit{Always a path to return from a procedure:} For each $u \in
          \sVert_p$ there exists at least one node $\inst[v]{\stmtReturn{y\code{@}a}}
          \in \fDescend(u)$. If $v \in \fDescend(\stmtProc{m\code{@}a'}{y\code{@}b})$
          then each such $v$ must have $a = a'$.
    \item \textit{Call sites agree with procedure annotations:} For each
          $\inst{\stmtCall{x}{m\code{@}a}{y\code{@}b}}$, the annotations must match the
          procedure signature $\fProc(m) = \stmtProc{m\code{@}a}{y'\code{@}b}$.
    \item For every $\inst[u]{\iota} \in \sVert_p$:
          \begin{enumerate}
            \item \textit{Always a branch to follow:} If $\iota = \stmtBranch{y}$
                  then $u$ has exactly two successors $\inst{\stmtIf{y}}$ and
                  $\inst{\stmtElse{y}}$.
            \item \textit{No dead code after return:} If $\iota = \stmtReturn{y\code{@}a}$ then $u$ has no successors.
            \item \textit{Control flow is unique:} Otherwise $u$ has exactly one
                  successor that is not an if or else node.
          \end{enumerate}
  \end{enumerate}
\item The property our safety function must satisfy is that given a state $\xi =
  \stat{\fram{\rho}{\inst[v]{\iota}} \cdot E \cdot S}{\mu}$, if
\[
  \pDesc(\rho, \set{x \mapsto \fSafe\bb{\iota}(x)}{x \in \sVar})
\]
then $\xi \stepsto_p \xi'$ for some $\xi' \in \sState_p$. Also, these safe
values must come directly from the annotations.
\item For any $\Coll{a} \in \powerset^{+}(\sAbst)$ and $\Grad{b} \in
\Grad{\sAbst}$,
\begin{enumerate}
  \item $\Coll{a} \subseteq \gamma(\alpha(\Coll{a}))$ (``soundness''), and
  \item $\Coll{a} \subseteq \gamma(\Grad{b})$ implies $\alpha(\Coll{a})
    \preciser \Grad{b}$ (``optimality'').
\end{enumerate}
\item The associativity example in Section~\ref{sec:gradual-join} shows that in some cases we need to make $\Grad{\sAbst}$ a
\emph{strict} superset of $\{\aNa,~\aNu,~\aNN,~\qm\}$, in order for $\Gradjoin$ to be associative.
One approach could be to define $\Grad{\sAbst}$ to have an element for
\emph{every} subsemilattice of $\sAbst$; we will call this the ``full lifting''
of $\sAbst$. It can be shown that $\alpha$ always exists for the full lifting,
and that $\Gradjoin$ is always associative in the full lifting. Unfortunately,
even if the height of $\sAbst$ is finite, the height of the full lifting is not
necessarily finite; that is, if $\Grad{\sAbst}$ is the full lifting then there
can exist sequences $\Grad{a}_1, \Grad{a}_2, \ldots \in \Grad{\sAbst}$ such that
$\Grad{a}_k \Gradjoin \Grad{a}_{k + 1} = \Grad{a}_{k + 1}$ for all $k$.

To address this, we will treat the full lifting as a sort of ``universe,''
consider $\{\aNa,~\aNu,$ $\aNN,~\qm\}$ to be a generating set, and let $\Grad{\sAbst}$ be the subset
of the full lifting generated by $\{\aNa,~\aNu,~\aNN,~\qm\}$ under the operation $\Gradjoin$. We
show in subsection~\ref{subsec:semilattice-properties} that this is equivalent
to saying
\[
  \Grad{\sAbst} = \sAbst \union \{\qm\} \union \set{\qm[a]}{a \in \sAbst}
  \qq{where}
  \gamma(\qm[a]) = \set{b \in \sAbst}{a \under b}.
\]
We will call this the ``small lifting'' of $\sAbst$, and it is the lifting we
will use to construct gradual analyses. The abstraction function $\alpha$ always
exists on the small lifting $\Grad{\sAbst}$, and $(\Grad{\sAbst}, \Gradjoin)$ is
a finite-height semilattice; see subsection~\ref{subsec:semilattice-properties}.

\item We insist that it is always possible to annotate a program in a way
that does not restrict its semantics. That is, for any program $p \in \sProg$,
there must exist a program $p' \in \sProg'$ such that $p'$ is the same as $p$
except for replacing every instance of $\qm$ with $\top$ (a stronger condition
than $p' \preciser p$), and such that $\sState_{p'} = \sState_p$ and the
semantics of $p'$ are equal to the semantics of $p$.
\end{itemize}

\textbf{Proposition \ref{prop:static-progress}:}
\begin{proof}
  Let $\pi = \fKildall(\fFlow, \join, p)$. Then let
  $\fram{\rho}{\inst[v]{\iota}} = E_1$ and $\sigma = \pi(v)$. Let $x \in \sVar$
  such that $\rho(x) = d \in \sVal$. Because $\xi$ is valid, $\rho(x) \in
  \fConc(\sigma(x))$. Because $p$ is valid, $\sigma(x) \under
  \fSafe\bb{\iota}(x)$, so $\rho(x) \in \fConc(\fSafe\bb{\iota}(x))$. Finally,
  $x$ was arbitrary, so by the property of the safety function, $\xi \stepsto_p
  \xi'$ for some $\xi' \in \sState_p$.
\end{proof}

\begin{lemma}\label{lem:fixpoint}
  Let $(A, \join)$ be a semilattice (whose join function induces the partial
  order $\under$), let $\fFlow : \sInst \times \sMap_A \pto \sMap_A$ (where
  $\sMap_A = \sVar \pto A$) be monotonic in the second parameter, and let $p \in
  \sProg$. If $\pi = \fKildall(\fFlow, \join, p)$ and
  $\arcsto{p}{\inst[v_1]{\iota}}{v_2}$ then $\fFlow\bb{\iota}(\pi(v_1)) \under
  \pi(v_2)$.
\end{lemma}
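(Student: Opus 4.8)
The plan is to prove this as the classical correctness invariant of Kildall's worklist algorithm, via a loop invariant on the \textsc{while} loop of Algorithm~\ref{alg:kildall}. Write $V$ and $\pi$ for the current values of the worklist and the partial map at an arbitrary point during execution. I would maintain
\[
  \Phi:\quad\text{for every edge }\arcsto{p}{\inst[v]{\iota}}{u}\text{, either }v\in V\text{ or }\fFlow\bb{\iota}(\pi(v))\under\pi(u).
\]
Granting that $\fKildall$ terminates (as established in Section~\ref{sec:static} from monotonicity of $\fFlow$ and finite height of the semilattice), the loop exits with $V=\varnothing$, so the first disjunct of $\Phi$ fails for every edge, and $\Phi$ yields precisely $\fFlow\bb{\iota}(\pi(v_1))\under\pi(v_2)$ for the given edge $\arcsto{p}{\inst[v_1]{\iota}}{v_2}$. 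The argument is oblivious to how the element of $V$ is chosen on each iteration, which is why the order of processing is irrelevant.

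For initialization, $V=\sVert_p$ before the loop, so the first disjunct of $\Phi$ holds for every edge. For preservation I would first isolate two elementary facts about a single iteration: (a) every write to $\pi$ on line~\ref{line:join} replaces $\pi(u)$ by $\pi(u)\join\sigma'$, so each entry of $\pi$ is non-decreasing in $\under$ over time; and (b) whenever $\pi(u)$ genuinely changes, $u$ is re-inserted into $V$ on line~\ref{line:enqueue}. In particular, while a vertex $v$ is \emph{not} in $V$ its entry $\pi(v)$ is frozen, hence so is $\fFlow\bb{\iota}(\pi(v))$ since $\fFlow\bb{\iota}$ is a function.

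Now consider an iteration that removes $\inst[v]{\iota}$ from $V$, and fix an arbitrary edge $\arcsto{p}{\inst[v']{\iota'}}{u}$. If $v'\neq v$ and $v'\in V$ before the iteration, it is still in $V$ afterwards (only $v$ left), so the first disjunct persists. Otherwise $v'\notin V$ before the iteration (this covers $v'=v$, which leaves $V$, and $v'\neq v$ already out of $V$). If $v'=v$, the for-loop on lines~\ref{line:for}--\ref{line:for-end} forces $\sigma'=\fFlow\bb{\iota}(\pi(v))$ to satisfy $\sigma'\under\pi(u)$ for every successor $u$ of $v$ (exactly the negation of the test on line~\ref{line:if}, after the possible update), so the second disjunct holds, except when $v$ is its own successor, in which case $v$ re-enters $V$ and the first disjunct is restored. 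If $v'\neq v$ is not a successor of $v$, then $\pi(v')$ is untouched while $\pi(u)$ can only have grown by (a), so any previously-holding second disjunct survives by transitivity of $\under$. Finally, if $v'\neq v$ \emph{is} a successor of $v$: either $\pi(v')$ changes, so $v'$ re-enters $V$ by (b); or $\pi(v')$ is unchanged, and we are back in the preceding frozen-source/growing-target situation. This exhausts the cases, so $\Phi$ is preserved.

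The main obstacle I anticipate is purely the bookkeeping of this coupling: a second disjunct, once established, must survive all later mutations of both its endpoints. I handle growth of the target $\pi(u)$ with monotonicity of $\join$ plus transitivity of $\under$, and I handle any change to the source's $\pi(v)$ by the observation that such a change always re-inserts $v$ into $V$, flipping us back to the first disjunct. Worth flagging in the write-up: monotonicity of $\fFlow$ itself is never used in maintaining $\Phi$; it is needed only (with finite height) to guarantee termination, so that ``$V=\varnothing$ on exit'' is well-posed. The invariant maintenance relies only on $\fFlow\bb{\iota}$ being deterministic and on $\join$ being increasing.
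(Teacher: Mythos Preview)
Your proof is correct and takes essentially the same approach as the paper: both maintain the invariant that every edge either has its source in the worklist or already satisfies the flow inequality (the paper states it contrapositively and organizes preservation via an inner invariant on the \textbf{for} loop rather than your direct edge-by-edge case split). Your closing observation that monotonicity of $\fFlow$ is not used in maintaining the invariant is correct---the paper's proof likewise never invokes it, though it does not call this out.
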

\begin{proof}
  We proceed by showing that the following is a loop invariant for the
  \textbf{while} loop in lines \ref{line:while}--\ref{line:while-end} of
  Algorithm~\ref{alg:kildall}: if $\arcsto{p}{\inst[v_1]{\iota}}{v_2}$ and
  $\fFlow\bb{\iota}(\pi(v_1)) \not\under \pi(v_2)$, then $v_1 \in V$. On the
  first iteration, the invariant clearly holds because $V = \sVert_p$. Now,
  assume that the invariant holds at the beginning of an iteration. We show that
  the following is a loop invariant for the \textbf{for} loop in lines
  \ref{line:for}--\ref{line:for-end}: if $U$ is the set of all $u$ that we have
  not reached yet, then all violations of the outer invariant have $v_1 = v$ and
  $v_2 \in U$. This holds at the first iteration because the only thing we
  removed from $V$ was $v$, and $\pi$ is unchanged. Next assume that the inner
  invariant holds at the beginning of an iteration of the inner loop. The
  \textbf{if} statement in lines \ref{line:if}--\ref{line:if-end} runs iff $v,
  u$ violate the outer invariant. Because $\sigma' \under \pi(u) \join \sigma'$,
  no violation with $v_1 = v$ has $v_2 = u$ after line \ref{line:join}, although
  we may now have some violations with $v_1 = u$. But after line
  \ref{line:enqueue}, we no longer have any violations involving $u$, so all
  violations now have $v_2 \in U \setminus \{u\}$ and again $v_1 = v$. After
  this inner loop exits, we no longer have any violations of the outer invariant
  because $U = \varnothing$, so the outer invariant also holds. This completes
  the proof, because $V = \varnothing$ when the outer loop exits.
\end{proof}

\textbf{Proposition \ref{prop:static-preservation}:}
\begin{proof}
  Let $\pi = \fKildall(\fFlow, \join, p)$. Then let $\stat{S_1}{\mu_1} = \xi$
  and $\stat{S_2}{\mu_2} = \xi'$. If $S_2 = \fram{\varnothing}{v_2} \cdot S_1$
  then $\pi(v_2)$ describes $\varnothing$ vacuously. Otherwise, $S_1 = S' \cdot
  \fram{\rho_1}{v_1} \cdot S$ and $S_2 = \fram{\rho_2}{v_2} \cdot S$ where
  $\arcsto{p}{v_1}{v_2}$. Let $\sigma_1 = \pi(v_1)$ and $\sigma_2 = \pi(v_2)$.
  Because $\xi$ is valid, $\sigma_1$ describes $\rho_1$. By local soundness,
  $\sigma_2' = \fFlow\bb{\iota}(\sigma_1)$ describes $\rho_2$. Then $\sigma_2'
  \under \sigma_2$ by Lemma~\ref{lem:fixpoint} (with $A = \sAbst$), so
  $\sigma_2$ describes $\rho_2$. In each of these cases, the top stack frame of
  $S_2$ is valid. All other frames are the same as those of $S_1$, so $\xi'$ is
  valid.
\end{proof}

\begin{proposition}\label{prop:lifting-small}
  $\Grad{\sAbst}$ is the subset of the full lifting generated by $\sAnn$ via
  $\Gradjoin$.
\end{proposition}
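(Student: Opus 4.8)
The plan is to prove the two set inclusions that establish the claimed equality. Write $L$ for the subset of the full lifting generated by $\sAnn = \sAbst \union \{\qm\}$ under $\Gradjoin$, \ie the least subset of the full lifting that contains $\sAnn$ and is closed under the full lifting's join; the goal is to show $L = \Grad{\sAbst}$, where $\Grad{\sAbst} = \sAbst \union \{\qm\} \union \set{\qm[a]}{a \in \sAbst}$ with $\gamma(\qm[a]) = \set{b \in \sAbst}{a \under b}$. Throughout I will write $\uparrow\! a = \set{b \in \sAbst}{a \under b}$ for the principal up-set of $a$, so that $\gamma(\qm[a]) = \uparrow\! a$, $\gamma(\qm) = \sAbst$, and $\gamma(a) = \{a\}$ for $a \in \sAbst$.

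First I would record two facts about the full lifting. (i) Each element of $\Grad{\sAbst}$ really is an element of the full lifting: $\{a\}$ and $\sAbst$ are subsemilattices of $\sAbst$ trivially, and $\uparrow\! a$ is a subsemilattice because $a \under b_1$ and $a \under b_2$ imply $a \under b_1 \join b_2$. (ii) Since in the full lifting $\gamma$ is a bijection onto the subsemilattices of $\sAbst$, unwinding the definitions of $\alpha$ and $\gamma^{-1}$ shows that $\Grad a \Gradjoin \Grad b$ is exactly the element whose concretization is the smallest subsemilattice of $\sAbst$ containing the (nonempty) pointwise-join set $J(\Grad a, \Grad b) = \set{a' \join b'}{a' \in \gamma(\Grad a),\ b' \in \gamma(\Grad b)}$.

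For $L \subseteq \Grad{\sAbst}$: since $\sAnn \subseteq \Grad{\sAbst}$ by construction, it suffices to check that $\Grad{\sAbst}$ is closed under $\Gradjoin$. Using (ii), the core observation is that $J$ applied to any two elements of $\Grad{\sAbst}$ is again a principal up-set, a singleton, or all of $\sAbst$ — concretely $J(a, b) = \{a \join b\}$, $J(a, \qm) = \uparrow\! a$, $J(a, \qm[b]) = \uparrow\!(a \join b)$, $J(\qm, \qm) = \sAbst$, $J(\qm, \qm[b]) = \uparrow\! b$, and $J(\qm[a], \qm[b]) = \uparrow\!(a \join b)$, each of which is already a subsemilattice. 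Feeding these through (ii) yields $a \Gradjoin b = a \join b$, $a \Gradjoin \qm = \qm[a]$, $a \Gradjoin \qm[b] = \qm[a \join b]$, $\qm \Gradjoin \qm = \qm$, $\qm \Gradjoin \qm[b] = \qm[b]$, and $\qm[a] \Gradjoin \qm[b] = \qm[a \join b]$, all of which lie in $\Grad{\sAbst}$ — reading $\qm[\top]$ as $\top = \aNa$ whenever a join reaches $\top$, consistently with the identification made when $\Grad{\sAbst}$ was defined. Hence $L \subseteq \Grad{\sAbst}$.

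For $\Grad{\sAbst} \subseteq L$: each $a \in \sAbst$ and $\qm$ itself lie in $\sAnn \subseteq L$; and for $a \in \sAbst$ with $a \neq \top$ the identity above gives $\qm[a] = a \Gradjoin \qm \in L$, while $\qm[\top] = \top \in \sAbst \subseteq L$. Combining the two inclusions gives $L = \Grad{\sAbst}$. I expect the only genuine work to be the bookkeeping in the closure step: verifying the six up-set identities for $J$ in an arbitrary finite-height semilattice (not merely the three-element one of the case study), and checking that the degenerate cases where a join reaches $\top$ are absorbed uniformly by the $\qm[\top] = \top$ convention. Everything else is immediate from the Galois-connection description of $\alpha$ in fact (ii).
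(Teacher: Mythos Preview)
Your proposal is correct and follows essentially the same approach as the paper: both directions are handled by (a) noting $\qm[a] = a \Gradjoin \qm$ to get $\Grad{\sAbst} \subseteq L$, and (b) verifying closure of $\Grad{\sAbst}$ under $\Gradjoin$ via the same six-case analysis to get $L \subseteq \Grad{\sAbst}$. Your explicit up-set bookkeeping and the characterization in fact~(ii) make the justification of each case a bit more transparent than the paper's bare listing of results, but the argument is the same.
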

\begin{proof}
  Let $(\Grad{\sAbst}', \Gradjoin)$ be the full lifting of $\sAbst$ with the
  corresponding lifted join function, and let
  \[
    \Grad{\sAbst}
    = \sAbst \union \{\qm\} \union \set{\qm[a]}{a \in \sAbst}
    \subseteq \Grad{\sAbst}'
  \]
  be the small lifting. First note that $a \Gradjoin \qm = \qm[a]$ for all $a
  \in \sAbst$, so $\Grad{\sAbst}$ is a subset of the set generated by $\sAnn$
  via $\Gradjoin$. Then for $\Grad{a}, \Grad{b} \in \Grad{\sAbst}$,
  \[
    \Grad{a} \Gradjoin \Grad{b} = \begin{cases}
      a \join b & \text{if }\Grad{a} = a \in \sAbst\text{ and }\Grad{b} = b \in \sAbst \\
      \qm[a] & \text{if }\Grad{a} = a \in \sAbst\text{ and }\Grad{b} = \qm \\
      \qm[(a \join b)] & \text{if }\Grad{a} = a \in \sAbst\text{ and }\Grad{b} = \qm[b]\text{ for some }b \in \sAbst \\
      \qm & \text{if }\Grad{a} = \qm\text{ and }\Grad{b} =\qm \\
      \qm[b] & \text{if }\Grad{a} = \qm\text{ and }\Grad{b} = \qm[b]\text{ for some }b \in \sAbst \\
      \qm[(a \join b)] & \text{if }\Grad{a} = \qm[a]\text{ for some } \in \sAbst\text{ and }\Grad{b} = \qm[b]\text{ for some } \in \sAbst \\
      \Grad{b} \Gradjoin \Grad{a} & \text{otherwise}
    \end{cases}
  \]
  so $\set{\Grad{a} \Gradjoin \Grad{b}}{\Grad{a}, \Grad{b} \in \Grad{\sAbst}}
  \subseteq \Grad{\sAbst}$. Thus, $\Grad{\sAbst}$ is equal to the set generated
  by $\sAnn$ via $\Gradjoin$.
\end{proof}

\begin{proposition}
  $\Grad{\sAbst}$ has an abstraction function $\alpha$.
\end{proposition}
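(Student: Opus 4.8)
The plan is to show that the displayed formula for $\alpha(\Coll{a})$ always denotes a genuine element of $\Grad{\sAbst}$ --- this is exactly what it means for $\Grad{\sAbst}$ to \emph{have} an abstraction function, and once it holds the two Galois-connection laws (soundness and optimality) are immediate, since $\gamma(\alpha(\Coll{a}))$ is then the $\subseteq$-least member of $\mathrm{im}(\gamma)$ that contains $\Coll{a}$. By Proposition~\ref{prop:lifting-small} I would work with the small lifting $\Grad{\sAbst} = \sAbst \union \{\qm\} \union \set{\qm[a]}{a \in \sAbst}$, with $\gamma(a) = \{a\}$, $\gamma(\qm) = \sAbst$, and $\gamma(\qm[a]) = \set{b \in \sAbst}{a \under b}$, and I would use that $\gamma$ is injective, so $\gamma^{-1}$ is defined on $\mathrm{im}(\gamma)$. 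Fixing $\Coll{a} \in \powerset^{+}(\sAbst)$, the claim then reduces to two points: (i) the index family $\set{\Grad{b} \in \Grad{\sAbst}}{\gamma(\Grad{b}) \supseteq \Coll{a}}$ is nonempty, and (ii) the intersection $I$ of $\gamma(\Grad{b})$ over that family lies in $\mathrm{im}(\gamma)$.

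Point (i) is immediate, since $\gamma(\qm) = \sAbst \supseteq \Coll{a}$ always puts $\qm$ in the family. For (ii) I would case on $\abs{\Coll{a}}$. If $\Coll{a} = \{c\}$ is a singleton, then $c$ is itself in the family with $\gamma(c) = \{c\} = \Coll{a} \subseteq \gamma(\Grad{b})$ for every member, so $I = \{c\} = \gamma(c)$ and $\alpha(\Coll{a}) = c$. If $\abs{\Coll{a}} \geq 2$, no singleton $\gamma(d)$ can contain $\Coll{a}$, so the family is exactly $\{\qm\}$ together with those $\qm[a']$ for which $a'$ is a common lower bound of $\Coll{a}$; call that set of lower bounds $L$. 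Then $I = \Intersect_{a' \in L} \set{b \in \sAbst}{a' \under b}$ (intersected with $\sAbst = \gamma(\qm)$, which is harmless). If $L = \varnothing$ this is $\sAbst = \gamma(\qm)$, so $\alpha(\Coll{a}) = \qm$. If $L \neq \varnothing$, then since $(\sAbst,\join)$ has finite height every nonempty subset of $\sAbst$ has a join; set $a^\star = \bigjoin L$. A join of lower bounds of $\Coll{a}$ is again a lower bound of $\Coll{a}$, so $a^\star \in L$ is the greatest lower bound, and hence $I = \set{b \in \sAbst}{a^\star \under b} = \gamma(\qm[a^\star])$ with $\qm[a^\star] \in \Grad{\sAbst}$, giving $\alpha(\Coll{a}) = \qm[a^\star]$. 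In all three cases $\gamma^{-1}(I)$ is defined, so $\alpha$ exists; the three cases reproduce exactly the three clauses of $\gamma^{-1}$ from Section~\ref{sec:gradual-join}.

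The step I expect to carry the real content is the non-singleton case --- specifically, showing that the intersection of the up-sets $\set{b}{a' \under b}$ over all common lower bounds $a'$ collapses to a single up-set. This rests on two facts about the base semilattice: finite height forces arbitrary nonempty joins to exist (no infinite ascending chains), and $L$ is closed under those joins because a join of lower bounds is again a lower bound, so $L$ has a maximum $a^\star$. If either failed, $I$ could be an intersection of up-sets with no single generator, hence outside $\mathrm{im}(\gamma)$, and $\alpha$ would be undefined. Everything else --- nonemptiness of the family, the singleton case, and the appeal to injectivity of $\gamma$ to make sense of $\gamma^{-1}(I)$ --- is routine bookkeeping.
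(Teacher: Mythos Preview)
Your proposal is correct and follows essentially the same approach as the paper's proof: both split into the singleton case, the case where $\Coll{a}$ has no lower bound (giving $\alpha(\Coll{a})=\qm$), and the remaining case where the intersection of the family collapses to a single up-set $\gamma(\qm[a^\star])$ with $a^\star$ the greatest element of the set $L$ of lower bounds of $\Coll{a}$. The only cosmetic difference is that the paper builds $a^\star$ by an explicit ascending-chain construction inside $L$ (using finite height to terminate), whereas you package that same chain argument into the general lemma that a finite-height join-semilattice admits all nonempty joins and then observe $\bigjoin L\in L$ because $L$ is join-closed.
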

\begin{proof}
  Let $\Coll{a} \in \powerset^{+}(\sAbst)$, and let $A = \set{\Grad{b} \in
  \Grad{\sAbst} \setminus \{\qm\}}{\Coll{a} \subseteq \gamma(\Grad{b})}$. If any
  such $\gamma(\Grad{b})$ is a singleton then $\alpha(\Coll{a}) = \Grad{b}$ and
  we're done. If $A = \varnothing$ then $\alpha(\Coll{a}) = \qm$. Now without
  loss of generality, we assume that each of those $\Grad{b}$ elements is of the
  form $\qm[b]$ for some $b \in \sAbst$; that is, there exists an injective
  ``root'' map $r : A \to \sAbst$ given by $r(\qm[b]) = b$. Let $A_0 = r(A)$.

  Next we inductively define an ascending chain $b_k$ along with a sequence of
  sets $A_k$ for $k \in \Nat$; our base case is $A_0$. Choose $b_k \in A_k$ and let
  \[
    A_{k + 1} = \set{b \in A_k}{b \join b_k \neq b_k}.
  \]
  If $A_{k + 1} = \varnothing$ then we end the chain. Otherwise, choose $b_k'
  \in A_{k + 1}$ and let $b_{k + 1} = b_k \join b_k'$. By the construction of
  $A_{k + 1}$, we know that $b_{k + 1} \neq b_k$, so we have continued our
  ascending chain to be $b_0 \strictunder \cdots \strictunder b_k \strictunder
  b_{k + 1}$ because
  \[
    b_k \join b_{k + 1}
    = b_k \join (b_k \join b_k')
    = (b_k \join b_k) \join b_k'
    = b_k \join b_k'
    = b_{k + 1}.
  \]

  Let $h$ be the height of $\sAbst$, so we know that our chain has height $n
  \leq h$. By construction, for every $b \in A_0$ we have $b \join b_k = b_k$
  for some $0 \leq k \leq n$, which means that $\gamma(\qm[b]) \supseteq
  \gamma(\qm[b_k])$. Given that $\gamma(\qm[b_0]) \supseteq \cdots \supseteq
  \gamma(\qm[b_n])$, we see that $\gamma(\qm[b_n]) = \Intersect \gamma(A)$, so
  we can define $\alpha(\Coll{a}) = \qm[b_n]$.
\end{proof}

\textbf{Proposition \ref{prop:semilattice}:}
\begin{proof}
  We have already shown that $\Gradjoin$ is commutative and idempotent, so it
  only remains to show that $\Gradjoin$ is associative. But associativity
  follows immediately from the proof of Proposition~\ref{prop:lifting-small}.
\end{proof}

\textbf{Proposition \ref{prop:finite-height}:}
\begin{proof}
  In this proof, we write $\Grad{a} \under \Grad{b}$ to mean $\Grad{a} \Gradjoin
  \Grad{b} = \Grad{b}$ for $\Grad{a}, \Grad{b} \in \Grad{\sAbst}$, and also
  write $\Grad{a} \strictunder \Grad{b}$ to mean $\Grad{a} \under \Grad{b}$ and
  $\Grad{a} \neq \Grad{b}$. Note that these are not the same as the lifted
  relation $\Gradunder$, although $\Gradunder$ and this definition of $\under$
  both coincide when restricted to $\sAbst \times \sAbst$.

  By the definition of height, there exists a (not necessarily unique) longest
  ascending chain $a_0 \strictunder \cdots \strictunder a_n$ in $\sAbst$. Since
  $n > 0$ we know that $\gamma(\qm[a_{n - 1}])$ is not a singleton because $a_{n
  - 1}, a_n \in \gamma(\qm[a_{n - 1}])$. Thus, $\qm[a_{n - 1}] \neq a_{n - 1}$.
  We can then calculate
  \begin{gather*}
    a_{n - 1} \Gradjoin \qm[a_{n - 1}] = \qm[(a_{n - 1} \join a_{n - 1})] = \qm[a_{n - 1}], \\
    \qm[a_{n - 1}] \Gradjoin \qm[a_n] = \qm[(a_{n - 1} \join a_n)] = \qm[a_n],
  \end{gather*}
  so $a_{n - 1} \strictunder \qm[a_{n - 1}] \strictunder \qm[a_n]$ because $a_{n
  - 1} \neq a_n$ implies $\qm[a_{n - 1}] \neq \qm[a_n]$. This shows that the
  height of the small lifting is at least $n + 1$.

  Now assume that there exists an ascending chain $\Grad{a}_0 \strictunder
  \cdots \strictunder \Grad{a}_{n + 2}$ in $\Grad{\sAbst}$. Note that for $k >
  0$, if $\Grad{a}_k = \qm$ then $\Grad{a}_{k - 1} \Gradjoin \qm = \qm$, which
  implies $\Grad{a}_{k - 1} = \bot$, so $\Grad{a}_k = \qm[\bot]$. Thus for $k >
  0$ either $\Grad{a}_k = a_k$ or $\Grad{a}_k = \qm[a_k]$, allowing us to define
  a new chain $a_1 \under \cdots \under a_{n + 2}$. If $\Grad{a}_0 = \qm$ then
  we must have $\Grad{a}_1 = \qm[a_1] \neq a_1$, because $a_1, a_2 \in
  \gamma(\qm[a_1])$. In this case we can replace $\Grad{a}_0$ with $a_1$, so
  without loss of generality we can assume that no element of the chain is
  $\qm$. Next, if $\Grad{a}_k = \qm[a_k]$ for some $0 \leq k < n + 2$, we can
  use $\Grad{a}_k \under \Grad{a}_{k + 1}$ to see that $\Grad{a}_{k + 1} =
  \Grad{a}_k \Gradjoin \Grad{a}_{k + 1} = \qm[a_k] \Gradjoin \Grad{a}_{k + 1} =
  \qm[a_{k + 1}]$. By induction this means that if $\Grad{a}_k = a_k$ and
  $\Grad{a}_{k + 1} = \qm[a_{k + 1}]$ for some $k$, we must have $\Grad{a}_i =
  a_i$ for all $i \leq k$ and $\Grad{a}_j = \qm[a_j]$ for all $j > k$. In other
  words, we have a chain
  \[
    x_0
    \strictunder \cdots
    \strictunder x_k
    \under x_{k + 1}
    \strictunder \cdots
    \strictunder x_{n + 2}
  \]
  implying that $\sAbst$ is at least height $n + 1$, contrary to our earlier
  assumption. Thus the height of the small lifting is at most $n + 1$.
\end{proof}

\textbf{Proposition \ref{prop:conservative-static-extension}:}
\begin{proof}
  For any $a, b \in \sAbst$ we have $\gamma(a) = \{a\}$ and $\gamma(b) = \{b\}$,
  so $a \Gradjoin b = \alpha(\{a \join b\}) = a \join b$ because $\gamma(a \join
  b) = \{a \join b\}$. Thus $\Gradjoin$ is a conservative extension of $\join$.
  Similarly $\Grad{\fFlow}\bb{\iota}(\sigma) = \fFlow\bb{\iota}(\sigma)$ for
  $\iota \in \sInst'$ and $\sigma \in \sMap$, so $\Grad{\fFlow}$ is a
  conservative extension of $\fFlow$. Because $\pi = \fKildall(\fFlow, \join,
  p)$ is well-defined, it follows that $\fKildall(\Grad{\fFlow}, \Gradjoin, p) =
  \pi$.
\end{proof}

\textbf{Proposition \ref{prop:conservative-dynamic-extension}:}
\begin{proof}
  The predicate $\Gradunder$ is a conservative extension of $\under$, and the
  function $\Grad{\fSafe}$ is a conservative extension of $\fSafe$, so $p$ is
  statically valid according to the gradual analysis as well as valid according
  to the static analysis. If $\xi_1 \Gradstepsto_p \xi_2$ then trivially $\xi_1
  \stepsto_p \xi_2$ because $\xi_2 \neq \cError$. Conversely, assume that $\xi_1
  \stepsto_p \xi_2$. Let $\pi = \fKildall(\fFlow, \join, p)$. Since $p$ and
  $\xi_1$ are valid, if $\xi_1 = \stat{\fram{\rho}{\inst{\iota}} \cdot S}{\mu}$
  then $\pDesc(\rho, \set{x \mapsto \fSafe\bb{\iota}(x)}{x \in \sVar})$ by the
  same reasoning used in the proof of Proposition~\ref{prop:static-progress}.
  Then $\xi_1$ does not step to $\cError$ because $\Grad{\pDesc}$ and
  $\Grad{\fSafe}$ are conservative extensions of $\pDesc$ and $\fSafe$
  respectively. Thus, $\xi_1 \Gradstepsto_p \xi_2$.
\end{proof}

\begin{lemma}\label{lem:safe-monotonic}
  If $\iota_1, \iota_2 \in \sInst$ and $\iota_1 \preciser \iota_2$, then
  $\gamma(\Grad{\fSafe}\bb{\iota_1}(x)) \subseteq
  \gamma(\Grad{\fSafe}\bb{\iota_2}(x))$ for all $x \in \sVar$.
\end{lemma}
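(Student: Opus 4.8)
The plan is to do a case analysis on the syntactic form of $\iota_1$, which by the definition of $\iota_1 \preciser \iota_2$ has the same shape as $\iota_2$ up to annotations. Among the instruction forms in $\sInst$, only the procedure-call, procedure-entry, and return instructions carry annotations; for every other form, $\iota_1 \preciser \iota_2$ forces $\iota_1 = \iota_2$, so $\Grad{\fSafe}\bb{\iota_1}(x)$ and $\Grad{\fSafe}\bb{\iota_2}(x)$ are literally equal and the inclusion is trivial. Hence the real content is in the three annotated cases.

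Before handling them I would isolate one auxiliary fact: $\alpha$ is monotone with respect to subset inclusion, i.e. $\Coll{a} \subseteq \Coll{b}$ implies $\gamma(\alpha(\Coll{a})) \subseteq \gamma(\alpha(\Coll{b}))$. This follows directly from the soundness and optimality properties of the Galois connection stated in the appendix: soundness gives $\Coll{b} \subseteq \gamma(\alpha(\Coll{b}))$, hence $\Coll{a} \subseteq \gamma(\alpha(\Coll{b}))$, and then optimality gives $\alpha(\Coll{a}) \preciser \alpha(\Coll{b})$, which unfolds to the claimed inclusion. One should also note that every powerset argument fed to $\alpha$ below lies in $\powerset^{+}(\sAbst)$ — it is non-empty because $\gamma$ maps into $\powerset^{+}(\sAbst)$ and $\fSafe$ is total — so $\alpha$ is always legitimately applied.

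For the return case, write $\iota_1 = \stmtReturn{y\code{@}a_1}$ and $\iota_2 = \stmtReturn{y\code{@}a_2}$; by definition of $\preciser$ on $\Grad{\sAbst}$, $a_1 \preciser a_2$ means $\gamma(a_1) \subseteq \gamma(a_2)$. Hence the index set $\set{\fSafe\bb{\stmtReturn{y\code{@}a'}}(x)}{a' \in \gamma(a_1)}$ is contained in $\set{\fSafe\bb{\stmtReturn{y\code{@}a'}}(x)}{a' \in \gamma(a_2)}$, and applying $\alpha$-monotonicity to these two sets yields exactly $\gamma(\Grad{\fSafe}\bb{\iota_1}(x)) \subseteq \gamma(\Grad{\fSafe}\bb{\iota_2}(x))$. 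The procedure-call and procedure-entry cases are the same argument with a product index set: $\iota_1 \preciser \iota_2$ gives $\gamma(a_1) \subseteq \gamma(a_2)$ and $\gamma(b_1) \subseteq \gamma(b_2)$ for the two annotation positions, so $\set{(a',b')}{a' \in \gamma(a_1) \wedge b' \in \gamma(b_1)}$ enlarges to the corresponding set for $\iota_2$, the set over which $\alpha$ is taken only grows, and $\alpha$-monotonicity closes the case. I do not expect a genuine obstacle — the whole lemma is a ``larger index set, then apply Galois-connection monotonicity'' argument; the only point requiring care is that the needed monotonicity of $\alpha$ is not stated verbatim earlier, so I would prove that one-line fact explicitly rather than lean on it implicitly.
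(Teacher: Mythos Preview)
Your argument is correct, but it takes a somewhat different route from the paper's. The paper does not invoke $\alpha$-monotonicity or the Galois-connection axioms at all; instead it exploits the very restricted shape of $\sAnn$. Concretely, the paper observes that if $\Grad{\fSafe}\bb{\iota_1}(x) \neq \Grad{\fSafe}\bb{\iota_2}(x)$ then the value at $x$ is literally one of the annotation operands (this is the ``safe values must come directly from the annotations'' requirement stated in the appendix), and since $a_1 \preciser a_2$ with $a_1 \neq a_2$ in $\sAnn$ forces $a_2 = \qm$, one gets $\Grad{\fSafe}\bb{\iota_2}(x) = \qm$ and the inclusion is immediate because $\gamma(\qm) = \sAbst$.

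Your approach instead unfolds the definition of $\Grad{\fSafe}$ as $\alpha$ applied to a union over $\gamma(a)$, shows the index set grows when precision is lost, and then uses the monotonicity of $\gamma \circ \alpha$ derived from soundness and optimality. This is a cleaner, more general argument: it would go through unchanged for richer annotation lattices where ``strictly less precise'' does not force the target to be $\qm$, and it does not lean on the side condition that safe values are literally annotation operands. The paper's argument is shorter and avoids the auxiliary $\alpha$-monotonicity fact, but at the cost of being tied to the specific three-element $\sAnn$.
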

\begin{proof}
  Let $x \in \sVar$. If $\Grad{\fSafe}\bb{\iota_1}(x) =
  \Grad{\fSafe}\bb{\iota_2}(x)$ then the claim clearly holds. Otherwise, since
  $\iota_1$ and $\iota_2$ only differ in annotations, there must exist
  $\iota_1', \iota_2' \in \sInst'$ such that $\fSafe\bb{\iota_1'}(x) \neq
  \fSafe\bb{\iota_2'}(x)$. Therefore we know that $\Grad{\fSafe}\bb{\iota_1}(x)$
  and $\Grad{\fSafe}\bb{\iota_2}(x)$ come from corresponding operands of
  $\iota_1$ and $\iota_2$ respectively. Since $\iota_1 \preciser \iota_2$, that
  operand must be $\Grad{\fSafe}\bb{\iota_2}(x) = \qm$ for $\iota_2$ in order
  for the safety values to be different. Thus we have
  $\gamma(\Grad{\fSafe}\bb{\iota_1}(x)) \subseteq \gamma(\qm) =
  \gamma(\Grad{\fSafe}\bb{\iota_2}(x))$.
\end{proof}

\begin{lemma}\label{lem:safe-gradual}
  Let $p \in \sProg$ and $\xi = \stat{\fram{\rho}{\inst[v]{\iota}} \cdot E \cdot
  S}{\mu} \in \sState_p$. If $\Grad{\pDesc}(\rho, \set{x \mapsto
  \Grad{\fSafe}\bb{\iota}(x)}{x \in \sVar})$ then $\xi \stepsto_p \xi'$ for some
  $\xi' \in \sState_p$.
\end{lemma}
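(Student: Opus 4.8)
The plan is to reduce the claim to the local safety property of the static $\fSafe$ assumed in the bullets above, by replacing every $\qm$ annotation in $p$ with the top annotation $\top = \aNa$. First, unfolding $\Grad{\pDesc}$, the hypothesis yields some $\sigma \in \Sigma$ with $\pDesc(\rho, \sigma)$ and $\sigma(x) \in \gamma(\Grad{\fSafe}\bb{\iota}(x))$ for every $x \in \sVar$. Let $\iota' \in \sInst'$ be $\iota$ with each $\qm$ annotation replaced by $\top$, and let $p' \in \sProg'$ be the program obtained from $p$ by the same replacement. By the stated assumption that any program admits such a $\top$-annotation without restricting its semantics, $p'$ exists, $\sState_{p'} = \sState_p$, $\stepsto_{p'}$ coincides with $\stepsto_p$, and $\fInst_{p'}(v) = \iota'$; in particular $\xi \in \sState_{p'}$, and its topmost instruction in $p'$ is $\iota'$.

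The crux is to show $\pDesc(\rho, \set{x \mapsto \fSafe\bb{\iota'}(x)}{x \in \sVar})$. Fix $x \in \sVar$. From the definitions of $\fSafe$ and its consistent lifting $\Grad{\fSafe}$ one reads off two facts: (i) $\Grad{\fSafe}\bb{\iota}(x)$ always lies in $\sAbst \cup \{\qm\}$ (never $\qm[\aNN]$ or $\qm[\aNu]$), since for $x$ not in a relevant position $\Grad{\fSafe}$ returns $\alpha(\{\aNN\}) = \aNN$ or $\alpha(\{\top\}) = \top$, and for $x$ the argument (resp.\ returned) variable of a $\cCall$ (resp.\ $\cReturn$) it returns $\alpha(\gamma(b))$ for the relevant annotation $b$, which is $b$ when $b \in \sAbst$ and $\qm$ when $b = \qm$; and (ii) if $\Grad{\fSafe}\bb{\iota}(x) = c \in \sAbst$ then $\fSafe\bb{\iota'}(x) = c$, whereas if $\Grad{\fSafe}\bb{\iota}(x) = \qm$ then $\fSafe\bb{\iota'}(x) = \top$. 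Now split on $\Grad{\fSafe}\bb{\iota}(x)$: if it is $\qm$ then $\rho(x) \in \sVal = \fConc(\top) = \fConc(\fSafe\bb{\iota'}(x))$ trivially; otherwise it is some $c \in \sAbst$, so $\gamma(c) = \{c\}$ forces $\sigma(x) = c = \fSafe\bb{\iota'}(x)$, and $\rho(x) \in \fConc(\sigma(x)) = \fConc(\fSafe\bb{\iota'}(x))$ because $\pDesc(\rho, \sigma)$. Since $x$ was arbitrary, $\pDesc(\rho, \set{x \mapsto \fSafe\bb{\iota'}(x)}{x \in \sVar})$ holds.

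It then suffices to invoke the safety property of the static $\fSafe$ on $p' \in \sProg'$ at the state $\xi \in \sState_{p'}$, whose top instruction is $\iota'$: this yields $\xi \stepsto_{p'} \xi'$ for some $\xi' \in \sState_{p'}$, and since $\stepsto_{p'}$ coincides with $\stepsto_p$ and $\sState_{p'} = \sState_p$, this is exactly $\xi \stepsto_p \xi'$ with $\xi' \in \sState_p$, as required.

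I expect the middle step --- establishing facts (i) and (ii) --- to be the real obstacle, since it requires matching the definitions of $\alpha$, $\gamma$, and the consistent lifting $\Grad{\fSafe}$ case by case. The content behind it is that $\alpha \circ \gamma$ is the identity on $\sAbst$ while $\alpha$ sends $\gamma(\qm) = \sAbst$ to $\qm$, so that a missing annotation collapses to the vacuous safety requirement $\top$ and a concrete annotation collapses to exactly the requirement $\fSafe$ already imposes; the remaining bookkeeping --- only $\cCall$, $\cProc$, $\cReturn$ carry annotations, $\Grad{\fSafe}$ never emits $\qm[\aNN]$ or $\qm[\aNu]$, and the semantics and state sets of $p'$ and $p$ coincide --- is routine.
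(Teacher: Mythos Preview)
Your proof is correct and follows essentially the same approach as the paper's: both pass to the fully-annotated program $p'$ obtained by replacing $\qm$ with $\top$, unfold $\Grad{\pDesc}$ to get a witness $\sigma$, case-split on whether $\Grad{\fSafe}\bb{\iota}(x)$ lands in $\sAbst$ or not, and then invoke the static safety property of $\fSafe$ on $p'$. Your treatment is slightly more explicit---you spell out why $\Grad{\fSafe}$ can only output values in $\sAbst \cup \{\qm\}$ (your fact (i), via $\alpha \circ \gamma = \mathrm{id}$ on $\sAnn$), whereas the paper argues the non-$\sAbst$ case more abstractly by noting that if $\Grad{\fSafe}\bb{\iota}(x) \notin \sAbst$ then $\fSafe$ must be reading an annotation operand, which in $\iota$ is $\qm$ and hence in $\iota'$ is $\top$---but the substance is the same.
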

\begin{proof}
  We know there exists a program $p' \in \sProg'$ more precise than $p$ whose
  states and semantics are the same as those of $p$, so in particular $\xi \in
  \sState_{p'}$, but $\iota' = \fInst_{p'}(v)$ is not necessarily equal to
  $\iota$ since all instances of $\qm$ in $p$ have been replaced with $\top$ in
  $p'$. Next, by the definition of $\Grad{\pDesc}$ there exists some $\sigma \in
  \sMap$ such that $\pDesc(\rho, \sigma)$ and $\sigma(x) \in
  \gamma(\Grad{\fSafe}\bb{\iota}(x))$ for all $x \in \sVar$. Now let $x \in
  \sVar$. If $\Grad{\fSafe}\bb{\iota}(x) = a \in \sAbst$ then
  $\fSafe\bb{\iota'}(x) = \sigma(x)$, so $\rho(x) \in \fConc(a)$. Otherwise
  there exist $\iota_1, \iota_2 \in \sInst'$ such that $\fSafe\bb{\iota_1}(x)
  \neq \fSafe\bb{\iota_2}(x)$, so we know that $\fSafe\bb{\iota'}(x)$ is an
  operand of $\iota'$. But the corresponding operand of $\iota$ must be $\qm$
  since otherwise we would not have multiple values in
  $\gamma(\Grad{\fSafe}\bb{\iota}(x))$, so we have $\fSafe\bb{\iota'}(x) = \top$
  and trivially $\rho(x) \in \fConc(\top)$. Thus, $\pDesc(\rho, \set{x \mapsto
  \fSafe\bb{\iota'}(x)}{x \in \sVar}))$, so $\xi \stepsto_p \xi'$ for some $\xi'
  \in \sState_{p'} = \sState_p$, which means $\xi \stepsto_p \xi'$.
\end{proof}

\textbf{Proposition \ref{prop:gradual-progress}:}
\begin{proof}
  Let $\fram{\rho}{\inst{\iota}} = E_1$. If $\lnot\Grad{\pDesc}(\rho, \set{x
  \mapsto \Grad{\fSafe}\bb{\iota}(x)}{x \in \sVar})$ then $\xi \Gradstepsto_p
  \cError$. Otherwise, $\xi \stepsto_p \xi'$ for some $\xi' \in \sState_p
  \subset \Grad{\sState}_p$ by Lemma~\ref{lem:safe-gradual}, so $\xi
  \Gradstepsto_p \xi'$ because $\xi$ does not step to $\cError$.
\end{proof}

\begin{lemma}\label{lem:flow-gradual}
  Let $p \in \sProg$ and $\Grad{\sigma} \in \Grad{\sMap}$, and let $\xi =
  \stat{S' \cdot \fram{\rho}{\inst[v]{\iota}} \cdot S}{\mu}$ and $\xi' =
  \stat{\fram{\rho'}{v'} \cdot S}{\mu}$. If $\xi \stepsto_p \xi'$ and
  $\Grad{\pDesc}(\rho, \Grad{\sigma})$, then $\Grad{\pDesc}(\rho',
  \Grad{\fFlow}\bb{\iota}(\Grad{\sigma}))$.
\end{lemma}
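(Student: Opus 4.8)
The plan is to reduce this gradual local-soundness statement to the ordinary local soundness of $\fFlow$ (Section~\ref{sec:static}) via the Galois connection $(\alpha,\gamma)$, reusing the ``refinement to a fully-annotated program'' trick from the proof of Lemma~\ref{lem:safe-gradual}. Unfolding $\Grad{\pDesc}(\rho,\Grad{\sigma})$ yields some $\sigma \in \Sigma$ with $\pDesc(\rho,\sigma)$, where $\Sigma$ is the set of static states picking, for each $x$, a representative in $\gamma(\Grad{\sigma}(x))$. It then suffices to produce a single $\sigma' \in \sMap$ with (i) $\pDesc(\rho',\sigma')$ and (ii) $\sigma'(x) \in \gamma\bigl(\Grad{\fFlow}\bb{\iota}(\Grad{\sigma})(x)\bigr)$ for all $x$: by (ii), $\sigma'$ belongs to the $\Sigma$ associated with $\Grad{\fFlow}\bb{\iota}(\Grad{\sigma})$, and then (i) witnesses $\Grad{\pDesc}(\rho',\Grad{\fFlow}\bb{\iota}(\Grad{\sigma}))$.

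For the witness I would take $\sigma' = \fFlow\bb{\iota'}(\sigma)$, where $\iota'$ is $\iota$ with every $\qm$ replaced by $\top$. Invoking the assumed program $p' \in \sProg'$ that equals $p$ except for this $\qm\mapsto\top$ substitution, with $\sState_{p'}=\sState_p$ and identical semantics, we get $\iota' = \fInst_{p'}(v) \in \sInst'$ and the concrete step $\xi \stepsto_{p'} \xi'$. Applying local soundness of $\fFlow$ to this step in $p'$, together with $\pDesc(\rho,\sigma)$, gives (i). For (ii), recall that $\Grad{\fFlow}\bb{\iota}(\Grad{\sigma})(x)$ is $\alpha$ applied to the set of values $(\fFlow\bb{\iota''}(\sigma''))(x)$ as $\sigma''$ ranges over $\Sigma$ and --- in the procedure-call and procedure-entry cases, the only ones carrying $\qm$ --- $\iota''$ ranges over the concrete instantiations of $\iota$ with annotations $a' \in \gamma(a)$, $b' \in \gamma(b)$. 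Since $\top = \aNa \in \gamma(\qm)$ and $a \in \gamma(a)$ for a concrete annotation $a$, our $\iota'$ is one such instantiation and $\sigma \in \Sigma$; hence $\sigma'(x)$ lies in that set, and by the soundness property of $\alpha$ (namely $\Coll{a} \subseteq \gamma(\alpha(\Coll{a}))$) it lies in $\gamma\bigl(\Grad{\fFlow}\bb{\iota}(\Grad{\sigma})(x)\bigr)$, establishing (ii).

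I expect the main obstacle to be a matter of care rather than depth: verifying that the chosen $\iota'$ is genuinely among the instantiations quantified over in the definition of $\Grad{\fFlow}$ (which is where the call and procedure-entry cases must be treated separately from all the $\qm$-free ones), and recognizing that the use of local soundness above implicitly relies on $\xi$ being a well-formed PICL state --- so that condition~\ref{rule:call} holds --- which is exactly what makes the procedure-entry subcase of static local soundness go through even when the parameter annotation is not $\qm$. The remaining details (totality of the maps involved, and the trivial fact that $\iota' = \iota$ for every instruction that is neither a call nor a procedure entry) are routine.
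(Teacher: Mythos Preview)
Your proposal is correct and follows essentially the same route as the paper's proof: both pass to the fully-annotated program $p'$ obtained by replacing every $\qm$ with $\top$, extract a concrete $\sigma$ from $\Grad{\pDesc}(\rho,\Grad{\sigma})$, apply static local soundness in $p'$ to obtain $\pDesc(\rho',\fFlow\bb{\iota'}(\sigma))$, and then observe that $\fFlow\bb{\iota'}(\sigma)$ lies pointwise inside $\gamma\circ\Grad{\fFlow}\bb{\iota}(\Grad{\sigma})$ by the definition of $\Grad{\fFlow}$ together with the soundness of $\alpha$. Your write-up is in fact more explicit than the paper's about why $\iota'$ is an admissible instantiation in the call and procedure-entry cases and about the role of state well-formedness (condition~\ref{rule:call}) in the procedure-entry subcase, but these are elaborations of the same argument rather than a different one.
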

\begin{proof}
  We know there exists a program $p' \in \sProg'$ more precise than $p$ whose
  states and semantics are the same as those of $p$, so in particular $\xi, \xi'
  \in \sState_{p'}$, and $\xi \stepsto_{p'} \xi'$. However, $\iota' =
  \fInst_{p'}(v)$ is not necessarily equal to $\iota$ since all instances of
  $\qm$ in $p$ have been replaced with $\top$ in $p'$. Next, by the definition
  of $\Grad{\pDesc}$ there exists some $\sigma \in \sMap$ such that
  $\pDesc(\rho, \sigma)$ and $\sigma(x) \in \gamma(\Grad{\sigma}(x))$ for all $x
  \in \dom(\Grad{\sigma})$. By local soundness, $\pDesc(\rho',
  \fFlow\bb{\iota'}(\sigma))$. But by the definition of $\Grad{\fFlow}$ we know
  $(\fFlow\bb{\iota'}(\sigma))(x) \in
  \gamma((\Grad{\fFlow}\bb{\iota}(\Grad{\sigma}))(x))$ for all $x \in
  \dom(\fFlow\bb{\iota'}(\sigma))$, so $\Grad{\pDesc}(\rho',
  \Grad{\fFlow}\bb{\iota}(\Grad{\sigma}))$.
\end{proof}

\textbf{Proposition \ref{prop:gradual-preservation}:}
\begin{proof}
  Let $\Grad{\pi} = \fKildall(\Grad{\fFlow}, \Gradjoin, p)$. Then let
  $\stat{S_1}{\mu_1} = \xi$ and $\stat{S_2}{\mu_2} = \xi'$. If $S_2 =
  \fram{\varnothing}{v_2} \cdot S_1$ then $\Grad{\pi}(v_2)$ describes
  $\varnothing$ vacuously. Otherwise, $S_1 = S' \cdot \fram{\rho_1}{v_1} \cdot
  S$ and $S_2 = \fram{\rho_2}{v_2} \cdot S$ where $\arcsto{p}{v_1}{v_2}$. Let
  $\Grad{\sigma}_1 = \Grad{\pi}(v_1)$ and $\Grad{\sigma}_2 = \Grad{\pi}(v_2)$.
  Because $\xi$ is valid, $\Grad{\sigma}_1$ describes $\rho_1$. By
  Lemma~\ref{lem:flow-gradual}, $\Grad{\sigma}_2' =
  \Grad{\fFlow}\bb{\iota}(\Grad{\sigma}_1)$ describes $\rho_2$. Then
  $\Grad{\sigma}_2' \Gradjoin \Grad{\sigma}_2 = \Grad{\sigma}_2$ by
  Lemma~\ref{lem:fixpoint} (with $A = \Grad{\sAbst}$, $\join = \Gradjoin$, and
  $\fFlow = \Grad{\fFlow}$), so $\Grad{\sigma}_2$ describes $\rho_2$. In each of
  these cases, the top stack frame of $S_2$ is valid. All other frames are the
  same as those of $S_1$, so $\xi'$ is valid.
\end{proof}

\begin{lemma}\label{lem:flow-monotonic}
  Let $\iota_1, \iota_2 \in \sInst$ such that $\iota_1 \preciser \iota_2$.\\
  Then $\gamma((\Grad{\fFlow}\bb{\iota_1}(\Grad{\sigma}))(x)) \subseteq
  \gamma((\Grad{\fFlow}\bb{\iota_2}(\Grad{\sigma}))(x))$ for all $\Grad{\sigma}
  \in \Grad{\sMap}$ and $x \in \sVar$.
\end{lemma}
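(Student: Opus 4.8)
The plan is to mirror the proof of Lemma~\ref{lem:safe-monotonic}, exploiting the fact that among the instructions in the domain of $\fFlow$, only the procedure-call instruction $\stmtCall{z}{m\code{@}a}{y\code{@}b}$ and the procedure-entry instruction $\stmtProc{m\code{@}a}{y\code{@}b}$ carry annotations. First I would fix $\Grad{\sigma} \in \Grad{\sMap}$ and $x \in \sVar$, and split on whether $\iota_1$ (equivalently $\iota_2$, since they agree up to annotations) is one of these two instructions.

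In the case where neither $\iota_1$ nor $\iota_2$ is a call or proc instruction: since $\iota_1 \preciser \iota_2$ forces the two instructions to be identical except on annotations, and such instructions have no annotations, we get $\iota_1 = \iota_2$, hence $\Grad{\fFlow}\bb{\iota_1} = \Grad{\fFlow}\bb{\iota_2}$ and the claimed inclusion is an equality. It does not matter here that the $\land$ and $\lor$ rules are casewise-defined, since those cases depend only on the abstract state and not on any annotations.

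In the case where both are call instructions, write $\iota_1 = \stmtCall{z}{m\code{@}a_1}{y\code{@}b_1}$ and $\iota_2 = \stmtCall{z}{m\code{@}a_2}{y\code{@}b_2}$; by the definition of $\preciser$ on instructions we have $\gamma(a_1) \subseteq \gamma(a_2)$ and $\gamma(b_1) \subseteq \gamma(b_2)$. Since the set $\Sigma$ appearing in the definition of $\Grad{\fFlow}$ depends only on $\Grad{\sigma}$, it is the same in both computations. Hence, writing $C_i = \set{(\fFlow\bb{\stmtCall{z}{m\code{@}a'}{y\code{@}b'}}(\sigma'))(x)}{a' \in \gamma(a_i),~ b' \in \gamma(b_i),~ \sigma' \in \Sigma}$, we have $C_1 \subseteq C_2$ and $(\Grad{\fFlow}\bb{\iota_i}(\Grad{\sigma}))(x) = \alpha(C_i)$. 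Then I would invoke the Galois-connection properties of $(\alpha, \gamma)$ recorded in the appendix: by soundness $C_1 \subseteq C_2 \subseteq \gamma(\alpha(C_2))$, so by optimality $\alpha(C_1) \preciser \alpha(C_2)$, i.e. $\gamma(\alpha(C_1)) \subseteq \gamma(\alpha(C_2))$, which is exactly the desired inclusion. The proc-instruction case is identical, reading $\fFlow\bb{\stmtProc{m\code{@}a'}{y\code{@}b'}}$ in place of $\fFlow\bb{\stmtCall{z}{m\code{@}a'}{y\code{@}b'}}$.

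The argument is essentially bookkeeping; the only point needing care is confirming that the $\Sigma$ factor (and the casewise structure of the $\land$/$\lor$ rules) plays no role, so that monotonicity reduces cleanly to monotonicity of $\gamma \circ \alpha$, which is immediate from the soundness and optimality of $\alpha$. I therefore expect no genuine obstacle beyond making the case split on instruction shapes explicit.
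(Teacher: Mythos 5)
Your proposal is correct and matches the paper's own argument, which is just a terse version of the same thing: with the sets inside the $\alpha(\cdot)$ in the definition of $\Grad{\fFlow}$, precision of annotations gives an inclusion of those sets, and the soundness/optimality properties of $\alpha$ (i.e. monotonicity of $\gamma \circ \alpha$) then yield the stated inclusion of concretizations. Your explicit case split on instruction shape and the observation that $\Sigma$ depends only on $\Grad{\sigma}$ are exactly the bookkeeping the paper leaves implicit.
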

\begin{proof}
  Using the notation from the definition of $\Grad{\fFlow}$, we have $I_1
  \subseteq I_2$, so the lemma holds by the properties of $\alpha$.
\end{proof}

\begin{lemma}\label{lem:fixpoint-gradual}
  Let $p_1, p_2 \in \sProg$ such that $p_1 \preciser p_2$. Let $\pi_1 =
  \fKildall(\Grad{\fFlow}, \Gradjoin, p_1)$ and $\pi_2 =
  \fKildall(\Grad{\fFlow}, \Gradjoin, p_2)$. Let $v \in \sVert_{p_1} =
  \sVert_{p_2}$. Let $\sigma_1 = \pi_1(v)$ and $\sigma_2 = \pi_2(v)$. Then
  $\gamma(\sigma_1(x)) \subseteq \gamma(\sigma_2(x))$ for all $x \in
  \dom(\sigma_1)$.
\end{lemma}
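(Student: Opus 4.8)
The plan is to reduce the claim to a joint monotonicity-in-precision statement about a single step of the analysis functional and then propagate it through the whole fixpoint computation. Since the paper notes that the order in which $\fKildall$ processes instructions is irrelevant, I would view $\pi_i = \fKildall(\Grad{\fFlow}, \Gradjoin, p_i)$ as the Kleene iterate $\bigjoin_{n \ge 0} F_{p_i}^n(\bot)$, where $\bot$ maps every vertex to the empty abstract state and $F_{p_i}(\pi)(v) = \bigjoin_{\arcsto{p_i}{u}{v}} \Grad{\fFlow}\bb{\fInst_{p_i}(u)}(\pi(u))$ (an empty join being the empty abstract state, which is the bottom of $\Grad{\sMap}$). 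By Proposition~\ref{prop:finite-height} the pointwise lifting of $(\Grad{\sAbst},\Gradjoin)$ has finite height, so these chains stabilize after finitely many steps. Note that $p_1 \preciser p_2$ gives $\sArc_{p_1} = \sArc_{p_2}$ and $\fInst_{p_1}(u) \preciser \fInst_{p_2}(u)$ for all $u$, and a routine induction shows $\dom(F_{p_1}^n(\bot)(v)) = \dom(F_{p_2}^n(\bot)(v))$ for all $n,v$, since the two flow functions differ only in their treatment of annotations and not in which variables they define.

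Write $\Grad{\sigma} \preciser \Grad{\sigma}'$ to mean $\gamma(\Grad{\sigma}(x)) \subseteq \gamma(\Grad{\sigma}'(x))$ for every $x$ in their common domain. The key ingredient is that $\Grad{\fFlow}$ and $\Gradjoin$ are both monotone for $\preciser$: if $\iota_1 \preciser \iota_2$ and $\Grad{\sigma}_1 \preciser \Grad{\sigma}_2$ then $\Grad{\fFlow}\bb{\iota_1}(\Grad{\sigma}_1) \preciser \Grad{\fFlow}\bb{\iota_2}(\Grad{\sigma}_2)$, and $\Gradjoin$ is $\preciser$-monotone in each argument. This is the same observation behind the one-line proof of Lemma~\ref{lem:flow-monotonic}: in the defining expressions for $\Grad{\fFlow}$ and $\Gradjoin$, passing to more precise inputs only shrinks the set over which $\alpha$ is applied, and $\alpha$ is monotone with respect to $\subseteq$ and $\preciser$ by its soundness ($\Coll{a} \subseteq \gamma(\alpha(\Coll{a}))$) and optimality ($\Coll{a} \subseteq \gamma(\Grad{b})$ implies $\alpha(\Coll{a}) \preciser \Grad{b}$) properties recorded in the appendix.

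I would then prove by induction on $n$ that $F_{p_1}^n(\bot)(v) \preciser F_{p_2}^n(\bot)(v)$ for every $v$. The base case is trivial since both sides are the empty abstract state. For the step, the value at $v$ on each side is the $\Gradjoin$ over the common set of predecessors $u$ of $\Grad{\fFlow}\bb{\fInst_{p_i}(u)}(F_{p_i}^n(\bot)(u))$; the induction hypothesis gives $F_{p_1}^n(\bot)(u) \preciser F_{p_2}^n(\bot)(u)$, $p_1 \preciser p_2$ gives $\fInst_{p_1}(u) \preciser \fInst_{p_2}(u)$, and $\preciser$-monotonicity of $\Grad{\fFlow}$ and $\Gradjoin$ then yields $F_{p_1}^{n+1}(\bot)(v) \preciser F_{p_2}^{n+1}(\bot)(v)$. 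Taking $N$ large enough that both iterations have stabilized, $\sigma_1 = \pi_1(v) = F_{p_1}^N(\bot)(v) \preciser F_{p_2}^N(\bot)(v) = \pi_2(v) = \sigma_2$, which is precisely the stated claim that $\gamma(\sigma_1(x)) \subseteq \gamma(\sigma_2(x))$ for all $x \in \dom(\sigma_1)$.

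The main obstacle, and the reason this does not follow from Lemma~\ref{lem:fixpoint} directly, is that the order driving the fixpoint iteration is the semilattice order of $\Gradjoin$ (the Hasse diagram of Figure~\ref{fig:semilattice-lifted}), whereas the conclusion is stated in the precision order $\preciser$, and these two orders are genuinely different: for instance $\qm \Gradjoin \qm[\aNN] = \qm[\aNN]$, yet $\gamma(\qm) \not\subseteq \gamma(\qm[\aNN])$. The point that makes the argument go through is that $\preciser$, although not the order we iterate in, is nonetheless preserved by each building block of a single iteration step — precisely because $\Grad{\fFlow}$ and $\Gradjoin$ are defined through $\alpha$ and $\alpha$ is precision-monotone — so $\preciser$ is an invariant of the Kleene iteration and hence survives to the fixpoint. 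The remaining care is the bookkeeping about matching domains, which is immediate from $p_1$ and $p_2$ sharing a CFG.
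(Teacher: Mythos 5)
Your core induction is sound and captures the same key insight as the paper's proof: precision (pointwise $\gamma$-inclusion, $\preciser$) is preserved by each building block of an iteration step because $\Grad{\fFlow}$ and $\Gradjoin$ are defined through $\alpha$, and $\alpha$ is precision-monotone by its soundness and optimality properties. The paper establishes exactly this invariant, but does so directly on Algorithm~\ref{alg:kildall}, running the two worklist computations for $p_1$ and $p_2$ in lockstep with the same choices and showing the precision relation is a loop invariant (via Lemma~\ref{lem:flow-monotonic} and the properties of $\Gradjoin$), rather than over Kleene iterates. Your explicit joint monotonicity of $\Grad{\fFlow}$ in both the instruction and the abstract state is genuinely needed and is only implicit in the paper, so that part is a point in your favor.

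The gap is the step where you replace $\fKildall(\Grad{\fFlow}, \Gradjoin, p_i)$ by the Kleene limit $\bigjoin_{n \ge 0} F_{p_i}^n(\bot)$. The paper's remark that processing order is irrelevant is made for the static analysis, where $\fFlow$ is monotone with respect to $\under$; nothing in the paper establishes that $\Grad{\fFlow}$ is monotone with respect to the order induced by $\Gradjoin$ (the order of Figure~\ref{fig:semilattice-lifted}), which, as you yourself observe, is a different order from $\preciser$. Without that monotonicity, the iterates $F^n(\bot)$ need not form a stabilizing chain whose limit is a fixpoint, the worklist output need not be schedule-independent, and---most importantly for your argument---the output of Algorithm~\ref{alg:kildall} need not coincide with the least fixpoint of $F$: the classical equivalence between the worklist algorithm and the Kleene iteration uses monotonicity of the transfer functions in the iteration order to show that every value ever stored stays below the least fixpoint. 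As written, your induction therefore bounds the two Kleene limits, not the two quantities $\pi_1(v)$ and $\pi_2(v)$ that the lemma is actually about. To close the gap you would either have to prove $\Gradjoin$-order monotonicity of $\Grad{\fFlow}$ and then prove the Kildall-equals-Kleene identification in the gradual setting, or run your precision invariant directly on the worklist algorithm with a common schedule for both programs---which is what the paper does, and which requires no claim about what the algorithm computes.
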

\begin{proof}
  We proceed by running Algorithm~\ref{alg:kildall} in parallel for $p_1$ and
  $p_2$ and showing that the lemma statement is a loop invariant for the
  \textbf{while} loop in lines \ref{line:while}--\ref{line:while-end}. On the
  first iteration, the invariant clearly holds because $\dom(\Grad{\sigma}_1) =
  \varnothing$. Now, assume that the invariant holds at the beginning of an
  iteration. Without loss of generality we can assume $v$ to be chosen to be the
  same for both sides, because if $v_1 \notin V_2$ or $v_2 \notin V_1$ then the
  \textbf{if} statement on line \ref{line:if} will never run for the first or
  second side, respectively. After line \ref{line:sigma} we have
  $\gamma(\sigma_1(x)) \subseteq \gamma(\sigma_2(x))$ for all $x$ by assumption.
  Then after line \ref{line:flow} we have $\gamma(\sigma'_1(x)) \subseteq
  \gamma(\sigma'_2(x))$ for all $x$ by Lemma~\ref{lem:flow-monotonic}. The in
  the inner \textbf{for} loop, we enter the \textbf{if} statement in line
  \ref{line:if} exactly when the assignment statement on line \ref{line:join}
  would have an effect. By the properties of $\Gradjoin$, the invariant still
  holds for $\pi_1(u)$ and $\pi_2(u)$ after line \ref{line:join}. This accounts
  for all the elements of $\pi_1$ and $\pi_2$ that we change. We have thus
  completed the proof.
\end{proof}

\textbf{Proposition \ref{prop:static-gradual-guarantee}:}
\begin{proof}~\\
  Let $\Grad{\pi}_1 = \fKildall(\Grad{\fFlow}, \Gradjoin, p_1)$ and
  $\Grad{\pi}_2 = \fKildall(\Grad{\fFlow}, \Gradjoin, p_2)$. Let $v \in
  \sVert_{p_1} = \sVert_{p_2}$ and $x \in \sVar$, let $\iota_1 =
  \fInst_{p_1}(v)$ and $\iota_2 = \fInst_{p_2}(v)$, and let $\Grad{\sigma}_1 =
  \Grad{\pi}_1(v)$ and $\Grad{\sigma}_2 = \Grad{\pi}_2(v)$. By
  Lemma~\ref{lem:fixpoint-gradual} we know $\gamma(\Grad{\sigma}_1(x)) \subseteq
  \gamma(\Grad{\sigma}_2(x))$. Also, by Lemma~\ref{lem:safe-monotonic} we know
  $\gamma(\Grad{\fSafe}\bb{\iota_1}(x)) \subseteq
  \gamma(\Grad{\fSafe}\bb{\iota_2}(x))$. Then by the definition of $\Gradunder$,
  if $\Grad{\sigma}_1(x) \Gradunder \Grad{\fSafe}\bb{\iota_1}(x)$ then
  $\Grad{\sigma}_2(x) \Gradunder \Grad{\fSafe}\bb{\iota_2}(x)$.
\end{proof}

\textbf{Proposition \ref{prop:dynamic-gradual-guarantee}:}
\begin{proof}
  Because $\xi_2 \neq \cError$, we know that $\xi_1 \stepsto_{p_1} \xi_2$. This
  means that $\xi_1 \stepsto_{p_2} \xi_2$ because $\xi_2$ is the same as $\xi_1$
  except with possibly some annotations removed. Thus, it only remains to show
  that $\xi_1$ does not step to $\cError$ under $\Gradstepsto_{p_2}$. Assume
  that $\xi = \stat{\fram{\rho}{v} \cdot S}{\mu}$ where $\fInst_{p_1}(v) =
  \iota_1$ and $\fInst_{p_2}(v) = \iota_2$. Because $\xi_1$ does not step to
  $\cError$, we know that $\Grad{\pDesc}(\rho, \set{x \mapsto
  \Grad{\fSafe}\bb{\iota_1}(x)}{x \in \sVar})$. This means that there exists
  some $\sigma \in \sMap$ such that $\pDesc(\rho, \sigma)$ and $\sigma(x) \in
  \gamma(\Grad{\fSafe}\bb{\iota_1}(x))$ for all $x \in \sVar$. By
  Lemma~\ref{lem:safe-monotonic} we know that $\sigma(x) \in
  \gamma(\Grad{\fSafe}\bb{\iota_1}(x))$ for all $x \in \sVar$. This completes
  the proof, because by the definition of $\pDesc$ we now know that
  $\Grad{\pDesc}(\rho, \set{x \mapsto \Grad{\fSafe}\bb{\iota_2}(x)}{x \in
  \sVar})$, so $\xi_1$ does not step to $\cError$ under $\Gradstepsto_{p_2}$, so
  $\xi_1 \Gradstepsto_{p_2} \xi_2$.
\end{proof}

\textbf{Proposition \ref{prop:runtime-checks}:}
\begin{proof}
  We know that $\lnot\Grad{\pDesc}(\rho, \set{x \mapsto
  \Grad{\fSafe}\bb{\iota}(x)}{x \in \sVar})$. By the definitions of
  $\Grad{\pDesc}$ and $\pDesc$, there is some $x \in \sVar$ and $b \in
  \gamma(\Grad{\fSafe}\bb{\iota}(x))$ such that $\rho(x) \notin \fConc(b)$. But
  since $\xi$ is valid, there exists some $a \in \gamma((\Grad{\pi}(v))(x))$
  such that $\rho(x) \in \fConc(a)$. Thus, $\fConc(a) \nsubseteq \fConc(b)$, so
  $a \not\under b \under \bigjoin \gamma(\Grad{\fSafe}\bb{\iota}(x))$.
\end{proof}

\end{document}